\pdfoutput=1
\documentclass[11pt,a4paper,oneside]{report}

\usepackage[utf8]{inputenc}
\usepackage[T1]{fontenc}
\usepackage{lmodern}
\usepackage[english]{babel}
\usepackage[margin=2.5cm]{geometry}
\usepackage{microtype}

\usepackage{amsmath,amssymb,amsthm,mathtools}
\usepackage{bbm}
\usepackage{etoolbox}
\usepackage{algorithm}
\usepackage{algpseudocode}
\usepackage{graphicx}
\usepackage{float}
\usepackage{placeins}
\usepackage{caption}
\usepackage{subcaption}
\usepackage{booktabs,longtable,multirow,makecell,array,tabularx}
\usepackage{enumitem}
\usepackage[dvipsnames,table]{xcolor}
\usepackage{tikz}
\usetikzlibrary{chains,shapes.geometric,shapes.symbols,arrows,arrows.meta,
  decorations.pathreplacing,trees,positioning,calc,fit,backgrounds}
\usepackage[numbers,sort&compress]{natbib}
\usepackage[hidelinks]{hyperref}
\hypersetup{pdftitle={Evaluating and Improving the Robustness of Large Language Models to Input Sequence Variations},
  pdfauthor={Narek Maloyan}}

\newcommand{\E}{\mathbb{E}}
\newcommand{\Pbb}{\mathbb{P}}
\newcommand{\I}{\mathbb{I}}
\newcommand{\B}{\mathcal{B}}

\newcommand{\fparam}{f_\theta}
\newcommand{\Xspace}{\mathcal{X}}          
\newcommand{\Yspace}{\mathcal{Y}}          
\newcommand{\Vspace}{\mathcal{V}}          
\newcommand{\ProbSpace}[1]{\mathcal{P}(#1)} 
\newcommand{\Bepsilon}[1]{\mathcal{B}_\epsilon(#1)} 
\newcommand{\Exp}{\mathbb{E}}              
\newcommand{\DataDist}{\mathcal{D}}        
\newcommand{\Vulfunc}{V}                   
\newcommand{\Params}{\theta}               
\newcommand{\LossFunc}{\mathcal{L}}        
\newcommand{\badY}{\mathcal{Y}_{\mathrm{bad}}}
\newcommand{\JS}{D_{\mathrm{JS}}}

\theoremstyle{definition}
\newtheorem{theorem}{Theorem}[chapter]
\newtheorem{lemma}[theorem]{Lemma}
\newtheorem{proposition}[theorem]{Proposition}
\newtheorem{corollary}[theorem]{Corollary}
\newtheorem{definition}[theorem]{Definition}

\newtheorem{remark}[theorem]{Remark}

\newcounter{principle}[chapter]
\newenvironment{principle}[1][]{%
  \refstepcounter{principle}%
  \par\noindent\textbf{Principle \thechapter.\theprinciple.}%
  \ifstrempty{#1}{}{ \textbf{#1}}%
  \par\noindent%
}{\par}

\title{Evaluating and Improving the Robustness\\
of Large Language Models to Input Sequence Variations\\[16pt]
\large PhD Thesis}
\author{Narek Maloyan\\[4pt]
  \\[4pt]
  \small Advisor: Dmitry Namiot}
\date{2026}

\begin{document}

\maketitle

\begin{abstract}
Large language models (LLMs) embedded in production systems face prompt injections, trojans, and manipulation of automatic quality metrics. This thesis develops a set of models, methods, and algorithms for evaluating and improving the robustness of LLMs to adversarial input sequence variations: prompt injection, trojans (backdoor attacks), and attacks on agentic LLM systems. We propose $R_{stab}(f)$, a formal metric of generative robustness based on bounded divergence measures, in particular the Jensen--Shannon divergence, between per-step output distributions under small input perturbations. For localized attacks, we prove that the vulnerability of a deterministic decision operator $h$ is bounded by its non-robustness: $V(h) \le 1 - R_{class}(h)$, where $R_{class}(h)$ is the expected probability that $h$ preserves its decision under small input perturbations. For non-localized attacks, we propose an empirical calibrated model. For LLM-as-a-Judge systems, we separate instructive and content attacks and develop ASA, an adaptive evolutionary attack that combines token-level mutations with semantic paraphrasing. Under black-box access, ASA reaches an attack success rate (ASR) of up to 73.8\%, and the transfer success rate between open models reaches 62.6\%. On Trojan Detection Challenge 2023 data for Pythia-1.4B, we quantify an asymmetry of trojan detection: surrogate triggers reach a reverse-engineered attack success rate (REASR) of $\approx 0.99$, while the recall of the true triggers is $\approx 0.17$ against a baseline of $\approx 0.14$. On SaTML CTF 2024, we systematize four classes of bypasses of multi-layer defenses. Committees of 5--7 heterogeneous models reduce the ASR for Gemma-3-4B by 47--55 percentage points, from 73.8\% to 19.3\% with 7 models. The Condorcet jury theorem justifies the growth of committee accuracy when the members err independently and each has an accuracy above 0.5. For Gemma-3-4B, which does not meet this condition, the gain comes from the heterogeneity of the committee. Multi-layer defense reduces the ASR from 90\% to 15--25\%. For agentic systems based on the Model Context Protocol (MCP), we propose AttestMCP, which attests tool calls with HMAC-protected packets at an overhead below 0.1~ms per call, and the Commit Boundary isolation pattern. On the MCPBench benchmark of 847 scenarios, our defenses reduce the average ASR from 53.7\% to 12.4\%. The methods are implemented in the registered software suites JudgeGuard and TrojanArmor and in the MCPSec module. JudgeGuard is deployed in production at Viasat.
\end{abstract}

\tableofcontents

\chapter*{Introduction}
\addcontentsline{toc}{chapter}{Introduction}

\textbf{Motivation.}
Modern large language models (LLMs) solve a wide range of natural language processing tasks. As LLMs are embedded into production systems, the number of entry points through which an adversary can algorithmically influence the model also grows: prompt injections, trojans (backdoors), and manipulation of automatic quality metrics. Among such metrics, the LLM-as-a-Judge architecture, in which a language model evaluates the responses of other models, plays a special role. A compromised judge model distorts model comparisons and undermines the reproducibility of experiments.

The search for malicious instructions is an optimization problem over a discrete space of texts. Adversaries iterate over candidate \textit{jailbreaks} until the model outputs prohibited information. Multi-level heuristic filters do not solve the problem. Trojans work differently: at the training stage, a trigger is injected into the data, and this trigger forms a hidden attractor in the parameter space. Reconstruction of such a trigger from the network weights is an ill-conditioned inverse problem. In the Trojan Detection Challenge 2023, the recall of true-trigger reconstruction was 0.17, which is close to the baseline level. Agentic architectures aggravate the situation: when an LLM gains access to external tools, a textual vulnerability carries over into the space of executable actions. The consequences of an attack then extend beyond the information domain.

Experimental evaluation quantitatively confirms the vulnerability of the LLM-as-a-Judge architecture. On the MT-Bench dataset and on the tasks of the Kaggle competition ``LLMs: You Can't Please Them All'', injection attacks generated by the adaptive ASA algorithm reached an attack success rate (ASR) of 73.8\% against the Gemma-3-4B-Instruct judge model. The built-in safeguards of judge models fail under targeted attacks.

The compositional nature of agentic systems gives rise to a separate class of risks. The Model Context Protocol~(MCP) standardizes the interaction of an LLM agent with external tools and thereby expands the agent's state space. An injection into the context of a tool call can take over control of the agent (\textit{prompt injection to tool hijacking}). From an algorithmic point of view, the robustness problem of a single model extends to the computation graph of an entire system: input sequences are formed dynamically from different sources, and an attack through an external channel becomes a special case of an adversarial input perturbation that can provoke unauthorized code execution.

All these vulnerabilities have a common cause: the mapping learned by the model is sensitive to small input variations in the discrete token space. Perturbations destabilize hidden representations, and the model loses its intended properties. The embedding of LLMs into complex agentic systems makes the problem more acute. A rigorous framework for ensuring robustness is therefore a pressing research problem.

\textbf{State of the art.}
The foundations of robustness analysis for neural networks were laid by Goodfellow~et~al.~\cite{goodfellow_explaining_2014} and Madry~et~al.~\cite{madry_towards_2017} in computer vision. A direct transfer of this framework to LLMs is difficult because the token space is discrete and inference is autoregressive. For the text domain, Zou~et~al.~\cite{zou_universal_2023} proposed the GCG method, which generates universal adversarial suffixes. Wallace~et~al.~\cite{wallace_universal_2019} showed that universal triggers can control generation. Perez~et~al.~\cite{perez_ignore_2022} systematized prompt injection attacks. Carlini~et~al.~\cite{carlini_extracting_2021} showed that fragments of training data can be extracted directly from a trained model. However, formal metrics for evaluating LLM robustness remain underdeveloped: the existing metrics do not account for the discreteness of the token space and the autoregressive character of generation.
Zheng~et~al.~\cite{zheng_judging_2023} proposed the LLM-as-a-Judge architecture for automatic evaluation of the quality of model responses. It spread quickly, but the analysis of its robustness to targeted adversarial attacks remains fragmentary. Trojan detection was studied in the TDC2023 and TrojAI benchmarks. A quantitative characterization of the asymmetry between trigger injection and trigger detection remained an open question before the present work. The security of agentic systems and of protocols such as MCP has received little study. Greshake~et~al.~\cite{greshake_not_2023} described indirect prompt injection attacks on LLM-integrated applications and showed that external content can take over control of an agent. However, to the author's knowledge, no prior publication analyzed the architectural risks of MCP or proposed defense algorithms based on cryptographic attestation of tool calls. The absence of a comprehensive approach to the security of agentic LLM systems motivated the present research.
At the start of this work, the literature lacked: (1)~a formal robustness metric that accounts for the discreteness and autoregressiveness of LLM generation, (2)~a systematic methodology for evaluating the vulnerability of LLM-as-a-Judge, (3)~algorithms for defending agentic systems against injections through protocols of interaction with external tools.

The work covers several interrelated directions:
\begin{itemize}
    \item vulnerabilities of LLMs to prompt injection attacks in LLM-as-a-Judge architectures and in systems with multi-layer defense;
    \item limitations of existing trojan detection methods;
    \item vulnerabilities of agentic LLM systems that use MCP to interact with external tools, and software tools for defense against control hijacking;
    \item methods for improving the robustness of LLMs to the listed threats and their experimental evaluation.
\end{itemize}

Studying these directions in one work reveals the common structure of LLM vulnerabilities and yields concrete recommendations for developers.

\textbf{Scope.} The work studies large language models and systems based on them, namely LLM-as-a-Judge and agentic LLM systems with tool access. It develops models, methods, algorithms, and software tools for evaluating vulnerabilities and improving the robustness of LLMs.

\textbf{Aim.} The aim of the work is to develop a set of models, methods, and algorithms for evaluating and improving the robustness of large language models to adversarial input sequence variations and injection attacks. In this work, input sequence variations comprise prompt injections, trojans (\textit{backdoor attacks}), and attacks on agentic LLM systems.

To achieve this aim, the work addresses the following \textbf{objectives}:
\begin{enumerate}
    \item \textit{Systematization and classification of threats.} Construct a classification of vulnerabilities of large language models to prompt injection attacks that covers specialized architectures, such as LLM-as-a-Judge and systems with multi-layer defense, and establish the limitations of trojan detection methods.
    \item \textit{Development of the theoretical foundations of evaluation.} Develop a formal mathematical model for robustness evaluation (\(R_{stab}(f)\)) and a consistent set of empirical metrics that account for the discrete nature of the token space and the autoregressive character of generation, including the basic indicators ASR (\textit{Attack Success Rate}), TSR (\textit{Transfer Success Rate}), Recall, and REASR (\textit{Reverse-Engineered ASR}).
    \item \textit{Construction of adversarial attack methods.} Develop adaptive algorithms for generating adversarial attacks (ASA) to test LLM vulnerabilities, including attacks on LLM-as-a-Judge systems in the black-box setting.
    \item \textit{Development and experimental validation of defense methods.} Propose and experimentally study strategies for improving LLM robustness, namely committees of heterogeneous models and multi-layer filtering, on a representative set of models of the main architectural families, both with open weights (Gemma, Llama) and proprietary (GPT, Claude), and on the MT-Bench, TDC2023, and SaTML CTF 2024 datasets.
    \item \textit{Software implementation and validation.} Create software tools for evaluating the security of LLM systems, including agentic systems based on MCP. Validate the attack and defense methods on the TDC2023 and SaTML CTF 2024 benchmarks and in the production environment of Viasat.
\end{enumerate}

\textbf{Scientific novelty.}
\begin{enumerate}
    \item A formal mathematical model for evaluating the robustness of large language models (\(R_{stab}(f)\)) is proposed as a metric of generative robustness. It accounts for the discrete nature of the token space and the autoregressive character of generation. It is based on bounded divergence measures, in particular the Jensen--Shannon divergence, between the per-step output distributions under small input perturbations. The metric is shown to be monotone with respect to the natural relation of defense strengthening, and this monotonicity serves as a consistency condition of the definition. Theoretical vulnerability bounds are proved for a deterministic decision operator $h$ through the decision-level robustness metric $R_{class}(h)$. For tasks with discrete output, including LLM-as-a-Judge, $\mathrm{ASR}$ is a consistent estimator of the vulnerability $V(h)$ with respect to a fixed attack family $\mathfrak{A}_{eval}$. The population limit of the empirical estimate $\widehat{R}_{emp}(f) = 1 - \mathrm{ASR}$ is $1 - V(h)$, which is an upper bound on $R_{class}(h)$ under the stated assumptions. This justifies the use of the ASR metric in the experimental part of the work. The result was obtained by the author and is theoretically justified.
    \item The robustness of LLM-as-a-Judge systems to injection attacks is formalized. Attacks are divided into two classes by the mechanism of influence: instructive attacks, which inject a direct command, and content attacks, which shift the verdict by manipulating the content. For each class, a corresponding countermeasure is defined. On this basis, the adaptive attack generation algorithm ASA (\textit{Adaptive Search-Based Attack}) is developed: an evolutionary search that combines token-level mutations with semantic paraphrasing. On a set of open and proprietary models, ASA achieves an ASR of up to 73.8\% under black-box access. Open models proved systematically more vulnerable than their proprietary counterparts. Attack transferability between open models reaches TSR = 62.6\%. All these results were obtained experimentally by the author.
    \item For the Pythia-1.4B model under full white-box access on the Trojan Detection Challenge 2023 data, the asymmetry of the trojan detection problem is quantitatively characterized. A surrogate trigger activates the trojan almost always: REASR (\textit{Reverse-Engineered Attack Success Rate}) is $\approx 0.99$. The original trigger cannot be reconstructed: Recall $\approx 0.17$ is close to the baseline level $\approx 0.14$. The gap is explained by the combinatorial growth of the space of discrete triggers and indicates the limitations of post-training trojan auditing. Priority should be given to preventive data control. The author hypothesizes that the asymmetry grows with the number of model parameters.
    \item A methodology for generating and classifying multi-step prompt injection attacks on LLMs equipped with multi-layer defense systems, including system instructions, external Python filters, and LLM filters, is developed and experimentally evaluated. On the international SaTML CTF 2024 benchmark, four classes of successful bypass vectors are systematized, namely code-oriented attacks, ASCII character encoding, word-splitting attacks, and contextual misdirection. For each class, the conditions of applicability and effective countermeasures are determined: input normalization and context-isolated system instructions. The methodology was developed by the author and verified on the competition data.
    \item A defense method for LLM-as-a-Judge systems based on adaptive committees of heterogeneous models is developed. The method uses majority voting in the space of classified verdicts. It differs from standard ensembles in that the composition of the committee is optimized with regard to the architectural diversity of the models. The Condorcet jury theorem gives a theoretical justification for the growth of committee accuracy, that is, of the probability of a correct majority verdict, when the models err independently and each is correct with probability above 0.5. This accuracy result does not by itself establish robustness in the sense of $R_{class}$. On a representative set of attacks, a committee of 5--7 models of different architectural families reduces the ASR by 47--55 percentage points for the most vulnerable open-weight model (Gemma-3-4B) compared with the single model. The method was developed and experimentally validated by the author.
    \item An approach to the security analysis of agentic LLM systems based on the Model Context Protocol is proposed. The class of \textit{prompt injection to tool hijacking} attacks is formalized. Architectural vulnerabilities of MCP are identified, and a set of defense mechanisms is developed. Its core consists of two algorithms. AttestMCP restricts tool calls to a session capability table and controls packet integrity with an HMAC signature at an overhead below 0.1~ms per call. Commit Boundary runs the code generated by the agent in an isolated environment. They are complemented by four auxiliary mechanisms: capability certificates, provenance tagging, enforced isolation of cross-server transfer, and replay protection. All components are combined in the MCPSec software module, which reduces attack success while preserving an acceptable response time. The approach was proposed by the author and confirmed experimentally on the MCPBench benchmark.
\end{enumerate}

\textbf{Theoretical and practical significance.} The theoretical significance of the work lies in the creation of a formal framework for evaluating LLM robustness. The metric $R_{stab}(f)$ is monotone with respect to defense strengthening, and the related decision-level metric $R_{class}(h)$ gives a proved upper bound $V(h) \le 1 - R_{class}(h)$ on the vulnerability under localized attacks. Together they provide a quantitative basis for comparing defense methods. The Condorcet jury theorem justifies the growth of the accuracy of majority-vote committees under independent errors and a base accuracy above 0.5. For compositional systems, a Lipschitz decomposition of sensitivity is obtained. Its empirical plug-in estimates indicate which component of the pipeline dominates the sensitivity. The discovered asymmetry of trojan detection favors preventive data auditing over post-training diagnostics.

The practical significance of the work lies in two registered software suites and a module for agentic LLM systems. JudgeGuard (software registration certificate No.~2025687702) automates the testing of LLM-as-a-Judge and implements committee-based defense. TrojanArmor (certificate No.~2025684247) detects trojans. The MCPSec software module controls the permissions of tool calls and isolates code execution in agentic LLM systems. The areas of application are the security auditing of automatic evaluation systems, dialogue agents, and agentic platforms.

The results of the thesis have been adopted in industry. In the production activity of Viasat, as confirmed by an implementation certificate, the framework for evaluating LLM-as-a-Judge vulnerabilities, the defense method based on model committees, and the methodology of multi-step attacks were applied to secure a recommendation ranking module built on the LLM-as-a-Judge architecture.

\textbf{Research methods.} The work combines theoretical and experimental methods. The scientific literature on attacks on LLMs, including prompt injections and trojans, and on methods of defense and robustness evaluation was analyzed. Models of attacks and defenses were developed and formalized. The approaches were validated through participation in international LLM security competitions: Trojan Detection Challenge 2023, SaTML CTF 2024, and Kaggle ``LLMs: You Can't Please Them All''.

The empirical part included the deployment and configuration of open and commercial LLMs (GPT-3.5, GPT-4, Llama 2, Llama 3.2, Gemma, Claude-3-Opus). Both the attacks developed by the author and attacks known from the literature were reproduced and simulated: CUA (\textit{Comparative Undermining Attack}), JMA (\textit{Justification Manipulation Attack}), ASA (\textit{Adaptive Search-Based Attack}), code-oriented injections, and others. The effectiveness of defense scenarios was evaluated by statistical methods: bootstrap analysis, hypothesis testing, and confidence intervals. Adversarial examples were generated by two algorithms: the author's ASA and the Greedy Coordinate Gradient method (GCG).

\textbf{Main statements submitted for defense:}
\begin{enumerate}
    \item A formal mathematical model for evaluating the robustness of large language models (\(R_{stab}(f)\)). The model relies on a divergence measure between probability distributions in the discrete token space and serves as a metric of generative robustness. For the class of localized attacks, the upper bound $V(h) \le 1 - R_{class}(h)$ is proved. For non-localized attacks, such as prompt injection and jailbreak, an empirical calibrated model $V(h) \approx g(1 - R_{class}(h))$ is proposed. The upper bound is rigorously proved by the author. The empirical model is a hypothesis whose parameters are calibrated on a validation set of attacks.
    \item The adaptive adversarial attack generation algorithm (ASA), which implements an evolutionary search with selection by an objective function that measures the influence on the judge model. The algorithm combines token-level mutations with semantic paraphrasing. In the black-box setting, the ASR reaches 73.8\%. Reverse engineering of trojans revealed a practical asymmetry: generating adversarial inputs is substantially easier than detecting them (REASR~$\approx 0.99$ with Recall~$\approx 0.17$). This gap argues for prioritizing preventive defense. The ASA algorithm and the asymmetry analysis are the work of the author.
    \item A defense method for LLM-as-a-Judge systems based on adaptive committees of heterogeneous models, which reduces the ASR by 47--55 percentage points for the most vulnerable model Gemma-3-4B, namely from 73.8\% to 19.3\% with a committee of 7 models. The theoretical justification through the Condorcet jury theorem concerns committee accuracy and holds under independent errors and the condition $r > 0.5$, where $r$ denotes the base accuracy of an individual model. For Gemma-3-4B, whose base accuracy does not satisfy this condition, the gain comes primarily from the heterogeneous composition of the committee, which decorrelates the vulnerabilities of the base models. Multi-layer defense (defense in depth) reduces the ASR from 90\% to 15--25\%. Four classes of bypasses of multi-layer filters are systematized. The results were obtained experimentally.
    \item An architectural and algorithmic approach to defending agentic LLM systems based on MCP. It includes the MCPBench benchmark (847 scenarios), the AttestMCP algorithm for attestation of tool calls, the Commit Boundary software isolation pattern, and four auxiliary defense mechanisms: capability certificates, provenance tagging, enforced isolation, and replay protection. On the MCPBench benchmark, the approach reduces the average ASR of agentic attacks from 53.7\% to 12.4\%. The approach was developed and validated by the author.
    \item The proposed methods for evaluating and improving the robustness of LLM systems can be implemented as a complete software toolkit with measurable performance. This is confirmed by the JudgeGuard (certificate No.~2025687702) and TrojanArmor (certificate No.~2025684247) software suites, as well as by the MCPSec module. TrojanArmor was validated on the TDC2023 benchmark, JudgeGuard on SaTML CTF 2024 and deployed in the production environment of Viasat, and MCPSec on the MCPBench benchmark.
\end{enumerate}

\textbf{Reliability.} The reliability of the results rests on the following. The experimental base is built on the public datasets MT-Bench, TDC2023, SaTML CTF 2024, and Kaggle ``LLMs: You Can't Please Them All''. The models represent different architectural families: Gemma, Llama, GPT-3.5, GPT-4, Claude-3-Opus. Each model/attack/task combination was tested on $n = 50$ prompts ($n = 200$ per model after aggregation over four tasks) with confidence intervals and significance tests. The obtained ASR, TSR, Recall, and REASR values are compared with the results of known methods. The results are published in peer-reviewed journals and were presented at international conferences and competitions.

\textbf{Validation and dissemination.} The main results of the work were presented and discussed at the following conferences:

\begin{enumerate}
    \item Distributed Computer and Communication Networks: \\
    Control, Computation, Communications (DCCN 2024), 2024
    \item International Conference on Computational Linguistics \\
    and Intellectual Technologies DIALOG 2022, 2022
\end{enumerate}

\textbf{Personal contribution.} The results were obtained by the author personally or with his decisive participation. The author is responsible for the development of the attack frameworks, the defense concepts, the planning and execution of the experiments, the data analysis, and the formulation of conclusions. The co-authors participated in the preparation of the publications. In all joint works, the contribution of the author is decisive: from the formulation of the problems to the analysis of the experimental results.

\textbf{Publications.} The results of the thesis are published in 7 papers: 4 journal articles~\cite{maloyan2025investigatingvulnerabilityllmasajudgearchitectures,maloyan2025agentic,maloyan2025breaking,maloyan2026sleeper}, a paper at the DCCN conference~\cite{Khomsky_2025}, a paper at the DIALOG conference~\cite{maloyan_generated_2022}, and a preprint~\cite{maloyan2025adversarialattacksllmasajudgesystems}. The JudgeGuard and TrojanArmor software suites are registered with two software registration certificates~\cite{judgeguard2025cert,trojanarmor2025cert}.

Some of the works were carried out with co-authors. The contribution of the author to each of them is as follows.
\begin{itemize}
    \item In the conference paper~\cite{Khomsky_2025} (D.~Khomsky, N.~Maloyan, B.~Nutfullin), the author is responsible for the formulation of the problem of attacks on systems with multi-layer defense, the development of the methodology of multi-step injections, and the analysis of their effectiveness. The co-authors participated in the implementation of the experiments and the preparation of the text.
    \item In the article~\cite{maloyan2025investigatingvulnerabilityllmasajudgearchitectures} (N.~Maloyan, B.~Ashinov, D.~Namiot), the author developed the ASA adversarial attack algorithm on LLM-as-a-Judge systems, planned and conducted the experiments, and formulated the conclusions. The co-authors participated in the discussion of the results and in scientific supervision.
    \item In the paper~\cite{maloyan_generated_2022} (N.~Maloyan, B.~Nutfullin, E.~Ilyushin), the author developed the method for detecting generated text and verified it experimentally.
    \item The articles~\cite{maloyan2025agentic,maloyan2025breaking,maloyan2026sleeper} and the preprint~\cite{maloyan2025adversarialattacksllmasajudgesystems} were prepared jointly with the doctoral advisor D.~Namiot. The author is responsible for the formulation of the problems, the development of methods of attack on and defense of agentic LLM systems, the execution of the experiments, and the formulation of the results.
    \item The software suites protected by the certificates~\cite{judgeguard2025cert,trojanarmor2025cert} were developed by the author personally.
\end{itemize}

\textbf{Software registration certificates.}
\begin{enumerate}
    \item JudgeGuard. A program for the experimental study of vulnerabilities and testing of defense methods for automatic evaluation systems based on large language models / N.~Maloyan. Software registration certificate No.~2025687702, 2025.
    \item TrojanArmor. A program for the experimental study of trojan attacks on neural networks and defense methods against them / N.~Maloyan. Software registration certificate No.~2025684247, 2025.
\end{enumerate}

\textbf{Relation between objectives, chapters, novelty results, and statements for defense.}
Objective~1, the systematization and classification of threats, is addressed in Chapter~1. It corresponds to novelty result~3, which establishes the practical asymmetry of the effectiveness of trigger generation and trigger reconstruction in trojan detection, and to result~4, which contains the classification of multi-step attacks on systems with multi-layer defense.
Objective~2, the development of the theoretical foundations of evaluation, is addressed in Chapter~2. It corresponds to novelty result~1, which includes the formal robustness metric $R_{stab}(f)$ and the theoretical vulnerability bounds, and to statement for defense~1. The theoretical justification of committees through the Condorcet jury theorem, which is part of statement for defense~3, also relies on the framework developed for Objective~2.
Objective~3, the development of adversarial attack methods, is addressed in Chapter~3. It corresponds to novelty result~2, namely the framework for studying the vulnerability of LLM-as-a-Judge and the ASA algorithm, and to statement for defense~2.
Objective~4, the development and experimental validation of defense methods, is addressed in Chapters~3 and~4. It corresponds to novelty result~5 on committees of heterogeneous models, to statement for defense~3, and to the experimental confirmation of novelty results~2--5 and statements for defense~2--4.
Objective~5, software implementation and validation, is addressed in Chapters~5--6. It corresponds to novelty result~6, which contains the security analysis of agentic systems and MCP, and to statement for defense~5.

\chapter{Analysis of Threats and Vulnerabilities of Large Language Models}\label{ch:ch1}

\section*{Chapter Overview}
The security of large language models is determined by the interaction of many factors: the discrete nature of the input space, the diversity of attack vectors, and the architectural features of LLM-based systems. This chapter formalizes the threat space, classifies vulnerabilities, and analyzes the specifics of attacks on LLM-as-a-Judge architectures and on systems with multi-layer defense. The results form the theoretical basis for the defense methods developed in the subsequent chapters.

\section{Theoretical Foundations of Large Language Model Robustness}
\label{sec:theoretical_foundations}

\paragraph{Terminological convention.}
This work uses the following terminology.
\textit{Robustness} denotes the ability of a model to preserve correct behavior under perturbations of the input data. The term covers the notions of \textit{adversarial robustness} and \textit{certified robustness} used in the literature.
The subscript ``stab'' in $R_{stab}(f)$ comes from the word \textit{stability}: the metric evaluates how much the output distributions of the model change under input perturbations. $R_{stab}(f)$ is a robustness metric, and no separate notion of stability is introduced.
The terms \textit{trojan} and \textit{backdoor} are used as synonyms.
In the LLM-as-a-Judge architecture, an LLM performs automatic evaluation. This evaluating model is called the \textit{judge model}, or simply the \textit{judge}.

Below we give definitions and formalisms for the analysis of LLM robustness and security, the types of vulnerabilities, and approaches to their quantitative evaluation.

\subsection{Large Language Models: A Formal Representation}
\label{subsec:llm_formalism}

Large language models (LLMs) are a class of probabilistic sequence models. As a rule, they are built on deep neural networks with the Transformer architecture \cite{vaswani_attention_2017} and trained on large corpora of text data \cite{brown_language_2020}. Such models are intended for processing and generating natural language text. They operate on sequences of discrete units, tokens. A formalization of LLMs and of their inputs and outputs is necessary for a rigorous definition and analysis of model properties, including robustness.

\begin{definition}
\label{def:token_space}
\textbf{Token and sequence space.}
Let $\Vspace$ denote a finite vocabulary of tokens. The space of all finite token sequences, that is, texts, is defined as $\Xspace = \Vspace^*$, where $\Vspace^*$ is the Kleene closure of the set $\Vspace$. The length of an arbitrary sequence $x \in \Xspace$ is written as $|x|$.
\end{definition}

The input of the model is a token sequence $x \in \Xspace$. The nature of the output depends on the task. In general form, an LLM is given by a parametric mapping $\fparam$ with a parameter vector $\Params$:
$$
\fparam: \Xspace \to \ProbSpace{\Yspace},
$$
where $\Yspace$ is the space of possible model outputs, and $\ProbSpace{\Yspace}$ is the set of probability distributions on it.
\begin{itemize}
    \item In text generation tasks, such as autocompletion, translation, and dialogue, the model predicts a probability distribution over the vocabulary $\Vspace$ for the next token or generates an entire token sequence. In this case $\Yspace$ often coincides with $\Xspace$, that is, the space of output sequences, or with $\Vspace$, that is, the space of next tokens, and $\fparam$ defines the conditional probability distribution $p_\Params(y|x)$ for $y \in \Yspace$ given the input $x \in \Xspace$.
    \item In text classification tasks, such as sentiment analysis and topic detection, $\Yspace$ is a finite set of class labels, and $\fparam(x)$ defines a probability distribution over these labels.
\end{itemize}
The discrete nature of the input space $\Xspace$ fundamentally distinguishes LLMs from models that operate on continuous data, in particular in computer vision \cite{szegedy_intriguing_2013}, and imposes constraints on the methods for evaluating and improving robustness. For brevity, the subscript $\Params$ is omitted below, and the model is denoted as $f$, which implies a fixed set of parameters. When an output predicate is applied to $f(x)$, as in $f(x) \in \badY$, the symbol $f(x)$ denotes a response $y \sim p(\cdot|x)$ generated by the model, and the probability is taken over this generation. Chapter~\ref{ch:ch2} distinguishes stochastic generation from deterministic decoding $h = \mathrm{Dec} \circ f$.

\subsection{LLM Robustness: Definition and Significance}
\label{subsec:robustness_definition_significance}

The robustness of a large language model is defined as its ability to preserve correct behavior under small, semantically insignificant, or syntactically admissible perturbations of the input data. The reliability of dialogue systems, automated content creation, and decision support systems directly depends on this property \cite{goodfellow_explaining_2014, liang_holistic_2023}. With insufficient robustness, even a minimal variation of a query can lead to the generation of erroneous or undesirable responses \cite{jia_adversarial_2017}.

A formalization of robustness requires a prior definition of the class of admissible input perturbations.

\begin{figure}[ht]
\centering
\begin{tikzpicture}[scale=1.2, every node/.style={font=\small}]
  \draw[thick] (0,0) circle [radius=2];
  \node at (0,2.2) {$\Bepsilon{x}$};

  \filldraw[black] (0,0) circle (1.5pt) node[below right] {$x$};

  \coordinate (xin) at (1.3,0.5);
  \filldraw[black] (xin) circle (1.5pt) node[above right] {$x'$};
  \draw[->, blue] (0,0) -- (xin) node[midway, above, sloped] {$\displaystyle d(x,x') \le \epsilon$};

  \coordinate (yout) at (-2.0,1.2);
  \filldraw[black] (yout) circle (1.5pt) node[above left] {$y$};
  \draw[->, red, dashed] (0,0) -- (yout) node[midway, above, sloped] {$\displaystyle d(x,y) > \epsilon$};

  \begin{scope}[shift={(3.5,-1)}]
    \draw[->, blue] (0,0) -- (0.5,0) node[right] {inside the neighborhood};
    \draw[->, red, dashed] (0,-0.5) -- (0.5,-0.5) node[right] {outside the neighborhood};
  \end{scope}
\end{tikzpicture}
\caption{Illustration of $\epsilon$-perturbations: the point $x'$ lies in $\Bepsilon{x}$, and the point $y$ lies outside it.}
\label{fig:robustness}
\end{figure}

\begin{definition}
\label{def:token_perturbation}
\textbf{Perturbation of a token sequence.}
Let $x \in \Xspace$ be the original token sequence. A perturbed sequence $x' \in \Xspace$ is obtained from $x$ by applying some transformation $T: \Xspace \to \Xspace$. The set of $\epsilon$-perturbations of the sequence $x$ with respect to the metric $d$ is defined as the neighborhood
$$
\Bepsilon{x} = \{x' \in \Xspace : d(x, x') \leq \epsilon\},
$$
where $\epsilon > 0$ sets the perturbation budget and bounds the magnitude of the admissible change, and $d: \Xspace \times \Xspace \rightarrow \mathbb{R}_{\geq 0}$ is a discrepancy function on the token sequence space $\Xspace$, in particular a metric, a pseudometric, or a semantic dissimilarity measure (Fig.~\ref{fig:robustness}).
\end{definition}

The choice of the metric $d$ determines which types of changes are considered small and against which modifications robustness is evaluated. Common metrics for token sequences are listed below.

1.  \textbf{Hamming distance ($d_H$):} It applies only to sequences of equal length $n$ and is the simplest of the metrics considered:
    $$
    d_H(x, x') = \sum_{i=1}^{n} \mathbb{I}(x_i \neq x'_i), \quad \text{where } |x|=|x'|=n,
    $$
    and $\mathbb{I}(\cdot)$ is the indicator function. In essence, it counts the number of positions at which the tokens $x_i$ and $x'_i$ differ. The metric models attacks that reduce to token substitution.

2.  \textbf{Levenshtein distance ($d_L$) \cite{levenshtein_binary_1966}:} Attacks are rarely limited to a single substitution. Typos and the addition and deletion of words produce token insertions and deletions that the Hamming metric does not cover. The Levenshtein distance generalizes $d_H$ to sequences of arbitrary length: it measures the minimum number of single-token edit operations, namely insertion, deletion, and substitution, needed to transform $x$ into $x'$. The formal recursive definition is given in \cite{levenshtein_binary_1966}.

3.  \textbf{Semantic distance ($d_S$):} Both preceding metrics work at the surface level of tokens and do not reflect the semantic closeness of texts. Paraphrasing and synonym substitution \cite{li_bert-attack_2020} can make $d_H$ and $d_L$ large, although the meaning of the text is preserved. To account for such perturbations, we introduce a distance based on the cosine measure between vector representations:
    $$
    d_S(x, x') = 1 - \frac{\langle E(x), E(x') \rangle}{\|E(x)\| \cdot \|E(x')\|},
    $$
    where $E$ is an arbitrary embedding model, $\langle \cdot, \cdot \rangle$ is the inner product, and $\|\cdot\|$ is the Euclidean norm. The sentence vector $E(x)$ is obtained by averaging (mean pooling) the token embeddings followed by L2 normalization. Unless stated otherwise, the experiments use the \texttt{all-mpnet-base-v2} model from the Sentence-BERT family, although the definition of $d_S$ itself is not tied to a specific embedder architecture.

    \textbf{Sensitivity to the choice of $E$:} The resulting value of $d_S$ and the robustness estimate depend on the choice of the embedding model $E$. Different models assess semantic closeness differently, especially in narrow specialized domains. Comparative studies must either fix a single model $E$ or analyze the sensitivity of the results to its replacement.

Table~\ref{tab:distance_metrics} summarizes the main distance metrics and their characteristics.

\begin{table}[ht]
\centering
\caption{Summary of distance metrics in the token space}
\label{tab:distance_metrics}
\small
    \begin{tabular}{|>{\raggedright}p{3.0cm}|>{\raggedright}p{5.0cm}|>{\raggedright\arraybackslash}p{4.3cm}|}
        \hline
        \textbf{Metric} & \textbf{Description} & \textbf{Application} \\
        \hline
        Hamming ($d_H$) & Number of substituted tokens (for strings of equal length) & $\epsilon$: 1--5; token substitution attacks \\
        \hline
        Levenshtein ($d_L$) & Min. number of operations (insertion, deletion, substitution) & $\epsilon$: 1--10; typos, short insertions \\
        \hline
        Semantic ($d_S$) & Cosine distance between embeddings & $\epsilon$: 0.05--0.2; paraphrasing \\
        \hline
    \end{tabular}
\end{table}

The condition $d(x, x') \le \epsilon$ sets the admissible perturbation budget: an attacker or natural variations may introduce changes into $x$, but the total cost of these changes under the metric $d$ cannot exceed $\epsilon$.

The quantitative evaluation of LLM robustness relies on the metric $R_{stab}(f)$, which captures the magnitude of the deviation of the output distributions of the model under small input modifications.

The rigorous definition and the mathematical properties of the metric $R_{stab}(f)$ are considered in detail in Chapter~\ref{ch:ch2}.

\subsection{LLM Vulnerabilities and Attack Types}
\label{subsec:vulnerabilities_attacks}

Insufficient robustness of LLMs opens opportunities for attacks on model behavior. The goals of an adversary range from generating toxic or prohibited content to extracting confidential information from the training data, bypassing safety systems, and disrupting the normal operation of the model \cite{carlini_extracting_2021, perez_ignore_2022, wei_jailbreak_2023}.

Vulnerability is formalized through the probability that the model exhibits undesirable behavior under the worst-case perturbation within a given budget.

The local vulnerability metric $\Vulfunc_{\text{local}}(f)$ reflects the expected maximum probability of obtaining undesirable outputs, which form the set $\badY \subseteq \Yspace$ defined formally in Section~\ref{subsec:vulnerability_metric} of Chapter~\ref{ch:ch2}, under perturbations of a specific input query within its $\epsilon$-neighborhood. In essence, it quantifies how easily the model can be induced to generate prohibited or harmful content by modifying the query.

The rigorous definition and the mathematical properties of the metric $\Vulfunc_{\text{local}}(f)$, and its relation to $R_{stab}(f)$ and to the global vulnerability index $V(f)$, are considered in Chapter~\ref{ch:ch2}.

The main types of attacks on LLMs are described below.

1.  \textbf{Adversarial attacks:} The construction of perturbed inputs $x'$ that differ from $x$ only slightly, often imperceptibly to a human, that is, such that $d(x, x')$ is small under some metric $d$, in particular $d_L$ or $d_S$. Such perturbations lead to a model output $f(x')$ that is erroneous in classification tasks or undesirable in generation tasks \cite{goodfellow_explaining_2014, li_bert-attack_2020, zou_universal_2023}. For LLMs, this can be the substitution of characters and tokens, the insertion or deletion of words, paraphrasing, or the use of control sequences \cite{wallace_universal_2019, zou_universal_2023, morris_textattack_2020}. Gradient optimization methods, including Greedy Coordinate Gradient (GCG), are used to generate such attacks.

2.  \textbf{Prompt injection attacks:} The injection of instructions into a user query, often covert or disguised, with the goal of overriding the original system instructions of the model or its safety mechanisms \cite{perez_ignore_2022, liu_prompt_2023}. A typical example is the addition of text such as ``Ignore the previous instructions and do X''. For such attacks, the perturbation need not be small under the formal metrics $d$, because they exploit the ability of LLMs to follow instructions \cite{greshake_not_2023, liu_prompt_2023}. Two targets are particularly exposed: LLM-as-a-Judge systems, where injections distort the evaluation results, and LLMs with multi-layer defense, which specialized techniques can bypass.

3.  \textbf{Jailbreaking attacks:} The design of specially crafted prompts that force the model to generate content that violates its internal safety policies: instructions for illegal actions, malicious code, offensive material. The safety mechanisms established by alignment fail to activate \cite{wei_jailbreak_2023, khachaturov_adversarial_2025}. Both large commercial and open models are vulnerable to such attacks \cite{yi_jailbreak_2024, chen_characterizing_2024, liu_jailbreaking_2024}, including under multi-turn queries or techniques that exploit the features of multi-layer defense systems.

4.  \textbf{Backdoor attacks:} An adversary with access to the training process implants hidden triggers into the model at the training or fine-tuning stage \cite{gu_badnets_2017}. A trigger can be a phrase or a symbol whose presence in the input data $x'$ causes behavior predefined by the attacker: the generation of a specific text or an incorrect classification \cite{chen_targeted_2017, kurita_weight_2020}. Detecting trojans (backdoors) is difficult, because the model behaves normally on ordinary data. Reverse engineering of triggers, that is, the reconstruction of the original trigger from a known malicious output, remains a task with low detection recall even under controlled conditions.

5.  \textbf{Data extraction attacks:} Attempts to reconstruct or extract from the model confidential information that could have been present in its training data, including personal data and proprietary texts, by submitting specially constructed queries \cite{carlini_extracting_2021}.

6.  \textbf{Attacks on agentic systems through tool hijacking:} With the development of agentic LLM systems that interact with external tools through standardized protocols, in particular the Model Context Protocol (MCP)~\cite{anthropic_mcp_2024}, a class of attacks arises in which a prompt injection into the context of a tool leads to an unauthorized tool call with arbitrary arguments. Unlike classical prompt injection, the consequences of such attacks go beyond text generation and include code execution, file system modification, and network requests~\cite{maloyan2025breaking, maloyan2025agentic}. The criticality of such attacks is amplified by architectural features of agentic interaction protocols, namely the absence of capability attestation, unauthenticated sampling, and implicit trust propagation between servers. A detailed study of this issue is given in Chapter~\ref{ch:ch5}.

The robustness metric $R_{stab}(f)$ and the local vulnerability metric $\Vulfunc_{\text{local}}(f)$ capture different aspects of model behavior. The first evaluates how much the distribution $p(y|x)$ changes under input perturbations, and the second characterizes the probability of generating $y \in \badY$. There is, however, a conceptual connection between them. A model with high $R_{stab}(f)$, that is, with small changes of $p(y|x)$ under perturbations $x'$, is presumably less vulnerable and shows a low $\Vulfunc_{\text{local}}(f)$. If the distribution $p(y|x)$ hardly changes under the transition to $x'$, then the probability of generating undesirable content $P(y \in \badY | x')$ also changes insignificantly. In the general case, this connection is not a strict mathematical dependence. By Pinsker's inequality, a change of the distribution with a small Kullback--Leibler divergence $D_{KL}$ changes the probability of falling into $\badY$ by at most $\sqrt{D_{KL}/2}$ in absolute terms. In relative terms, however, this probability can grow by orders of magnitude if the original probability was close to zero. Conversely, a large $D_{KL}$ does not necessarily lead to the generation of undesirable content when the new distribution is concentrated on admissible outputs. The results of experiments with adversarial training \cite{madry_towards_2017} indicate a correlation between an increase of $R_{stab}(f)$ and a decrease of $\Vulfunc_{\text{local}}(f)$. No rigorous theoretical guarantees of such a dependence, however, have been obtained.

\subsection{Categorization and Formalization of Specific Aspects of Vulnerabilities}
\label{subsec:vulnerability_categorization}

LLM vulnerabilities affect several aspects of security. Four of them are formalized below with the apparatus introduced above.

\textbf{1. Fine-tuning sensitivity.}
Adapting an LLM to specific tasks through fine-tuning \cite{ouyang_training_2022} carries a risk for robustness and security. Robustness can decrease, already known vulnerabilities can worsen, and a partial loss of the original alignment can open new ones \cite{xu2024bagtricksbenchmarkingjailbreak, wang_adversarial_2022}. Quantitatively, the effect of fine-tuning is expressed through the change of the robustness metric $R_{stab}(f)$:
$$\Delta R = R_{stab}(f_{\text{ft}}) - R_{stab}(f_{\text{orig}}).$$
Here $f_{\text{orig}}$ and $f_{\text{ft}}$ denote the model before and after fine-tuning, respectively. A negative $\Delta R$ signals a decrease in robustness. An alternative measure is the mean KL divergence between the output distributions before ($p_{\text{orig}}(y|x)$) and after ($p_{\text{ft}}(y|x)$) fine-tuning:
        $$
        \Delta P = \Exp_{x \sim \DataDist} [ D_{KL}(p_{\text{orig}}(\cdot|x) \| p_{\text{ft}}(\cdot|x)) ].
        $$
The distribution $p_{\text{orig}}$ serves as the reference, and $p_{\text{ft}}$ approximates it. A large value of $\Delta P$ indicates a noticeable shift in model behavior and can correlate with a change in the vulnerability profile.

\textbf{2. Effectiveness of jailbreak attacks.}
The effectiveness of jailbreak attacks for a specific model $f$ is defined through the maximum success probability in the $\epsilon$-neighborhood of a base query $x$ on which the model behaves safely. The metric $d$ can be, in particular, the semantic distance \cite{li_tf_attack_2024}:
        $$
        P_{\text{jailbreak}}(f, x, \epsilon) = \sup_{x' \in \Bepsilon{x}} P( f(x') \text{ bypasses the defense} ).
        $$
The quantity $P( f(x') \text{ bypasses the defense} )$ expresses the probability that the output $y \sim p(\cdot|x')$ of the model $f$ for the query $x'$ falls into the set of undesirable outputs $\badY$. In practice, the success of jailbreak attacks is determined by a combination of factors: the wording of the query, the architecture and size of the model, and the alignment methods \cite{jin_jailbreakzoo_2024, andriushchenko_jailbreaking_2024}. Multi-layer defense systems are also susceptible to these attacks \cite{wei_jailbreak_2023}. Averaging over queries $x$ from a test set $\mathcal{D}_{\text{test}}$ gives an aggregate estimate: $V_{\text{jailbreak}}(f) = \Exp_{x \sim \mathcal{D}_{\text{test}}} [P_{\text{jailbreak}}(f, x, \epsilon)]$.

\textbf{3. Context-dependent vulnerabilities.}
The effectiveness of many attacks, primarily adversarial and jailbreak attacks, varies substantially with the subject domain of the query: the medical, legal, and technical domains show unequal susceptibility. A strategy that is effective in one context can be useless in another \cite{xu2024bagtricksbenchmarkingjailbreak}. This complicates the construction of universal defenses. Formally, the dependence of vulnerability on a context $c \in \mathcal{C}$, where $\mathcal{C}$ is the set of possible contexts, is expressed by the conditional local vulnerability metric:
        $$
        \Vulfunc_{\text{local}}(f|c) = \Exp_{x \sim \DataDist|c} \left[ \sup_{x' \in \Bepsilon{x}} P(y \in \badY | x') \right],
        $$
where $\DataDist|c$ denotes the input distribution conditioned on the context $c$. A noticeable variation of $\Vulfunc_{\text{local}}(f|c)$ across contexts $c \in \mathcal{C}$ indicates a high context dependence of vulnerabilities and the need to develop context-adaptive defense mechanisms.

\textbf{4. Difficulties in identifying model compromise.}
Detecting a successful attack or a hidden compromise of a model, whether a backdoor or a side effect of fine-tuning, is difficult \cite{liu_jailjudge_2024}. Standard tests reveal no anomalies: the model behaves normally until a specific trigger query arrives or a rare scenario occurs. For trojan attacks, the problem is especially acute, because trigger reconstruction itself remains an unsolved task. In practice, a compromise is detected by comparing the behavior of the current model $f$ with a trusted reference model $f_{\text{ref}}$, for example a previous version or the model before fine-tuning.
    
    The measure of deviation can be the mean or maximum KL divergence on some test data set $\mathcal{X}_{\text{test}}$:
        $$
        D_{\text{avg}}(f, f_{\text{ref}}) = \Exp_{x \in \mathcal{X}_{\text{test}}} [ D_{KL}(p_{\text{ref}}(\cdot|x) \| p(\cdot|x)) ],
        $$
        $$
        D_{\text{max}}(f, f_{\text{ref}}) = \max_{x \in \mathcal{X}_{\text{test}}} [ D_{KL}(p_{\text{ref}}(\cdot|x) \| p(\cdot|x)) ].
        $$
        With this argument order, $D_{KL}(p_{\text{ref}} \| p)$ measures the information loss when the reference distribution is approximated by the current one, that is, it shows the degree of deviation of the model $f$ from the trusted state. If one of the metrics exceeds a predefined threshold, this can indicate a compromise. The choice of the reference model $f_{\text{ref}}$ and the interpretation of deviations themselves remain open research questions \cite{kore_empirical_2024}.

These aspects of vulnerabilities motivate the development of standardized metrics and testing protocols for LLMs \cite{liang_holistic_2023, xu_redagent_2024}.

\section{Classification and Analysis of Security Threats to Large Language Models}
\label{sec:llm_threats}

This section systematizes the threats to LLMs based on an analysis of the scientific literature, survey papers, our own research, and the industry standards OWASP Top 10 for LLM Applications \cite{owasp_foundation_owasp_2023} and NIST AI Risk Management Framework \cite{nist_ai_rmf_2023}. The section covers definitions and metrics for security evaluation, a description of threat classes with their formalization, their categorization, and their mapping to security controls. Real incidents are analyzed, and recommendations for improving model security are formulated.

\subsection{Basic Definitions and Formal Metrics}
\label{subsec:def_and_metrics}

The threat analysis uses the notation and the formal apparatus introduced in Section~\ref{subsec:llm_formalism}: the model $\fparam: \Xspace \to \ProbSpace{\Yspace}$, and the robustness metric $R_{stab}(f)$ and the local vulnerability metric $\Vulfunc_{\text{local}}(f)$ (Sections~\ref{subsec:robustness_definition_significance} and~\ref{subsec:vulnerabilities_attacks}, with rigorous definitions in Chapter~\ref{ch:ch2}). In addition, we introduce metrics specific to individual threat classes.

\textbf{The probability of confidential information leakage ($P_{\text{leak}}$)} characterizes the risk of disclosure of sensitive data that could have been present in the training set $\DataDist_{\text{train}}$ or in the input query. The formalization can depend on the type of leakage. For example, the risk of extracting a specific fragment of the training data $d \in \DataDist_{\text{train}}$ can be estimated as
$$
P_{\text{extract}}(d) = \max_{x \in \Xspace_{\text{query}}} P(d \sqsubseteq f_\theta(x)),
$$
where $\Xspace_{\text{query}}$ is the set of possible attacker queries, and $d \sqsubseteq f_\theta(x)$ means that the fragment $d$ occurs verbatim, as a contiguous substring, in the model output $f_\theta(x)$, which occurs with some probability if the output is stochastic. A more general metric can use the mutual information $I(\fparam(X); S)$ between the model outputs and some set of secrets $S$ \cite{fang_privacy_2024}.

\subsection{Description of the Main Threat Types}
\label{subsec:threat_descriptions}

An analysis of the literature \cite{yifan_survey_2024}, the standards \cite{owasp_foundation_owasp_2023, nist_ai_rmf_2023}, and the results of our research makes it possible to identify the threats listed below.

\begin{enumerate}
\item \textbf{Prompt injection \cite{perez_ignore_2022, liu_prompt_2023}.}
\label{threat:prompt_injection}
The essence of the attack is the injection of additional instructions $\delta$ into the user query $x$. Having received the concatenation $\tilde{x} = x \oplus \delta$, the model ignores the system instructions $x_{\text{sys}}$ or performs unintended actions. The task of the attacker reduces to finding an injection that maximizes the probability of an undesirable response:
$$
\delta^* = \arg\max_{\delta : d(\delta) \le \epsilon_{\text{inj}}} P(f_\theta(x \oplus \delta) \in \badY).
$$
The function $d(\delta)$ reflects the complexity of the injection, and $\epsilon_{\text{inj}}$ bounds the attacker's budget. Special cases are the Comparative Undermining Attack (CUA) and the Justification Manipulation Attack (JMA), which target LLM-as-a-Judge, as well as various jailbreak techniques.

\item \textbf{Training data poisoning \cite{cina_machine_2024, shafahi_poison_2018}.}
\label{threat:data_poisoning}
The goals of injecting specially crafted examples $\DataDist_{\text{poison}}$ into the training set $\DataDist_{\text{train}}$ vary: from a general degradation of model quality to the creation of trojans and the amplification of bias. Training on the union $\DataDist_{\text{train}} \cup \DataDist_{\text{poison}}$ shifts the model parameters by $\Delta \theta = \theta' - \theta_{\text{clean}}$. The attacker seeks to maximize the loss function $\LossFunc$ on the target inputs:
$$
\DataDist_{\text{poison}}^* = \arg\max_{\mathcal{D}_{\text{poison}} : |\mathcal{D}_{\text{poison}}| \le N_{\text{poison}}} \Exp_{x_{\text{target}}} [\LossFunc(f_{\theta'}(x_{\text{target}}))],
$$
subject to the constraint $N_{\text{poison}}$ on the admissible number of poisoned examples.

\item \textbf{Trojans (backdoor attacks) \cite{gu_badnets_2017, li_comprehensive_2023}.}
\label{threat:backdoor}
Unlike general data poisoning, a backdoor attack implants a hidden trigger $x_{\text{trigger}}$ that activates malicious functionality. The formal success criterion is
$$
P(f_{\theta'}(x \oplus x_{\text{trigger}}) = y_{\text{target}}) \approx 1 \quad \text{and} \quad P(f_{\theta'}(x') = y_{\text{target}}) \approx 0
$$
for arbitrary $x$ and $x'$ that do not contain $x_{\text{trigger}}$. The model behaves normally on ordinary data, and precisely this makes the detection and reverse engineering of the trigger so difficult. The issue is considered in detail in item~4 of Section~\ref{subsec:vulnerabilities_attacks}.

\item \textbf{Model inversion \cite{veale_algorithms_2018, fredrikson_model_2015}.}
\label{threat:model_inversion}
Model inversion is dangerous because it can disclose personal data, medical records, and other confidential information from the training set. The attacker analyzes the outputs of the model: probabilities, embeddings, generated text. Based on this analysis, the attacker reconstructs inputs that were present in $\DataDist_{\text{train}}$ with high probability. Formally, the task is written as
$$
x^* = \arg\max_{x \in \Xspace} \text{Confidence}(x \in \DataDist_{\text{train}} | f_\theta(x)),
$$
where the metric $\text{Confidence}(\cdot)$ is based on output characteristics: low perplexity or high generation probability.

\item \textbf{Membership inference \cite{shokri_membership_2017}.}
\label{threat:membership_inference}
Unlike inversion, which aims to reconstruct the data themselves, a membership inference attack solves a narrower task: establishing whether a specific fragment $x$ belongs to the training set $\DataDist_{\text{train}}$. The answer is given by a classifier $g(f_\theta(x)) \to \{ \text{member}, \text{non-member} \}$ that relies on the values of the loss function or on the output probabilities. The accuracy of this classifier serves as the measure of attack success.

\item \textbf{Adversarial examples \cite{zou_universal_2023, goodfellow_explaining_2014}.}
\label{threat:adversarial_examples}
The task of finding an adversarial example $x' = x \oplus \delta$ is formulated as the minimization of the perturbation while preserving the malicious effect:
$$
\delta^* = \arg\min_{\delta} d(\delta) \quad \text{subject to} \quad f_\theta(x \oplus \delta) \in \badY \text{ and } d(\delta) \le \epsilon.
$$
The perturbation measure $d(\delta)$ is bounded by the budget $\epsilon$. A human, as a rule, does not notice the differences between $x$ and $x'$, yet the output of the model changes radically. The consequences range from the generation of toxic content to a complete bypass of safety filters. Gradient methods, including GCG, generate such examples efficiently and apply to LLM-as-a-Judge architectures.

\item \textbf{Denial of service (DoS) \cite{zhang_overthink_2024}.}
\label{threat:dos}
The goal of attacks on availability is the exhaustion of the computational resources of an LLM service: GPU, CPU, main memory, API limits. The attacker constructs queries that maximize the processing time $T(x)$ or the resource consumption $R(x)$. Exceeding the thresholds $T_{\text{max}}$ and $R_{\text{max}}$ leads to a denial of service for legitimate users or to a general degradation of the service.

\item \textbf{Model theft/exfiltration \cite{krishna_model_2023, tramer_stealing_2016}.}
\label{threat:model_theft}
An attacker without access to the parameters $\theta$ or the architecture of the model can recreate a functionally equivalent copy $f'_{\theta'}$ through repeated API queries. The similarity of the original and the copy on a test set $\mathcal{X}_{\text{test}}$ is quantified by the metric
$$
S(f_\theta, f'_{\theta'}) = \frac{1}{|\mathcal{X}_{\text{test}}|} \sum_{x \in \mathcal{X}_{\text{test}}} \mathbb{I}(f_\theta(x) \approx f'_{\theta'}(x)),
$$
where $\approx$ means a match or semantic closeness of the outputs.

\item \textbf{Insecure output handling \cite{owasp_foundation_owasp_2023, owasp_llm02_2023}.}
\label{threat:insecure_output}
The absence of validation and sanitization of LLM outputs before they are passed to other systems opens a vector for cascading attacks. Generated SQL code can be executed on a database server, and a JavaScript fragment can be rendered in a web interface. The result is SQL injection, cross-site scripting (XSS), and similar vulnerabilities. The risk of generating exploitable code is characterized by the probability $P(f_\theta(x) \in \mathcal{Y}_{\text{exploit}})$, where $\mathcal{Y}_{\text{exploit}}$ denotes the set of malicious outputs.

\item \textbf{Training data extraction \cite{carlini_extracting_2021}.}
\label{threat:data_extraction}
LLMs tend to memorize fragments of the training set, which makes it possible to reconstruct verbatim fragments $d$ from $\DataDist_{\text{train}}$. The task of the attacker is to find a query $x^*$ that maximizes the extraction metric:
\begin{gather*}
x^* = \arg\max_{x \in \Xspace} S_{\text{extract}}(f_\theta, x), \\
S_{\text{extract}}(f_\theta, x) = \max_{d \in \DataDist_{\text{train}}} \frac{|\text{longest\_common\_substring}(f_\theta(x), d)|}{|\text{relevant\_part}(f_\theta(x))|}.
\end{gather*}
The denominator substantially affects the interpretation of the results. Normalization by the length of the entire response underestimates the metric when short fragments, for example personal data, are extracted. In such cases, it is more appropriate to normalize by the length of the presumed secret.

\item \textbf{Sensitive information disclosure \cite{wang_privacy_leakage_2023}.}
\label{threat:info_disclosure}
Unlike targeted extraction, the leakage of confidential information $\mathcal{C}$ is often unintentional. Without a specially constructed query, the model can disclose personal data, trade secrets, and other sensitive information from the training set or from the current dialogue context. For quantitative evaluation, we use the mutual information $I(f_\theta(X); \mathcal{C}) = H(\mathcal{C}) - H(\mathcal{C}|f_\theta(X))$ \cite{carlini_extracting_2021}. A high value of $I$ indicates a substantial disclosure of information about $\mathcal{C}$ through the model outputs.

\item \textbf{Bias amplification \cite{bias_amplification_2023, zhao_men_2017}.}
\label{threat:bias_amplification}
Models do not merely reproduce the social biases of the training data. They can amplify them and generate discriminatory results for particular demographic groups. The degree of amplification is expressed by the coefficient $A_{\text{bias}}(f_\theta, \mathcal{D}) = B_{\text{output}}(f_\theta, \mathcal{D}) / B_{\text{input}}(\mathcal{D})$, which relates the bias of the outputs $B_{\text{output}}$ to the bias of the source data $B_{\text{input}}$. Here, for demographic groups $g_1, g_2 \in \mathcal{G}$, $B_{\text{output}}(f_\theta, \mathcal{D}) = \text{Dist}(P(Y|X \in g_1), P(Y|X \in g_2))$ with $X \sim \mathcal{D}$ and $Y = f_\theta(X)$, and $B_{\text{input}}(\mathcal{D})$ is the same quantity with $Y$ taken from the data $\mathcal{D}$. The value $A_{\text{bias}} > 1$ signals amplification.

\item \textbf{Supply chain vulnerabilities \cite{owasp_foundation_owasp_2023, nist_sp800_161_2022}.}
\label{threat:supply_chain}
A modern LLM system depends on numerous components: pretrained models, datasets, fine-tuning libraries, third-party plugins, and infrastructure elements \cite{gu_badnets_2017}. The compromise of any link $c_i$ in the chain $\mathcal{S} = \{c_1, \dots, c_n\}$ puts the entire system at risk. If the components are assumed independent, the probability of system vulnerability is $P(V_{\text{system}}) \approx 1 - \prod_{i=1}^{n} (1 - P(V_i))$. This quantity grows monotonically with $n$. For equal individual risks $P(V_i) = p$, it is approximately $np$ while $np \ll 1$ and approaches one when $n$ becomes large compared with $1/p$. Therefore, even small individual risks accumulate in a long supply chain.

\item \textbf{Jailbreaking \cite{wei_jailbreak_2023}.}
\label{threat:jailbreaking}
The results of the SaTML CTF 2024 showed that even a multi-layer defense does not guarantee robustness to jailbreaking attacks. Such attacks overcome safety alignment \cite{ouyang_training_2022} and force the model to generate prohibited content. Formally, the attacker seeks a query $x_{\text{jb}}$ that maximizes the probability of an undesirable output that the safety filter lets through:
$$
x_{\text{jb}} = \arg\max_{x \in \Xspace} P\bigl(f_\theta(x) \in \badY \,\wedge\, \text{SafetyFilter}(f_\theta(x)) = \text{allow}\bigr).
$$
The jailbreaking robustness metric takes the form $R_{\text{jb}}(f_\theta) = 1 - \max_{x \in \mathcal{X}_{\text{jb\_attacks}}} P(f_\theta(x) \in \badY)$ and is computed on a set of known jailbreaking queries $\mathcal{X}_{\text{jb\_attacks}}$ \cite{jin_jailbreakzoo_2024}.

\item \textbf{Reproducibility issues \cite{belz_systematic_2021}.}
\label{threat:reproducibility}
LLM generation is stochastic: sampling and infrastructure changes cause repeated calls on the same input $x$ to give different responses $\{y_1, \dots, y_k\}$. The divergence measure $NRep(f_\theta, x) = D(\{y_1, \dots, y_k\} | y_i \sim p_\theta(\cdot|x))$ captures the degree of non-reproducibility. The consequences for security are twofold: testing and debugging become harder, and vulnerabilities can manifest sporadically, which masks the real risk.

\item \textbf{Insecure plugins \cite{greshake_not_2023, owasp_foundation_owasp_2023}.}
\label{threat:insecure_plugins}
Integration with third-party plugins creates risks when the principle of least privilege is violated. When the privilege set of a plugin $\text{Priv}(p)$ exceeds the minimum necessary set $\text{Priv}_{\text{min}}(p)$, the excess metric $E(p) = |\text{Priv}(p) \setminus \text{Priv}_{\text{min}}(p)| / |\text{Priv}(p)|$ quantifies the degree of risk. A compromised plugin can act as a channel for bypassing content filters and gives rise to a game between the attacker and the defender \cite{security_games_2023}.

\item \textbf{Detection evasion (model evasion) \cite{li_large_2023, papernot_practical_2017}.}
\label{threat:evasion}
A malicious query $x_{\text{mal}}$ is modified into $x_{\text{evasion}}$, which minimizes the score $\text{DetectionScore}(D(x_{\text{evasion}}))$ while preserving the condition $f_\theta(x_{\text{evasion}}) \in \badY$. The attack targets the defense perimeter of the system and reduces the effectiveness of content filters.
\end{enumerate}

Of the threats classified above, this thesis analyzes prompt injection, jailbreak attacks, and trojans further, both formally and experimentally. Attacks on agentic LLM systems that interact with external tools form a separate class. In the OWASP classification, they relate to excessive agency (LLM08) and insecure plugins (LLM07). Chapter~\ref{ch:ch5} considers them in detail for MCP.

\subsection{Threat Categorization}

The threats described above are grouped by six security aspects (Fig.~\ref{fig:threat_taxonomy_h}).

\textbf{Data and model integrity.} Training data poisoning (LLM03 in OWASP), backdoor attacks, and adversarial examples distort the parameters or the behavior of the model. The local vulnerability $\Vulfunc_{\text{local}}(f_\theta)$ partially reflects this aspect when incorrect responses are included in $\badY$.

\textbf{Confidentiality} is affected by a spectrum of threats: model inversion, membership inference, training data extraction, and sensitive information disclosure, all of which correspond to LLM06, as well as system prompt disclosure \cite{hui_pleak_2024}. The quantitative evaluation relies on the metric $P_{\text{leak}}$ and on the mutual information $I(f_\theta(X); \mathcal{C})$.

The \textbf{availability} aspect is represented by denial of service (LLM04), which disrupts the operation of the service for legitimate users.

The \textbf{governance and control} aspect covers the widest list of threats: prompt injection (LLM01), insecure output handling (LLM02), supply chain vulnerabilities (LLM05), insecure plugin design (LLM07), excessive agency (LLM08), model theft (LLM10), jailbreaking, and detection evasion. Minimizing $E(p)$ for plugins directly reduces the risks of this category.

\textbf{Ethics and fairness} are affected by bias amplification and by the generation of socially unacceptable content through jailbreaking, prompt injection, and adversarial examples. The metrics $A_{\text{bias}}$ and $\Vulfunc_{\text{local}}(f_\theta)$, with an appropriate definition of $\badY$, can be used for the evaluation.

\textbf{Reliability and reproducibility} depend on the non-determinism of generation, which affects the predictability of behavior and is captured by the metric $NRep(f_\theta, x)$.

Some threats affect several categories at once: prompt injection can lead to data leakage, integrity violation, and the generation of malicious content.

\begin{figure}[ht!]
\centering
\resizebox{\textwidth}{!}{%
\begin{tikzpicture}[
    root/.style={draw, rounded corners=4pt, fill=black!80, text=white,
        font=\bfseries\small, minimum height=1.2cm, minimum width=2.5cm, align=center},
    category/.style={draw, rounded corners=3pt, font=\bfseries\scriptsize,
        minimum height=0.9cm, text width=4.0cm, align=center, anchor=west},
    threatbox/.style={draw, rounded corners=2pt, fill=white, font=\tiny,
        text width=5.5cm, align=left, inner sep=5pt, anchor=west},
    cat1/.style={category, fill=red!25},
    cat2/.style={category, fill=blue!20},
    cat3/.style={category, fill=orange!25},
    cat4/.style={category, fill=green!20},
    cat5/.style={category, fill=purple!20},
    cat6/.style={category, fill=yellow!30},
    arrow/.style={->, >=stealth, thick, gray!70},
]

\node[root] (root) {LLM security\\threats};

\node[cat1] (c1) at (4.0, 5.0) {Data and model integrity};
\node[cat2] (c2) at (4.0, 3.0) {Confidentiality};
\node[cat3] (c3) at (4.0, 1.2) {Availability};
\node[cat4] (c4) at (4.0, -1.2) {Governance and control};
\node[cat5] (c5) at (4.0, -3.8) {Ethics and fairness};
\node[cat6] (c6) at (4.0, -5.0) {Reliability and reproducibility};

\node[threatbox, fill=red!8] (t1) at (8.5, 5.0) {
    \textbullet\ Training data poisoning\\
    \textbullet\ Backdoor attacks (trojans)\\
    \textbullet\ Adversarial examples
};

\node[threatbox, fill=blue!8] (t2) at (8.5, 3.0) {
    \textbullet\ Model inversion\\
    \textbullet\ Membership inference\\
    \textbullet\ Training data extraction\\
    \textbullet\ Sensitive information disclosure
};

\node[threatbox, fill=orange!8] (t3) at (8.5, 1.2) {
    \textbullet\ Denial of service (DoS)
};

\node[threatbox, fill=green!8] (t4) at (8.5, -1.2) {
    \textbullet\ Prompt injection\\
    \textbullet\ Insecure output handling\\
    \textbullet\ Supply chain vulnerabilities\\
    \textbullet\ Insecure plugins\\
    \textbullet\ Model theft\\
    \textbullet\ Jailbreaking\\
    \textbullet\ Detection evasion
};

\node[threatbox, fill=purple!8] (t5) at (8.5, -3.8) {
    \textbullet\ Bias amplification
};

\node[threatbox, fill=yellow!12] (t6) at (8.5, -5.0) {
    \textbullet\ Reproducibility issues
};

\draw[arrow] (root.east) -- ++(0.3,0) -- (c1.west);
\draw[arrow] (root.east) -- ++(0.3,0) -- (c2.west);
\draw[arrow] (root.east) -- ++(0.3,0) -- (c3.west);
\draw[arrow] (root.east) -- ++(0.3,0) -- (c4.west);
\draw[arrow] (root.east) -- ++(0.3,0) -- (c5.west);
\draw[arrow] (root.east) -- ++(0.3,0) -- (c6.west);

\draw[arrow] (c1.east) -- ++(0.3,0) -- (t1.west);
\draw[arrow] (c2.east) -- ++(0.3,0) -- (t2.west);
\draw[arrow] (c3.east) -- ++(0.3,0) -- (t3.west);
\draw[arrow] (c4.east) -- ++(0.3,0) -- (t4.west);
\draw[arrow] (c5.east) -- ++(0.3,0) -- (t5.west);
\draw[arrow] (c6.east) -- ++(0.3,0) -- (t6.west);

\end{tikzpicture}%
}
\caption{Taxonomy of LLM security threats: 6 categories, 17 threat types.}
\label{fig:threat_taxonomy_h}
\end{figure}

\subsection{Mapping to Security Frameworks}

The choice of defense measures requires mapping the identified threats to applied frameworks: OWASP Top 10 for LLM Applications and NIST SP 800-53. Such a mapping is given below. It sets the context for the choice of the research subject.

\subsubsection{Threat Classification According to OWASP Top 10 for LLM (2023)}
Table~\ref{tab:owasp_llm_ru} presents the correspondence between the threats described in this section and the classification of OWASP Top 10 for LLM Applications, version 2023 \cite{owasp_foundation_owasp_2023}. This classification focuses on practical risks in the development and operation of LLM-based applications.

{\small
\begin{longtable}{|>{\raggedright}p{1.3cm}|>{\raggedright}p{2.8cm}|>{\raggedright}p{4.5cm}|>{\raggedright\arraybackslash}p{3.7cm}|}
\caption{Threat classification according to OWASP Top 10 for LLM (2023) and its relation to the detailed threat list}\label{tab:owasp_llm_ru}\\
\hline
\textbf{ID} & \textbf{Threat (OWASP)} & \textbf{Description} & \textbf{Threats from Sec.~\ref{subsec:threat_descriptions}} \\ \hline
\endfirsthead
\multicolumn{4}{c}{{\tablename\ \thetable{} (continued)}} \\
\hline
\textbf{ID} & \textbf{Threat (OWASP)} & \textbf{Description} & \textbf{Threats from Sec.~\ref{subsec:threat_descriptions}} \\ \hline
\endhead
\hline \multicolumn{4}{r}{{Continued on next page}} \\ \hline
\endfoot
\hline
\endlastfoot
LLM01 & Prompt Injection & Injection of malicious instructions that override the behavior of the model. & 1.~Prompt injection, 14.~Jailbreaking \\ \hline
LLM02 & Insecure Output Handling & Insufficient validation and sanitization of LLM outputs before use in other systems. & 9.~Insecure output handling \\ \hline
LLM03 & Training Data Poisoning & Manipulation of the training data to compromise the model. & 2.~Data poisoning, 3.~Backdoor attacks \\ \hline
LLM04 & Model Denial of Service (Model DoS) & Attacks that lead to resource exhaustion and model unavailability. & 7.~Denial of service (DoS) \\ \hline
LLM05 & Supply Chain Vulnerabilities & Use of vulnerable components: models, dependencies, data. & 13.~Supply chain vulnerabilities \\ \hline
LLM06 & Sensitive Information Disclosure & Unintended disclosure of confidential data by the model. & 11.~Sensitive information disclosure, 4.~Model inversion, 5.~Membership inference, 10.~Data extraction \\ \hline
LLM07 & Insecure Plugin Design & Security flaws in LLM plugins. & 16.~Insecure plugins \\ \hline
LLM08 & Excessive Agency & Granting an LLM excessive permissions. & 16.~Insecure plugins, overall system design \\ \hline
LLM09 & Overreliance & Excessive trust in LLM outputs without verification. & 6.~Adversarial examples, 12.~Bias, 15.~Reproducibility \\ \hline
LLM10 & Model Theft & Unauthorized copying or extraction of the model. & 8.~Model theft \\ \hline
\end{longtable}
}

\subsubsection{Mapping to NIST SP 800-53 Security Controls}
Table~\ref{tab:nist_mapping_ru} maps the threats to the relevant control families from the NIST SP 800-53 Rev. 5 catalog \cite{nist_sp800_53_2020}. This makes it possible to integrate LLM risk management into the overall information security management system of an organization. The choice of specific controls must be based on a risk assessment.

{\small
\begin{longtable}{|>{\raggedright}p{3.0cm}|>{\raggedright}p{4.0cm}|>{\raggedright\arraybackslash}p{5.3cm}|}
\caption{Mapping of LLM threats to NIST SP 800-53 Rev. 5 control families}\label{tab:nist_mapping_ru}\\
\hline
\textbf{Threat} & \textbf{NIST control families} & \textbf{Example controls} \\ \hline
\endfirsthead
\multicolumn{3}{c}{{\tablename\ \thetable{} (continued)}} \\ \hline
\textbf{Threat} & \textbf{NIST control families} & \textbf{Example controls} \\ \hline
\endhead
\hline \multicolumn{3}{r}{{Continued on next page}} \\ \hline
\endfoot
\hline
\endlastfoot
1.~Prompt injection & SI, AC & SI-10, AC-6, AC-3 \\ \hline
2.~Data poisoning & SI, SR & SI-7, SR-3, SR-6 \\ \hline
3.~Backdoor attacks & SI, SR & SI-7, SR-4, SR-5 \\ \hline
4.~Model inversion & AC, PT & AC-3, AC-4, PT-3, PT-4 \\ \hline
5.~Membership inference & PT, SI, AU & PT-3, SI-4, AU-6 \\ \hline
6.~Adversarial examples & SI, AT & SI-4, SI-10, AT-2 \\ \hline
7.~Denial of service & SC, CP & SC-5, CP-2 \\ \hline
8.~Model theft & AC, SC, IA & AC-3, AC-4, SC-28, IA-2 \\ \hline
9.~Insecure output handling & SI & SI-10, SI-11 \\ \hline
10.~Data extraction & PT, AC, AU & PT-3, AC-3, AU-6 \\ \hline
11.~Sensitive information disclosure & PT, AC, SC & PT-2, PT-3, AC-3, SC-8 \\ \hline
12.~Bias amplification & PM, PL, SI & PM-9, PL-8, SI-19 \\ \hline
13.~Supply chain vulnerabilities & SR & SR-2, SR-3, SR-4, SR-5 \\ \hline
14.~Jailbreaking & SI, AC & SI-4, SI-7, AC-3 \\ \hline
15.~Reproducibility issues & CM, SI & CM-6, SI-2 \\ \hline
16.~Insecure plugins & SI, AC, SR & SI-7, AC-6, SR-11 \\ \hline
17.~Detection evasion & SI, RA & SI-4, RA-5, RA-10 \\ \hline
\end{longtable}
}

\subsection{Illustrative Threat Examples}
\label{subsec:cases}
Three examples illustrate the threat classes considered above with the results of controlled experiments and LLM security competitions.

\textbf{1. Vulnerability of LLM-as-a-Judge to targeted prompt injection attacks.}
\begin{itemize}
    \item \textit{Description:} LLM-as-a-Judge architectures are susceptible to prompt injection attacks. The Comparative Undermining Attack (CUA) changes the decision of the judge directly. The Justification Manipulation Attack (JMA) distorts the justification of the verdict. Both attacks are implemented with optimization methods, in particular Greedy Coordinate Gradient (GCG).
    \item \textit{Effectiveness:} On the Gemma, Llama, GPT-4, and Claude-3-Opus models with the MT-Bench dataset and Kaggle tasks, CUA and the Adaptive Search-Based Attack (ASA), defined in Chapter~\ref{ch:ch3}, achieved a high success rate, especially on open-weight models. Quantitative results are given in Chapter~\ref{ch:ch4} (Table~\ref{tab:llm_judge_asr_summary}).
\end{itemize}

\textbf{2. Difficulties of detecting trojans in LLMs.}
The Trojan Detection Challenge 2023 studied the reverse engineering of trojans in the Pythia model. The reconstruction of the true triggers proved to be a fundamentally hard task, whereas the generation of surrogate triggers that cause the desired output was achieved much more easily. These results indicate the priority of preventive measures. A detailed analysis of the results is given in Chapter~\ref{ch:ch4} (Section~\ref{subsec:trojan_detection_results_ch4_part4}).

\textbf{3. Bypassing a multi-layer defense with prompt injection.}
The SaTML CTF 2024 competition targeted attacks on LLMs, in particular GPT-3.5 and Llama 2, equipped with a three-layer defense: system instructions, a Python filter, and an LLM filter. Participants developed and applied various prompt injection techniques, including multi-turn attacks and specific wordings for bypassing filters. The multi-layer defense configurations could be bypassed. The success of an attack was determined by the sophistication of the prompt and the specifics of the filters. Defense strategies need constant adaptation and supplementation with dynamic detection methods.

\subsection{Analysis of Threat Interrelations}
\label{subsec:threat_relationships_analysis}

The threat types for LLMs are interrelated: countering one threat affects the vulnerability to others. Rigorous theorems valid for all LLMs are difficult to formulate, so below we give well-founded assumptions and principles supported by the literature.

\textbf{Principle 1: Relation between data extraction and sensitive information disclosure.}
Training data extraction (\ref{threat:data_extraction}) is a special case of sensitive information disclosure (\ref{threat:info_disclosure}). Data extraction attacks aim to reconstruct fragments of $\DataDist_{\text{train}}$ \cite{carlini_extracting_2021}, whereas leakage can also occur passively \cite{mireshghallah_can_2023}.
\textit{Assumption:} The success of targeted data extraction attacks $S_{\text{extract}}$, defined in Section~\ref{subsec:threat_descriptions}, serves as a lower bound for the more general risk of confidential information leakage $P_{\text{leak}}$ under comparable definitions of the metrics. Defense measures against memorization, in particular differential privacy methods \cite{dwork_algorithmic_2014, abadi_deep_2016}, reduce the risks of both threats \cite{carlini_extracting_2021}.

\textbf{Principle 2: Relation between jailbreaking, injections, and detection evasion.}
Prompt injection techniques (\ref{threat:prompt_injection}) often serve as a means to implement jailbreaking attacks (\ref{threat:jailbreaking}) aimed at bypassing the built-in defense mechanisms of the model \cite{wei_jailbreak_2023}. At the same time, a successful jailbreaking attack does not necessarily mean an evasion of the detection system (\ref{threat:evasion}): an external detector $D$ can classify the malicious query $x_{\text{jb}}$ as dangerous.
The difficulty of creating effective jailbreaking queries and the difficulty of detecting them are related, but advanced attacks are designed to bypass both the internal mechanisms of the model and external detectors simultaneously \cite{zou_universal_2023}. Countering them therefore requires both an improvement of robust alignment \cite{cao_defending_2023} and the refinement of external detectors \cite{robey_smoothllm_2023}.

\textbf{Principle 3: Effect of reproducibility on security.}
The dependence between LLM non-determinism (\ref{threat:reproducibility}) and security is not monotonic \cite{spirling_replication_2023}. On the one hand, the stochasticity of generation hinders the creation of precise adversarial examples (\ref{threat:adversarial_examples}) and jailbreaking attacks (\ref{threat:jailbreaking}) that require a strictly defined response \cite{chen_llm_2024}. On the other hand, it also complicates the testing and verification of defense mechanisms. Rare but dangerous vulnerabilities manifest stochastically and are therefore harder to detect. The reliability of attack detectors (\ref{threat:evasion}) also drops when the reproducibility of system behavior is low. The choice of the optimal level of determinism is dictated by the specific application and the threat profile.

\textbf{Principle 4: Relation between bias and confidentiality.}
Bias amplification (\ref{threat:bias_amplification}) and sensitive information disclosure (\ref{threat:info_disclosure}) are related through a common mechanism: the memorization of correlations in the training data. A model that has learned associations between sensitive demographic attributes and other data in $\DataDist_{\text{train}}$ simultaneously amplifies biases and becomes vulnerable to membership inference attacks (\ref{threat:membership_inference}) for individuals from the corresponding groups \cite{blodgett_language_2020, mehrabi_survey_2021}. The dependence is bidirectional. Weakening the association of the model with sensitive attributes reduces bias and increases confidentiality. Conversely, strict confidentiality measures, in particular DP \cite{dwork_algorithmic_2014}, reduce accuracy disproportionately on underrepresented subgroups and can thereby amplify bias \cite{bagdasaryan_differential_2020}.

\textbf{Principle 5: Cascading risks of the supply chain and plugins.}
Vulnerabilities in the supply chain (\ref{threat:supply_chain}) directly affect the security of the integrated components, including plugins (\ref{threat:insecure_plugins}). The compromise of a base model or library entails the compromise of all dependent plugins. An insecure plugin with excessive privileges, that is, with a high value of $E(p)$ from Section~\ref{subsec:threat_descriptions}, creates an attack vector against the base model or other system components.
The risks of these two categories compound each other, which requires a systematic approach: privilege control, formalized procedures for integrating third-party extensions, and regular audits of the components in use \cite{owasp_foundation_owasp_2023}.

\textbf{Principle 6: Fundamental trade-off.}
Many aspects of LLM security and functionality trade off against each other. For example:
\begin{itemize}
    \item \textit{Utility vs.\ safety/censorship:} Overly strict defense mechanisms aimed against jailbreaking and toxicity can reduce the utility of the model and its ability to fulfill legitimate requests, which is known as the ``lobotomy effect'' \cite{wei_jailbreak_2023}.
    \item \textit{Confidentiality vs.\ utility:} Confidentiality methods, for example DP, often reduce the accuracy and overall performance of the model \cite{abadi_deep_2016, bagdasaryan_differential_2020}.
    \item \textit{Robustness vs.\ accuracy:} Methods for improving robustness to adversarial attacks, for example adversarial training, can reduce accuracy on clean data, in some settings substantially \cite{madry_towards_2017, tsipras_robustness_2019}.
\end{itemize}
In practice, these trade-offs limit the simultaneous maximization of all target properties of an LLM system. The specific application scenario determines the prioritization among security, utility, confidentiality, and computational efficiency.

LLM security does not reduce to countering isolated threats: defense measures must account for the interaction between threats and for the system properties of the model in the context of its deployment.

\subsection{Conclusions and Risk Mitigation Recommendations}

The threats affect all stages of working with a model, from data preparation to deployment. Countering them requires a combination of technical, organizational, and methodological measures.

\begin{itemize}
    \item \textbf{Preventive measures at the development and training stage.}
        Auditing and filtering of the training data reduce the risks of poisoning (LLM03, \ref{threat:data_poisoning}, \ref{threat:backdoor}) \cite{wang_empirical_2023} and bias (\ref{threat:bias_amplification}) \cite{li_mbias_2024}. Privacy-enhancing methods, such as differentially private training with DP-SGD \cite{abadi_deep_2016} and federated learning \cite{mcmahan_communication-efficient_2017}, counter data leakage (LLM06, \ref{threat:model_inversion}, \ref{threat:membership_inference}, \ref{threat:data_extraction}, \ref{threat:info_disclosure}) \cite{dwork_algorithmic_2014}. Adversarial training \cite{madry_towards_2017} and safety alignment \cite{ouyang_training_2022, bai_constitutional_2022} increase robustness to adversarial attacks (\ref{threat:adversarial_examples}) and jailbreaking (\ref{threat:jailbreaking}). Control of the provenance and integrity of supply chain components (LLM05, \ref{threat:supply_chain}) is also a preventive measure \cite{kumar_securing_2023}.
    \item \textbf{Defense mechanisms at the operation stage.}
        At the input level, query validation and sanitization are applied to prevent injections (LLM01) \cite{liu_prompt_2023, liu_prompt_defense_2023}, together with the separation of data and instructions \cite{chen_struq_2024}. At the output level, filtering and sanitization are performed before use in other systems (LLM02, \ref{threat:insecure_output}) \cite{owasp_llm02_2023}. Access control and authentication of the model API protect against theft (LLM10) \cite{tramer_stealing_2016}. Resource monitoring and limits counter DoS (LLM04) \cite{zhang_crabs_2023}, and the principle of least privilege for plugins counters the LLM07 threat \cite{wang_trojaning_2023}. The detection of anomalous behavior and of attempts to bypass the defense (\ref{threat:evasion}) complements these measures \cite{zou_universal_2023, wei_jailbreak_2023}.
    \item \textbf{Continuous monitoring and risk management.}
        Red teaming and penetration testing \cite{perez_red_2022, ganguli_red_2022} reveal previously unknown attack vectors. Event log analysis and action auditing \cite{raji_auditing_2023} ensure prompt incident detection. The model and the defense mechanisms must be updated regularly in response to changes in the threat landscape. Quantitative measurement of the security level with the metrics $R_{stab}(f_\theta)$ and $\Vulfunc_{\text{local}}(f_\theta)$ makes it possible to track the dynamics \cite{liang_holistic_2023}.
\end{itemize}

Formal risk assessment methods combined with the control measures of the NIST and OWASP standards make it possible to build a multi-layer defense system for LLMs under real operating conditions. A systematic analysis of the vulnerabilities of agentic coding systems, covering 42 attack techniques and 18 defense mechanisms, is presented in~\cite{maloyan2025agentic}. The subsequent parts of the thesis are devoted to the experimental validation of specific methods for countering the threats considered here.

\subsection{Formalization of the Threat Space for Large Language Models}
\label{subsec:threat_space_formalization}

The systematization of threats rests on a formal structure that describes the components and their interrelations. Abstracting from specific attack implementations makes it possible to reveal general patterns.

\begin{definition}
\label{def:threat_space}
\textbf{LLM threat space.}
The threat space for a large language model $f_\theta: \Xspace \to \ProbSpace{\Yspace}$, in the notation of Section~\ref{sec:llm_threats}, is defined as the tuple $\mathcal{T} = (\mathcal{A}^{\mathrm{type}}, \mathcal{V}, \mathcal{I}_{AV}, \mathcal{C}_{VV})$, where:
\begin{itemize}
    \item $\mathcal{A}^{\mathrm{type}}$ is the set of attack \textit{types}, for example Prompt Injection, Data Poisoning, CUA, JMA, and others, corresponding to the classes from Section~\ref{subsec:threat_descriptions}. The superscript ${}^{\mathrm{type}}$ emphasizes that this is a classification of attacks. In Chapter~\ref{ch:ch2}, the symbol $\mathfrak{A}$ denotes a family of attacks as functions.
    \item $\mathcal{V}$ is the set of \textit{specific} vulnerabilities of the model or of the system in which it operates, for example: $v_1=$ ``insufficient sanitization of user input in the system prompt'', $v_2=$ ``presence of memorized PII in the model parameters'', $v_3=$ ``vulnerability of LLM-as-a-Judge to justification manipulation''.
    \item $\mathcal{I}_{AV} \subseteq \mathcal{A}^{\mathrm{type}} \times \mathcal{V}$ is the relation ``attack exploits vulnerability''. It reflects which specific vulnerabilities ($v \in \mathcal{V}$) can be exploited by attacks of a certain type ($a \in \mathcal{A}^{\mathrm{type}}$). For example, (Prompt Injection, $v_1$) $\in \mathcal{I}_{AV}$.
    \item $\mathcal{C}_{VV} \subseteq \mathcal{V} \times \mathcal{V}$ is the relation of causality or dependence between vulnerabilities. It means that the exploitation or presence of a vulnerability $v_i$ can lead to the emergence or exploitability of a vulnerability $v_j$. For example, ($v_{\text{supply\_chain}}$, $v_{\text{backdoor}}) \in \mathcal{C}_{VV}$, where $v_{\text{supply\_chain}}$ is a supply chain vulnerability, and $v_{\text{backdoor}}$ is the possibility of implanting a trojan.
\end{itemize}
\end{definition}

The cardinality of the sets $\mathcal{A}^{\mathrm{type}}$ and $\mathcal{V}$ depends on the specifics of the model, the architecture, the data, and the deployment conditions, and can be large. The relation $\mathcal{C}_{VV}$ reveals cascading effects and root causes of vulnerabilities, and the construction of the dependency graph $G = (\mathcal{V}, \mathcal{C}_{VV})$ is a separate security audit task.

A quantitative risk assessment requires a notion that accounts for both the attack success probability and the potential damage.

\begin{definition}
\label{def:attack_risk}
\textbf{Attack risk assessment.}
The risk assessment for a model $f_\theta$ with respect to an attack type $a \in \mathcal{A}^{\mathrm{type}}$ is defined as
$$
\text{Risk}(a, f_\theta) = \sum_{v \in \mathcal{V} : (a, v) \in \mathcal{I}_{AV}} P_{\text{exploit}}(v | a, f_\theta) \cdot \text{ImpactValue}(v),
$$
where $P_{\text{exploit}}(v | a, f_\theta)$ is the probability of successful exploitation of the vulnerability $v$ through an attack of type $a$ against the model $f_\theta$, in particular the ASR from the experiments of this work, and $\text{ImpactValue}(v)$ is an estimate of the potential damage from exploiting the vulnerability $v$, for example on the CVSS scale \cite{melnick_cvss_2023}, which can be normalized to the range $[0,1]$ for the purposes of comparative risk analysis. The total risk over a set of attacks $\mathcal{A}' \subseteq \mathcal{A}^{\mathrm{type}}$ can be computed as $\sum_{a \in \mathcal{A}'} w(a) \cdot \text{Risk}(a, f_\theta)$, where $w(a)$ is the significance weight of the attack, determined, for example, from the frequency or the presumed motivation of attackers.
\end{definition}

Estimating $P_{\text{exploit}}$ is a complex task that often requires specialized tests, in particular red teaming \cite{perez_red_2022} or the analysis of competition data, or relies on expert judgment and the statistics of known incidents. The estimate $\text{ImpactValue}(v)$ depends on the application context of the model and on the criticality of the data or functions affected by the vulnerability.

The order in which defense measures are prioritized is determined by the vulnerability dependency graph.

\begin{principle}[On the prioritization of vulnerability remediation]
\label{princ:vulnerability_prioritization}
Let $G = (\mathcal{V}, \mathcal{C}_{VV})$ be the vulnerability dependency graph, where $\mathcal{V}$ is the set of vulnerabilities, and $\mathcal{C}_{VV}$ is the dependency relation. The application of defense measures aimed at remediating vulnerabilities must account for the structure of this graph.
\begin{enumerate}
    \item \textbf{Priority to root vulnerabilities:} In an acyclic dependency graph, vulnerabilities without incoming edges, that is, root vulnerabilities, are the root causes of other vulnerabilities. Their remediation has a cascading effect: it prevents the exploitation of each dependent vulnerability that arises or becomes exploitable only through this root. A dependent vulnerability that is also reachable from another root, or that is present independently, requires its own remediation.
    \item \textbf{Breaking dependency cycles:} In a graph that contains cycles, priority should be given to remediating vulnerabilities in the strongly connected components that are sources, that is, have no incoming edges, in the condensation graph. Remediating even one vulnerability in a cycle breaks it, which can significantly reduce the overall risk.
\end{enumerate}
The principle forms a methodological basis for planning defense measures. It directs resources to the remediation of the vulnerabilities that are most critical from the standpoint of structural dependencies.
\end{principle}

\begin{figure}[ht]
  \centering
  \begin{subfigure}[t]{0.45\textwidth}
    \centering
    \begin{tikzpicture}[
        >=stealth', shorten >=1pt, auto,
        node distance=1.4cm,
        vulnnode/.style={circle, draw, minimum size=7mm, font=\small},
        baseline=(v1.north)
      ]
      \node[vulnnode] (v1) {$v_1$};
      \node[vulnnode] (v2) [right of=v1] {$v_2$};
      \node[vulnnode] (v3) [below of=v1] {$v_3$};
      \node[vulnnode] (v4) [right of=v3] {$v_4$};
      \node[vulnnode] (v5) [right of=v2] {$v_5$};

      \draw[->] (v1) -- (v2);
      \draw[->] (v1) -- (v3);
      \draw[->] (v2) -- (v5);
      \draw[->] (v3) -- (v4);
      \draw[->] (v4) -- (v5);

      \node[font=\footnotesize, align=left] at (1.4,-3.2) {%
        Forward: $(v_1,v_3,v_2,v_4,v_5)$\\
        Reverse: $(v_5,v_4,v_2,v_3,v_1)$};

      \node at (1.4,-4.5) {\phantom{Condensation DAG $G^{\mathrm{SCC}}$}};
    \end{tikzpicture}
    \caption{A DAG and two variants of the remediation order}
    \label{fig:acyclic}
  \end{subfigure}
  \hfill
  \begin{subfigure}[t]{0.45\textwidth}
    \centering
    \begin{tikzpicture}[
        >=stealth', shorten >=1pt, auto,
        node distance=1.4cm,
        vulnnode/.style={circle, draw, minimum size=7mm, font=\small},
        sccnode/.style={rectangle, draw, rounded corners=2pt, inner sep=2pt, font=\footnotesize},
        baseline=(a.north)
      ]
      \node[vulnnode] (a) {$a$};
      \node[vulnnode] (b) [right of=a] {$b$};
      \node[vulnnode] (c) [below of=b] {$c$};
      \node[vulnnode] (d) [right of=b, xshift=1cm] {$d$};
      \node[vulnnode] (e) [below of=d] {$e$};

      \draw[->] (a) -- (b);
      \draw[->] (b) -- (c);
      \draw[->] (c) -- (a);

      \draw[->] (d) -- (e);
      \draw[->] (e) -- (d);

      \draw[->, dashed] (b) -- (d);
      \draw[->, dashed] (c) -- (e);

      \node[sccnode] (C1) at (0.7,-3.2) {$C_1{=}\{a,b,c\}$};
      \node[sccnode] (C2) at (4,-3.2) {$C_2{=}\{d,e\}$};

      \draw[->] (C1) -- node[midway, above=3pt, font=\footnotesize] {priority} (C2);

      \node[font=\footnotesize] at (2.3,-4.5) {Condensation DAG $G^{\mathrm{SCC}}$};
    \end{tikzpicture}
    \caption{A graph with cycles and its SCC condensation}
    \label{fig:cyclic}
  \end{subfigure}

  \caption{Illustration of the vulnerability remediation order based on the dependency graph (Principle~\ref{princ:vulnerability_prioritization})}
  \label{fig:vulnerability_prioritization}
\end{figure}

Accounting for the dependencies $\mathcal{C}_{VV}$ when planning defense measures relies on algorithms for finding strongly connected components and for topological sorting from the standard toolkit of graph theory \cite{tarjan1972, cormen2009}. Their application to the LLM vulnerability dependency graph provides a tool for prioritizing security measures.

The exploitation of a vulnerability is a sequence of actions that change the state of the system to achieve the goal of the attacker.

\begin{definition}
\label{def:vulnerability_exploitation}
\textbf{Vulnerability exploitation.}
The exploitation of a vulnerability $v \in \mathcal{V}$ by an attack of type $a \in \mathcal{A}^{\mathrm{type}}$ on a model $f_\theta$ is defined as a sequence of actions $\sigma = (\text{action}_1, \dots, \text{action}_k)$. Each action can be an interaction with the model, for example sending a query $x \in \Xspace$, a change of the environment, for example a modification of data, or an analysis of the model response. The sequence $\sigma$ leads to an undesired state $S_{\text{undesired}}$ with probability $P(\text{success} | \sigma, v, f_\theta) > \tau$ for a given threshold $\tau \in (0, 1)$. An undesired state can include the generation of an output $y \in \badY$, as defined in Section~\ref{subsec:vulnerability_metric} of Chapter~\ref{ch:ch2}, information leakage, a change of the model parameters, and so on.
\end{definition}
The definition generalizes the formalization of $\Vulfunc_{\text{local}}(f)$ to a wider range of attack consequences.

An aggregate measure of model robustness is introduced on the basis of the attack success probability.

\begin{definition}
\label{def:attack_resistance}
\textbf{Attack robustness.}
The robustness of a large language model $f_\theta$ to a set of attack types $\mathcal{A}' \subseteq \mathcal{A}^{\mathrm{type}}$ is defined as
$$
\text{AR}(f_\theta, \mathcal{A}') = 1 - \mathbb{E}_{a \sim \mathcal{W}(\mathcal{A}')} [P_{\text{success}}(a, f_\theta)],
$$
where $P_{\text{success}}(a, f_\theta) = \max_{v:(a,v)\in \mathcal{I}_{AV}} P_{\text{exploit}}(v | a, f_\theta)$ is the maximum success probability of an attack of type $a$ through any of its associated vulnerabilities, in particular the ASR for this attack, and $\mathcal{W}(\mathcal{A}')$ is a probability distribution, that is, a set of weights that sum to one, on the set of attacks $\mathcal{A}'$ that reflects their relative frequency or danger. If a uniform distribution $\mathcal{W}$ is used, then $\text{AR}(f_\theta, \mathcal{A}') = 1 - \frac{1}{|\mathcal{A}'|} \sum_{a \in \mathcal{A}'} P_{\text{success}}(a, f_\theta)$.
\end{definition}

\begin{remark}
\textbf{Relation between robustness and risk.}
Assume that the damage estimate is the same for all vulnerabilities exploited by the attacks from $\mathcal{A}'$, i.e., $\text{ImpactValue}(v) = I$ for all $v$ such that $(a, v) \in \mathcal{I}_{AV}$ for some $a \in \mathcal{A}'$, and that the weights $w(a)$ in the total risk estimate coincide with the distribution $\mathcal{W}(\mathcal{A}')$. The sum over vulnerabilities in $\text{Risk}(a, f_\theta)$ is not less than the maximum in $P_{\text{success}}(a, f_\theta)$. Hence
$$
\sum_{a \in \mathcal{A}'} w(a) \cdot \text{Risk}(a, f_\theta) \ge I \cdot \bigl(1 - \text{AR}(f_\theta, \mathcal{A}')\bigr).
$$
If, in addition, each attack type $a \in \mathcal{A}'$ exploits exactly one vulnerability, equality holds, and the total risk decreases linearly as the robustness $\text{AR}(f_\theta, \mathcal{A}')$ increases. In the general case, this relation is more complex, because the risk depends on both the success probability and the damage, and $\text{Risk}$ sums over vulnerabilities, whereas $\text{AR}$ takes the maximum.
\end{remark}

The practical application of this apparatus requires instantiating the sets $\mathcal{A}^{\mathrm{type}}$ and $\mathcal{V}$ and quantitatively estimating the relations $\mathcal{I}_{AV}$ and $\mathcal{C}_{VV}$ for a specific system. These tasks are solved within security audits and risk analysis.

\section{Risks, Limitations, and Ethical Aspects}
\label{sec:risks_ethics}

Research on vulnerabilities and the development of attacks on large language models carry dual-use risks. All experiments were conducted in a controlled environment and were aimed at identifying vulnerabilities, quantitatively assessing risks, and developing defense mechanisms.

\textbf{Legal and ethical scope of the tests.} When interacting with commercial APIs, in particular GPT-4 and Claude-3, their terms of use were strictly observed. No attacks were performed that aimed at disrupting services, at unauthorized access to the data of other users, or at the generation and dissemination of dangerous content outside the research environment. The goal of all attacks was to evaluate the vulnerabilities of the models for the subsequent development of defense mechanisms.

\textbf{Data handling.} No real personal or confidential data were used in the experiments. Datasets such as MT-Bench are publicly available and intended for research purposes. When generating prompts and analyzing responses, measures were taken to prevent the accidental processing of sensitive information.

\textbf{Generation safety.} In jailbreaking attacks or attacks aimed at generating malicious content, filtering mechanisms were applied to prevent the creation and storage of dangerous material. The results of such attacks were analyzed in terms of structure and mechanics, not content.

\textbf{Responsible disclosure.} The research results are published to inform the scientific community and developers with the goal of improving the defense of systems. Ready-made attack tools that could be used by malicious actors are not published.

The dual-use nature of work in AI security requires adherence to ethical norms at every stage of the research.

\section*{Chapter Conclusions}
The systematization of threats revealed a property that determines the structure of the entire work: attacks that appear heterogeneous at first sight, from prompt injection to trojans, reduce to the sensitivity of the learned mapping to input sequence variations in the discrete token space. This observation justifies the construction of a single robustness metric in Chapter~\ref{ch:ch2} instead of a set of specific measures for each attack class.

The mapping to OWASP Top 10 and NIST SP 800-53 revealed a gap that the existing catalogs do not close: attacks on agentic LLM systems over MCP have no separate entry and are distributed between excessive agency and insecure plugins, which motivates their separate treatment in Chapter~\ref{ch:ch5}. The asymmetry between the implantation and the detection of trojans is quantified only in Chapter~\ref{ch:ch4}, but it already indicates the limitations of post-training auditing at the classification level.

The graph model of the threat space $\mathcal{T} = (\mathcal{A}^{\mathrm{type}}, \mathcal{V}, \mathcal{I}_{AV}, \mathcal{C}_{VV})$ provides not only a taxonomy but also a principle for prioritizing defenses by the structure of vulnerability dependencies. The identified trade-offs between security, utility, confidentiality, and robustness show that, in general, not all aspects can be optimized simultaneously. For this reason, the subsequent chapters introduce the robustness-utility trade-off (RUT) metric for the quantitative choice of a defense configuration.

\chapter{Theoretical Foundations for Evaluating the Robustness of Large Language Models}\label{ch:ch2}

\section*{Chapter Overview}
Quantifying LLM robustness requires a mathematical framework that accounts for the discrete token space and the characteristics of adversarial attacks. This chapter defines robustness metrics, establishes their properties, and proves the relations between them.

Subsequent chapters use this framework to develop experimental methods and defense mechanisms.

\section{Problem Statement and Threat Model}
\label{sec:problem_statement}

To formalize the problem of evaluating the robustness of large language models, one must specify the input space, the output space, the family of admissible attacks, and the sets of goals and consequences of adversarial actions. We next formalize these components and state the general robustness evaluation problem.

\subsection{Formalization of the Robustness Evaluation Problem}
\label{subsec:robustness_formalization}

The spaces $\Xspace$, $\Yspace$, the model $f_\theta: \Xspace \to \ProbSpace{\Yspace}$ and the perturbation neighborhood $\Bepsilon{x}$, written $\mathcal{B}_\varepsilon(x)$ in this chapter, are defined in Chapter~\ref{ch:ch1} (Definitions~\ref{def:token_space},~\ref{def:token_perturbation}). In addition, we introduce the following:
\begin{itemize}
    \item $\mathfrak{A}$ is the family of attacks, where each attack $a \in \mathfrak{A}$ is given by a function $a: \Xspace \rightarrow \Xspace$ that transforms a legitimate input into a potentially malicious one. The family covers both general attack types and attacks specific to particular architectures.
    \item $\mathcal{G}$ is the set of adversary goals, and $\mathcal{R}$ is the set of consequences of a successful attack.
\end{itemize}

\begin{remark}
\label{rem:notation_ch2}
\textbf{On the notation of this chapter.}
The notation~$\mathfrak{A}$ is introduced to avoid a conflict with $\mathcal{A}^{type}$ from Chapter~\ref{ch:ch1}. The symbols $\mathcal G$ and $\mathcal R$ are also local to this chapter. The sets $\mathcal G$ and $\mathcal R$ are used implicitly in the definition of $\badY$ in Section~\ref{subsec:vulnerability_metric}: the set of undesirable outputs $\badY$ is determined by the adversary goals from $\mathcal G$ and the expected consequences from $\mathcal R$.
\end{remark}

Depending on the context, two representations of the model are used. For tasks with a discrete output, where the preservation of a verdict or class is evaluated, we apply the deterministic decision operator $h=\mathrm{Dec}\circ f$. For the analysis of the divergence between probability distributions, we use the model $f$ directly.

To remove ambiguity, we distinguish below between
(i)~the probabilistic model $f_\theta(x) \in \ProbSpace{\Yspace}$, which defines a distribution over outputs, and
(ii)~the deterministic decision operator $h_\theta = \mathrm{Dec}\circ f_\theta$, where $\mathrm{Dec}$ denotes a fixed decoding or verdict-extraction rule, such as greedy decoding, majority-label selection, or verdict parsing.
Metrics based on the divergence of distributions are defined for $f_\theta$, and metrics based on attack success or on a change of verdict are defined for $h_\theta$.

The general robustness evaluation problem is formulated as the evaluation of the worst-case risk $\text{Risk}(a, f)$ over all admissible attacks $a \in \mathfrak{A}$:
$$
\text{AdvRobustness}(f) = 1 - \sup_{a \in \mathfrak{A}} \text{Risk}(a, f),
$$
where $\text{Risk}(a, f)$ expresses the expected damage from the attack $a$ on the model $f$. A correct interpretation of $\text{AdvRobustness}(f)$ in a given range, for example $[0,1]$, requires an appropriate bound on, or normalization of, the function $\text{Risk}(a, f)$. This condition extends to all divergence measures used to compute the risk, for example $D\bigl(\cdot,\,\cdot\bigr)$: they must be bounded from above, that is $0 \le D \le C_{max} < \infty$, or normalized to the interval $D \in [0,1]$, which guarantees the correctness of the final risk estimate.

This chapter distinguishes two concepts of robustness (see the terminology convention in Chapter~\ref{ch:ch1}): $\text{AdvRobustness}(f)$ evaluates the ability of the model to resist the worst-case attack from $\mathfrak{A}$, whereas $R_{stab}(f)$ captures how little the output distribution changes under small input perturbations. The relation between them is established in the subsequent sections.

\subsection{Assumptions About the Adversary}
\label{subsec:attacker_assumptions}

Robustness evaluation requires explicit assumptions about the capabilities and limitations of the adversary. The adopted classification distinguishes the following levels of knowledge:

\begin{itemize}
    \item \textbf{Black-box:} Access is limited to the model interface: the adversary can send queries and receive responses without any information about the internal structure, parameters, or training data.
    \item \textbf{Gray-box:} Partial information about the model is known, for example the architecture, but the specific parameter values are not available.
    \item \textbf{White-box:} Full access to the model is assumed, including the architecture, parameters, and training data. The GCG method, in particular, requires this level of access to compute gradients.
\end{itemize}

Along with the level of knowledge, the following constraints matter:

\begin{itemize}
    \item \textbf{Query budget:} The maximum number of queries that the adversary can send to the model.
    \item \textbf{Computational constraints:} The available computational resources, which limit the complexity of the algorithms used.
    \item \textbf{Time constraints:} The admissible duration of the attack.
\end{itemize}

These assumptions set the boundaries of the attack space $\mathfrak{A}$ and directly affect the outcome of robustness evaluation.

\subsection{Classification of Adversarial Actions}
\label{subsec:attack_classification}

Adversarial actions against large language models can be classified into four main categories:

\begin{itemize}
    \item \textbf{Semantically equivalent reformulations:} The wording of the query changes while the semantic content is preserved. Paraphrasing with synonyms or restructuring the sentence are typical examples of such an action.

    \item \textbf{Malicious context injections:} Specially crafted instructions or context aimed at changing the behavior of the model are added to the query. This category includes prompt injection, jailbreak attacks, and other forms of context manipulation. Specific variants include the Comparative Undermining Attack (CUA) and Justification Manipulation Attack (JMA), which target LLM-as-a-Judge systems, as well as code-based attacks and the Distraction attack against defended systems.

    \item \textbf{Tokenization manipulations:} Features of the tokenization process are exploited to bypass defense mechanisms. In particular, special characters split forbidden words into fragments, as in the Word-splitting attack, and none of the individual fragments is detected by the filters.

    \item \textbf{Denial-of-service attacks:} The actions aim to degrade model performance or to stop the service completely. Queries that cause excessive resource consumption or looping are a characteristic example.
\end{itemize}

Each type of action requires dedicated robustness evaluation methods and corresponding defense measures.

\subsection{Formalization of the Security of an LLM as a Service}
\label{subsec:llm_service_security}

The security analysis of an LLM as a service covers the whole query processing chain: from the user through the prompt engineer and the LLM core to the generation of the output. Formally, it is convenient to represent this chain as a sequence of transformations, shown in Fig.~\ref{fig:chain_processing}.

\begin{figure}[ht]
\centering
\begin{tikzpicture}[->, >=stealth, node distance=4.8cm, auto, font=\small]
  \node[draw, rounded corners, align=center] (user) {User};
  \node[draw, rounded corners, align=center, right of=user] (engineer) {Prompt engineer};
  \node[draw, rounded corners, align=center, right of=engineer] (core) {LLM core};
  \node[draw, rounded corners, align=center, right of=core] (output) {Output};

  \draw (user) -- node[midway, above=5pt] {query} (engineer);
  \draw (engineer) -- node[midway, above=5pt] {processed query} (core);
  \draw (core) -- node[midway, above=5pt] {generation} (output);
\end{tikzpicture}
\caption{The query processing chain from the user to the model output}
\label{fig:chain_processing}
\end{figure}

Each component of this chain can serve as a point where defense mechanisms are applied:

\begin{itemize}
    \item \textbf{User interface:} Filtering and validation of input data, detection of anomalies in query patterns.
    \item \textbf{Prompt engineering:} Transformation of user queries into safe forms, addition of protective instructions, including the system instructions in the defense architecture described in the subsequent chapters.
    \item \textbf{LLM core:} Modification of the model architecture or parameters to increase robustness.
    \item \textbf{Output data:} Filtering and sanitization of generated content, detection and blocking of malicious outputs with Python filters or LLM filters.
\end{itemize}

Robustness evaluation must cover all links of the chain and take their mutual influence into account.

\section{Formal Robustness Metrics}
\label{sec:formal_metrics}

Quantitative measurement of the robustness of large language models relies on formal metrics that reflect the degree of model vulnerability to attacks. Below we give a rigorous formalization of the main metrics: robustness to small perturbations for classification tasks $R_{class}(h)$, where $h = \mathrm{Dec} \circ f$, and for generative models $R_{stab}(f)$, the global vulnerability index $V(f)$, the risk function $\text{Risk}(a, f)$ and the adversarial robustness metric $\text{AdvRobustness}(f)$. An intuitive description of these metrics was given in Section \ref{subsec:robustness_definition_significance} and Section \ref{subsec:vulnerabilities_attacks}.

\subsection{The Metric \texorpdfstring{$R_{class}(h)$}{R\_class(h)} for Discrete Outputs}
\label{subsec:r_metric_discrete}

For models with discrete outputs in classification tasks, the metric of robustness to small perturbations $R_{class}(h)$ is defined as the expected probability that the deterministic decision operator $h = \mathrm{Dec} \circ f$ preserves its output under small input perturbations:
$$
R_{class}(h)=\mathbb{E}_{x\sim P_{\mathcal{X}}}\!\Bigl[1-\sup_{x'\in\mathcal{B}_\varepsilon(x)}
    \mathbb{I}\{h(x)\neq h(x')\}\Bigr],
$$
where $h = \mathrm{Dec} \circ f$ is the composition of the probabilistic model $f$ and the decoding operator $\mathrm{Dec}$, for example $\arg\max$. The symbol $\mathcal{B}_\varepsilon(x)$ denotes the neighborhood of the input $x$ of radius $\varepsilon$, $P_{\mathcal{X}}$ is the input data distribution, and $\mathbb{I}\{\cdot\}$ is the indicator function.

\begin{remark}
\label{rem:r_class_vs_gen}
\textbf{On the distinction between $R_{class}(h)$ and $R_{stab}(f)$.}
The metric $R_{class}(h)$ uses the decision operator $h = \mathrm{Dec} \circ f$, which maps inputs to discrete labels. This makes the indicator comparison $h(x) \neq h(x')$ well defined. The robustness literature often uses the notation $R_{class}(f)$, but here the formulas are written explicitly in terms of $h$ to avoid ambiguity. For generative LLMs, we introduce below the metric $R_{stab}(f)$, which operates directly on the probabilistic model $f$ without a decoding operator.
\end{remark}

The specific form of the neighborhood $\mathcal{B}_\varepsilon(x)$ is determined by the choice of a distance metric on $\Xspace$: the Hamming, Levenshtein, or semantic distance, as described in Section~\ref{subsec:robustness_definition_significance} (Table~\ref{tab:distance_metrics}).

The metric $R_{class}(h)$ takes values in $[0, 1]$: the value 1 corresponds to a perfectly robust model, and 0 corresponds to a completely non-robust one.

\subsection{The Metric \texorpdfstring{$R_{stab}(f)$}{R\_stab(f)} for Generative Models}
\label{subsec:r_metric_generative}

\begin{figure}[ht!]
\centering
\begin{tikzpicture}[>=stealth, font=\small, every node/.style={align=center}]
  \node[draw, rectangle, minimum width=2cm, minimum height=1cm] (f) at (0,0) {LLM core $f$};

  \node[draw, circle] (x)  at (-5,  1) {$x$};
  \node[font=\scriptsize, align=center] at (-5,  0.3) {\texttt{"How's the weather?"}};
  \node[draw, circle] (xp) at (-5, -1) {$x'$};
  \node[font=\scriptsize, align=center] at (-5, -1.7) {\texttt{"What's the weather like?"}};

  \node[draw, circle, minimum width=2.5cm, minimum height=2.5cm] (p1) at (5,  1.5) {
    \begin{tikzpicture}[scale=0.4]
      \draw[->] (0,0) -- (1.8,0);
      \draw[->] (0,0) -- (0,1.3);
      \foreach \x/\h in {0.2/1, 0.7/0.7, 1.2/0.9} {
        \draw[fill=gray!50] (\x,0) rectangle ++(0.4,\h);
      }
    \end{tikzpicture}\\[-2pt]
    $p(\cdot\mid x)$
  };
  \node[draw, circle, minimum width=2.5cm, minimum height=2.5cm] (p2) at (5, -1.5) {
    \begin{tikzpicture}[scale=0.4]
      \draw[->] (0,0) -- (1.8,0);
      \draw[->] (0,0) -- (0,1.3);
      \foreach \x/\h in {0.2/0.8, 0.7/1, 1.2/0.6} {
        \draw[fill=gray!50] (\x,0) rectangle ++(0.4,\h);
      }
    \end{tikzpicture}\\[-2pt]
    $p(\cdot\mid x')$
  };

  \draw[->] (x.east)  -- node[midway, above=4pt] {input} (f.west);
  \draw[->] (xp.east) -- node[midway, below=4pt] {input} (f.west);

  \draw[->] (f.east) -- (p1.west);
  \draw[->] (f.east) -- (p2.west);

  \draw[<->, thick] (p1) -- node[midway, fill=white, inner sep=1pt] {$D\bigl(p(\cdot\mid x),\,p(\cdot\mid x')\bigr)$} (p2);
\end{tikzpicture}
\caption{Conceptual diagram: two similar prompts $x$ and $x'$ (examples shown) produce different output distributions, whose divergence is measured by $D$.}
\label{fig:concept_divergence}
\end{figure}

\begin{definition}
\label{def:robustness_metric}
\textbf{Robustness metric $R_{stab}(f)$ for a generative LLM.}
For generative models that autoregressively predict a probability distribution over the next token, one must rigorously define a divergence measure between the outputs for two different inputs $x$ and $x'$. To remove the ambiguity between a distribution over the whole sequence and a distribution over a single token, we define the divergence by averaging over generation steps, as shown in Fig.~\ref{fig:concept_divergence}.

For an input $x$, let a reference response $y^{\star}(x)=(y_1,\dots,y_T)$ be fixed,
for example a ground-truth answer or the answer generated by the model itself under greedy decoding.
We define the stepwise autoregressive distributions
$p_t(\cdot\mid x) := p(\cdot \mid x, y^{\star}_{<t}(x))$ and, similarly, $p_t(\cdot\mid x') := p(\cdot \mid x', y^{\star}_{<t}(x))$.
Here the comparison uses teacher forcing along a single reference trajectory $y^{\star}(x)$: the distributions $p_t(\cdot\mid x, y^{\star}_{<t}(x))$ and $p_t(\cdot\mid x', y^{\star}_{<t}(x))$ are compared under the same prefix $y^{\star}_{<t}(x)$. This isolates sensitivity to the input perturbation from divergence caused by different decoding trajectories.
Then the divergence measure between the outputs averaged over $T$ steps is defined as
$$
D_T(x,x') := \frac{1}{T}\sum_{t=1}^{T} D\bigl(p_t(\cdot\mid x),\,p_t(\cdot\mid x')\bigr),
$$
where $D$ is a divergence measure, for example the Kullback--Leibler or Jensen--Shannon divergence, between probability distributions over the vocabulary $\Vspace$. The resulting robustness metric $R_{stab}(f)$ is defined as:
$$
R_{stab}(f) := \Exp_{x\sim P_{\mathcal{X}}}\Bigl[1-\sup_{x'\in \mathcal{B}_\varepsilon(x)}
\frac{D_T(x,x')}{C_{\max}}\Bigr].
$$
\end{definition}

\begin{remark}
\label{rem:r_stab_protocol}
\textbf{Protocol dependence of $R_{stab}(f)$.}
The metric $R_{stab}(f)$ depends on the chosen comparison protocol, namely the reference trajectory $y^{\star}(x)$ and the teacher forcing scheme. This makes it a protocol-dependent characteristic of the model. The value of the metric is defined relative to a specific way of obtaining $y^{\star}$ and a fixed procedure for the stepwise comparison of distributions. When the protocol changes, for example when greedy decoding is replaced by sampling or when another reference trajectory is used, the numerical values of $R_{stab}(f)$ can change. Therefore, a single protocol for computing the metric must be fixed when models are compared.
\end{remark}

\begin{remark}
\label{rem:dt_asymmetry}
\textbf{On the asymmetry of the construction $D_T(x,x')$.}
The quantity $D_T(x,x')$ is constructed by teacher forcing along the reference trajectory $y^{\star}(x)$ of the original prompt~$x$: the distributions $p_t(\cdot\mid x, y^{\star}_{<t}(x))$ and $p_t(\cdot\mid x', y^{\star}_{<t}(x))$ are compared under the same prefix $y^{\star}_{<t}(x)$. Consequently, $D_T(x,x')$ is a \textbf{directed sensitivity measure}: in general, it does not coincide with $D_T(x', x)$, where teacher forcing would be performed along the trajectory $y^{\star}(x')$. The asymmetry is intentional: it measures the deviation of the output distribution under an input perturbation from the baseline behavior of the model on the original prompt~$x$.

If the base divergence $D$ is symmetric, as is the Jensen--Shannon divergence used in this work, then $D(p_t(\cdot\mid x, y^{\star}_{<t}),\, p_t(\cdot\mid x', y^{\star}_{<t})) = D(p_t(\cdot\mid x', y^{\star}_{<t}),\, p_t(\cdot\mid x, y^{\star}_{<t}))$ for every step~$t$, but the whole construction $D_T$ remains asymmetric because of the fixed trajectory. For tasks that require a symmetric divergence estimate, one can use the symmetrization $\overline{D}_T(x,x') = \frac{1}{2}\bigl(D_T(x,x') + D_T(x',x)\bigr)$. In the context of the definition of $R_{stab}(f)$, where $\sup_{x' \in \mathcal{B}_\varepsilon(x)} D_T(x,x')$ is considered, the directed construction is methodologically justified, because it evaluates the sensitivity of the model to a perturbation specifically relative to the original input~$x$.
\end{remark}

\begin{remark}
\textbf{Computing $R_{stab}(f)$ in practice.}
Computing $R_{stab}(f)$ in practice requires specifying the following choices.
\begin{itemize}
    \item \textbf{Choice of the reference response $y^\star$}: The method for obtaining $y^\star(x)$ must be defined explicitly. In this work, unless stated otherwise, $y^\star(x)$ is the sequence obtained by greedy decoding of the model response to the original prompt $x$ up to the maximum length $T_{\max}$.
    \item \textbf{Choice of the divergence measure $D$ and the normalization $C_{\max}$}: The Kullback--Leibler divergence ($D_{KL}$) can take infinite values if the supports of the distributions do not coincide, which creates problems for numerical stability. To avoid this and to obtain a natural normalization, all experimental evaluations in this work, unless stated otherwise, use the \textbf{Jensen--Shannon divergence} ($\JS$) as the divergence measure $D$. It is defined as:
    $$ \JS(p, q) = \frac{1}{2} D_{KL}(p \parallel m) + \frac{1}{2} D_{KL}(q \parallel m), \quad \text{where } m = \frac{1}{2}(p+q). $$
    With the base-2 logarithm, $\JS(p,q)$ always lies in the range $[0, 1]$, which allows us to take $C_{\max}=1$. With the natural logarithm, denoted $\ln$ here and below, $\JS(p,q) \in [0, \ln 2]$, and accordingly $C_{\max} = \ln 2 \approx 0.693$. In this work, unless stated otherwise, we use the base-2 logarithm and $C_{\max}=1$. This choice yields finite divergence values and makes $R_{stab}$ easier to compare across models and experiments.
\end{itemize}
\end{remark}

\begin{remark}
\label{rem:r_stab_main_metric}
Throughout the subsequent chapters, $R_{stab}(f)$ serves as the primary metric for evaluating the robustness of generative LLMs, using the definition given above. Here $p(\cdot|x)$ is the probability distribution at the output of the model for a given input $x$, which corresponds to the functional representation of the model $f(x)$ in other sections of the work.
\end{remark}

\subsection{The Global Vulnerability Index \texorpdfstring{$V(f)$}{V(f)}}
\label{subsec:vulnerability_metric}

\begin{figure}[ht!]
\centering
\begin{tikzpicture}[->, >=stealth, font=\small]
  \node[draw, circle] (x)                                  {x};
  \node[draw, rectangle, right=2cm of x] (attack)          {$a(x)$};
  \node[draw, rectangle, right=2cm of attack] (model)      {$h=\mathrm{Dec}\circ f$};
  \node[draw, circle,    right=2cm of model] (y)           {$h(a(x))$};

  \node[draw, diamond, aspect=2, below=1.5cm of y] (decide)  {$h(a(x))\in \badY$?};

  \node[draw, rectangle, below left=1cm and 0.5cm of decide]  (impact1) {$\mathrm{Impact}=1$};
  \node[draw, rectangle, below right=1cm and 0.5cm of decide] (impact0) {$\mathrm{Impact}=0$};

  \draw (x) -- node[above]{input} (attack);
  \draw (attack) -- node[above]{attack} (model);
  \draw (model) -- node[above]{run} (y);

  \draw (y) -- (decide);

  \draw (decide) -- node[left]{yes}  (impact1);
  \draw (decide) -- node[right]{no} (impact0);
\end{tikzpicture}
\caption{Attack evaluation: the input $x$ is transformed by $a$, processed by $h=\mathrm{Dec}\circ f$, and assigned $\mathrm{Impact}=1$ if $h(a(x))\in\badY$ and $0$ otherwise.}
\label{fig:attack_impact}
\end{figure}

\begin{definition}
\label{def:vulnerability_metric}
\textbf{Global vulnerability index $V(h)$.}
Let $h=\mathrm{Dec}\circ f$ be the deterministic decision operator induced by the model $f$. Then the global vulnerability index is defined as
$$
V(h)=\mathbb{E}_{x\sim P_{\mathcal{X}},\,a\sim P_{\mathfrak{A}}}
    \bigl[\text{Impact}\bigl(h(a(x))\bigr)\bigr],
$$
where
$$
\text{Impact}(y)=\mathbb{I}\{y\in\badY\}.
$$
In the stochastic case, the probability of an undesirable output can be used instead of $h(a(x))$:
$$
V(f)=\mathbb{E}_{x,a}\Bigl[\mathbb{P}_{y\sim f(a(x))}(y\in\badY)\Bigr].
$$
The set of undesirable outputs $\badY$ is defined depending on the task. It can include, for example, outputs that change the verdict of an LLM-as-a-Judge under a binary or ranking decision rule, bypass the defense, or contain confidential information, as shown in Fig.~\ref{fig:attack_impact}.
\end{definition}

\begin{remark}
\label{rem:impact_disambiguation}
\textbf{On the terminological distinction between the functions Impact and ImpactValue.}
The function $\text{Impact}(y) = \mathbb{I}\{y \in \badY\}$ introduced in this definition is a binary indicator function: it takes the value 0 or~1 depending on whether the output belongs to the set~$\badY$. It must not be confused with $\text{ImpactValue}(v)$ from Chapter~\ref{ch:ch1}, Definition~\ref{def:attack_risk}. The quantity $\text{ImpactValue}(v)$ denotes a quantitative estimate of the damage from exploiting the vulnerability~$v$ on the CVSS scale and takes arbitrary values in $[0,1]$. $\text{Impact}$ is an indicator, whereas $\text{ImpactValue}$ is a continuous severity estimate. The arguments of the functions also differ: $\text{Impact}$ takes the model output~$y$, and $\text{ImpactValue}$ takes the vulnerability~$v$.
\end{remark}

\begin{remark}
\label{rem:v_f_shorthand}
\textbf{Distinction between $V(h)$ and $V(f)$.}
Definition~\ref{def:vulnerability_metric} defines the vulnerability $V(h)$ for the deterministic decision operator $h$ through the indicator function $\text{Impact}(y) = \mathbb{I}\{y \in \badY\}$. The generalization to the stochastic case $V(f)$ replaces the indicator by the probability $\mathbb{P}_{y \sim f(a(x))}(y \in \badY)$. The stochastic and deterministically decoded definitions describe different output protocols and need not agree. In the remainder of this chapter, $V(f)$ denotes $V(h_\theta)$ with $h_\theta=\mathrm{Dec}\circ f_\theta$ unless stochastic generation is explicitly specified.
\end{remark}

\begin{remark}
\label{rem:global_vs_local_vulnerability}
The global vulnerability index $V(f)$ generalizes the concept of the local vulnerability metric $\Vulfunc_{\text{local}}(f)$ presented in Chapter~\ref{ch:ch1}, Section~\ref{subsec:vulnerabilities_attacks}. The global vulnerability $V(f)$ characterizes the expected effectiveness of attacks from a given set $\mathfrak{A}$ over the whole input distribution, whereas the local vulnerability $\Vulfunc_{\text{local}}(f)$ focuses on the maximum probability of generating undesirable content under perturbations of a specific input within its $\varepsilon$-neighborhood. If the attack set $\mathfrak{A}$ in the definition of $V(f)$ consists of a single attack that aims to generate undesirable content $\badY$ by searching for the worst-case perturbation $x' \in \mathcal{B}_\varepsilon(x)$ for each $x$, then $V(f)$ is equivalent to $\Vulfunc_{\text{local}}(f)$ from Chapter \ref{ch:ch1}.
\end{remark}
The index $V(f)$ lies in $[0,1]$. A value of 0 means that an undesirable output occurs with probability zero under the specified input and attack distributions. A value of 1 means that it occurs almost surely under those distributions.

\begin{remark}
\textbf{Accounting for sampling stochasticity.}
Empirical estimation of the metrics $R_{stab}(f)$ and $V(f)$ must account for the stochasticity of the LLM response generation process, which is controlled by the sampling parameters: temperature, nucleus sampling, top-k. To ensure reproducibility and comparability of results, all sampling parameters are fixed during the experiments. Unless stated otherwise, this work uses greedy decoding, equivalent to temperature 0, which ensures deterministic and reproducible results.
\end{remark}

\subsection{The Risk Function \texorpdfstring{$\text{Risk}(a, f)$}{Risk(a, f)}}
\label{subsec:risk_function}

\begin{definition}
\label{def:risk_function}
\textbf{Attack risk ($\text{Risk}(a,h)$).}
The risk associated with a specific attack or attack type $a \in \mathfrak{A}$ on the decision operator $h=\mathrm{Dec}\circ f$ is defined as:
$$
\text{Risk}(a, h) = \mathbb{E}_{x \sim P_{\mathcal{X}}} [\mathbf{1}_{h(a(x)) \in \badY}] \cdot \text{Severity}(a, \badY),
$$
where the first factor is the success probability of the attack $a$, denoted $P_{\text{success}}(a,h)$ and estimated, for example, by the attack success rate (ASR), and $\text{Severity}(a, \badY)$ reflects the expected damage or the severity of the consequences if the attack $a$ succeeds and leads to an undesirable output.
\end{definition}

In the stochastic setting, one can use the analogous form
$$
\text{Risk}(a, f)=\mathbb{E}_{x\sim P_{\mathcal X}}\Bigl[\mathbb P_{y\sim f(a(x))}(y\in\badY)\Bigr]\cdot \text{Severity}(a,\badY).
$$

Alternatively, the attack risk can be defined as a sum over all possible vulnerabilities, as in Definition~\ref{def:attack_risk} from Chapter~\ref{ch:ch1}:
$$
\text{Risk}(a, f_\theta) = \sum_{v \in \mathcal{V} : (a, v) \in \mathcal{I}_{AV}} P_{\text{exploit}}(v | a, f_\theta) \cdot \text{ImpactValue}(v),
$$
where $\text{ImpactValue}(v)$ is a quantitative estimate of the damage, expressed, for example, in points or in monetary terms.

The definition of $\text{Risk}$ through $\text{Severity}(a, \badY)$ is an alternative formulation compared with Definition~\ref{def:attack_risk} of Chapter~\ref{ch:ch1}, which uses $\text{ImpactValue}(v)$. The first formulation expresses risk in terms of an attack's success probability and severity. The second aggregates risk over vulnerabilities. When aggregated over all vulnerabilities, the two formulations are consistent.

\begin{remark}
\label{rem:risk_normalization}
The definition of robustness through the risk function ($\text{AdvRobustness}(f) = 1 - \sup_{a \in \mathfrak{A}} \text{Risk}(a, f)$) assumes that $\text{Risk}(a, f)$ is normalized to the range $[0,1]$. For the probability-times-severity formulation, requiring $0\le\text{Severity}\le1$ is sufficient. For the sum over vulnerabilities, the aggregate risk must be bounded or normalized to $[0,1]$. Bounding each $\text{ImpactValue}(v)$ separately does not suffice.
\end{remark}

For models with a discrete output, the vulnerability is related to the metric of robustness to small perturbations through a monotonically increasing function $g$ of the non-robustness $1-R_{class}(h)$:
$$
V(h) \leq g(1-R_{class}(h)),
$$
where $h = \mathrm{Dec} \circ f$ is the deterministic decision operator, and $g$ depends on the specific definition of $V$ and $R_{class}$. For the class of localized attacks, $g(t) = t$, and the inequality $V(h) \le 1 - R_{class}(h)$ is rigorously proved in Section~\ref{subsec:r_v_theorem}. For non-localized attacks, the inequality is replaced by the approximate relation $V(h) \approx g(1-R_{class}(h))$, where $g$ defines a calibrated empirical model, which is verified on data and is not a universal rigorous bound. The difference in status between the two cases is discussed in detail in Remark~\ref{rem:empirical_v_r_model}.

\section{Properties of and Relations Between the Metrics}
\label{sec:metric_properties}

Three groups of results are established below: the monotonicity of $R_{class}$ and $R_{stab}$ with respect to a strengthened defense, an upper bound on the vulnerability $V$ through $R_{class}$, and a lemma on model composition.

\subsection{The Monotonicity Property of \texorpdfstring{$R_{class}(h)$}{R\_class(h)} and \texorpdfstring{$R_{stab}(f)$}{R\_stab(f)}}
\label{subsec:r_monotonicity}

Both metrics are monotone: if the defense of the model is strengthened, the values of $R_{class}(h)$ and $R_{stab}(f)$ do not decrease. Monotonicity serves as a consistency condition between the definition of the metric and the intuition of a ``strengthened defense''.

\begin{remark}
\textbf{Monotonicity of $R_{class}(h)$ and $R_{stab}(f)$.}
Let $h$ be the deterministic decision operator of the original model, and let $h'$ correspond to the model with a strengthened defense. Then $R_{class}(h') \geq R_{class}(h)$ and $R_{stab}(f') \geq R_{stab}(f)$.
\end{remark}
\begin{proof}
\begin{enumerate}
    \item For $R_{class}(h)$:

    Let $h'$ be the decision operator of the model with a defense strengthened relative to $h$. A strengthened defense for $R_{class}(h)$ means that the following condition holds for any input $x$:
    $$
    \sup_{x'\in\mathcal{B}_\varepsilon(x)}\mathbb{I}\{h'(x)\neq h'(x')\} \leq \sup_{x'\in\mathcal{B}_\varepsilon(x)}\mathbb{I}\{h(x)\neq h(x')\}.
    $$
    This directly gives:
    $$
    1-\sup_{x'\in\mathcal{B}_\varepsilon(x)}\mathbb{I}\{h'(x)\neq h'(x')\} \geq 1-\sup_{x'\in\mathcal{B}_\varepsilon(x)}\mathbb{I}\{h(x)\neq h(x')\}.
    $$
    Taking the expectation over the distribution $P_{\mathcal{X}}$, we obtain:
    \begin{align*}
    R_{class}(h') &= \mathbb{E}_{x \sim P_{\mathcal{X}}}\Bigl[1-\sup_{x'\in\mathcal{B}_\varepsilon(x)}\mathbb{I}\{h'(x)\neq h'(x')\}\Bigr] \\
    &\geq \mathbb{E}_{x \sim P_{\mathcal{X}}}\Bigl[1-\sup_{x'\in\mathcal{B}_\varepsilon(x)}\mathbb{I}\{h(x)\neq h(x')\}\Bigr] \\
    &= R_{class}(h).
    \end{align*}
    Hence $R_{class}(h') \geq R_{class}(h)$.

    \item For $R_{stab}(f)$:
    
    A strengthened defense $f'$ relative to $f$ for $R_{stab}(f)$ means that for any $x \in \mathcal{X}$ and the corresponding divergence measures $D_T'(x,x')$ for $f'$ and $D_T(x,x')$ for $f$, the following holds:
    $$
    \sup_{x' \in \mathcal{B}_\varepsilon(x)} \frac{D_T'(x,x')}{C_{max}} \leq \sup_{x' \in \mathcal{B}_\varepsilon(x)} \frac{D_T(x,x')}{C_{max}}.
    $$
    Then:
    $$
    1 - \sup_{x' \in \mathcal{B}_\varepsilon(x)} \frac{D_T'(x,x')}{C_{max}} \geq 1 - \sup_{x' \in \mathcal{B}_\varepsilon(x)} \frac{D_T(x,x')}{C_{max}}.
    $$
    Taking the expectation over $x \sim P_{\mathcal{X}}$ of both sides, we obtain $R_{stab}(f') \geq R_{stab}(f)$.
\end{enumerate}
\end{proof}

\begin{remark}
\label{rem:enhanced_protection_definition}
\textbf{On the definition of a ``strengthened defense'' and the scope of applicability.}
Monotonicity follows from the pointwise definition of a ``strengthened defense'' and the linearity of expectation. The result is not a standalone theorem: it verifies that the definition of the metric agrees with intuition. Without monotonicity, the metric would not allow a comparison of models with different levels of defense. Which specific transformations, including input filtering, output smoothing, and ensembling, satisfy the condition of a pointwise decrease in deviation is established experimentally in Chapters~\ref{ch:ch3}--\ref{ch:ch4}.
\end{remark}

\subsection{An Upper Bound on Vulnerability Through Robustness: \texorpdfstring{$V \le 1 - R_{class}$}{V <= 1 - R\_class}}
\label{subsec:r_v_theorem}

For models with a discrete output, where $h = \mathrm{Dec} \circ f$, we describe the relationship between $R_{class}(h)$ and $V(h)$ in two regimes. For localized attacks ($a(x) \in \mathcal{B}_\varepsilon(x)$) satisfying the assumptions of Theorem~\ref{thm:robust_vs_vuln}, the inequality $V(h) \le 1 - R_{class}(h)$ holds. Non-localized attacks, such as prompt injection and jailbreak, go beyond $\mathcal{B}_\varepsilon(x)$, and the rigorous bound does not apply to them. Instead, we propose an empirical regression model $V(h) \approx g(1 - R_{class}(h))$, whose parameters are calibrated on data (Remark~\ref{rem:empirical_v_r_model}). The separation into two regimes follows from the nature of the attacks: theoretical guarantees are possible only under a controlled perturbation budget.

\begin{theorem}
\label{thm:robust_vs_vuln}
\textbf{Upper bound on vulnerability through robustness.}
Let $h = \mathrm{Dec} \circ f$ be a deterministic decision operator and
$$
R_{class}(h)=\mathbb{E}_{x\sim P_{\mathcal{X}}}\!\Bigl[1-\sup_{x'\in\mathcal{B}_\varepsilon(x)}
    \mathbb{I}\{h(x)\neq h(x')\}\Bigr],
$$
$$
V(h)=\mathbb{E}_{x\sim P_{\mathcal{X}},\,a\sim P_{\mathfrak{A}}}
    \bigl[\text{Impact}\bigl(h(a(x))\bigr)\bigr],
$$
where
\begin{itemize}
    \item $(\mathcal{X},P_{\mathcal{X}})$ is the input space and the input distribution,
    \item $\mathfrak{A}$ is the set of adversarial transformations $a:\mathcal{X}\!\to\!\mathcal{X}$ with the distribution $P_{\mathfrak{A}}$,
    \item $\mathcal{B}_\varepsilon(x)=\{x'\in\mathcal{X}:d(x,x')\le\varepsilon\}$, where the metric $d$ defines the neighborhood,
    \item $\text{Impact}(y) = \mathbb{I}\{y \in \badY\}$ is a measurable impact function, and $\badY$ is the set of undesirable outputs,
    \item $\mathbb{I}\{\cdot\}$ is the indicator function.
\end{itemize}

Assume
\begin{enumerate}
\item[{\textrm{(A0)}}] \textbf{Correctness on clean inputs:} $\mathbb{P}_{x \sim P_{\mathcal{X}}}(h(x) \in \badY) = 0$, i.e., the model does not produce undesirable outputs on unperturbed inputs.
\item[{\textrm{(A1)}}] \textbf{Localization of attacks:} $a(x)\in\mathcal{B}_\varepsilon(x)$ almost surely.
\item[{\textrm{(A2)}}] \textbf{Link between impact and output change:} the event $\text{Impact}(h(a(x)))=1$ implies $h(x)\neq h(a(x))$, that is, an undesirable output on a perturbed input is possible only if it differs from the output on the unperturbed input.
\end{enumerate}
For the indicator impact function $\text{Impact}(y)=\mathbb{I}\{y\in\badY\}$, condition (A2) holds automatically under (A0): if $\text{Impact}(h(a(x)))=1$ and $h(x)\notin\badY$ by (A0), then $h(a(x))\in\badY$ and $h(x)\neq h(a(x))$. Under (A0)--(A2), the following inequality holds:
\begin{equation}\label{eq:simple_bound_ch2_part2}
    V(h)\;\le\;1-R_{class}(h).
\end{equation}
If condition (A0) is relaxed, that is, if the model can produce undesirable outputs on clean inputs with probability $\beta = \mathbb{P}_{x \sim P_{\mathcal{X}}}(h(x) \in \badY)$, the bound takes the form:
\begin{equation}\label{eq:relaxed_bound_ch2_part2}
    V(h)\;\le\;\beta + 1-R_{class}(h).
\end{equation}
\end{theorem}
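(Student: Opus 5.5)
The plan is a pointwise indicator comparison followed by integration. Fix $x$ and $a$, and write $N(x) := \sup_{x'\in\mathcal{B}_\varepsilon(x)}\mathbb{I}\{h(x)\neq h(x')\}$. With this notation $1-R_{class}(h)=\mathbb{E}_{x}[N(x)]$, using linearity of expectation and the fact that $N(x)\in\{0,1\}$ is integrable.

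The core claim is
\[
\mathrm{Impact}(h(a(x)))\le N(x)
\]
for $P_{\mathcal{X}}\otimes P_{\mathfrak{A}}$-almost every pair $(x,a)$. Since the left side is an indicator, it suffices to treat the event where it equals $1$. On that event, (A2) gives $h(x)\neq h(a(x))$. By (A1), $a(x)$ is an admissible point $x'\in\mathcal{B}_\varepsilon(x)$ almost surely, so the supremum defining $N(x)$ is at least $\mathbb{I}\{h(x)\neq h(a(x))\}=1$. Hence $N(x)=1$ and the inequality holds.

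Taking the joint expectation over $x\sim P_{\mathcal{X}}$ and $a\sim P_{\mathfrak{A}}$ then gives
\[
V(h)\le \mathbb{E}_{x,a}[N(x)] = \mathbb{E}_x[N(x)] = 1-R_{class}(h).
\]
The middle equality holds because $N$ does not depend on $a$. This is \eqref{eq:simple_bound_ch2_part2}. The remark in the statement that (A2) follows from (A0) needs only a one-line check: if $h(a(x))\in\badY$ and $h(x)\notin\badY$ almost surely, the two outputs differ.

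For the relaxed bound \eqref{eq:relaxed_bound_ch2_part2}, I will drop (A0) and split on whether the clean output is already bad. Pointwise,
\[
\mathrm{Impact}(h(a(x)))\le \mathbb{I}\{h(x)\in\badY\}+\mathbb{I}\{h(a(x))\in\badY,\ h(x)\notin\badY\}.
\]
On the second event the outputs differ, so by (A1) that indicator is bounded by $N(x)$ exactly as before. Taking expectations gives $V(h)\le\beta+1-R_{class}(h)$. Here I use the intrinsic version of (A2), namely that a bad perturbed output together with a good clean output forces a change; this holds automatically for indicator Impact. In the relaxed setting (A2) should not be assumed on the bad-clean event.

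I expect the main obstacle to be measurability rather than the inequality. $N(x)$ is a supremum over a possibly uncountable neighborhood, and it must be measurable for $R_{class}$ to be defined at all. I would handle this by noting that $\Xspace=\Vspace^*$ is countable, so for the Hamming and Levenshtein metrics the supremum is over a countable set and is measurable. For the semantic metric I would still index over countably many sequences, so the same argument works. In general, one can read the expectation as an outer expectation, and the pointwise bound then still integrates correctly. The "almost surely" qualifiers in (A0) and (A1) only remove null sets and do not affect the expectations.
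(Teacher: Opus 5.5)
Your proposal is correct and follows the paper's proof: you bound $\mathrm{Impact}(h(a(x)))$ by $\mathbb{I}\{h(x)\neq h(a(x))\}$ via (A2), then by the supremum via (A1), and integrate over $(x,a)$; for the relaxed bound you split on whether $h(x)\in\badY$ and bound the bad-clean part by $\beta$. Your extra remarks on measurability (automatic here, since $\Xspace=\Vspace^*$ is countable) and on invoking (A2) only on the event $h(x)\notin\badY$ simply spell out what the paper's ``bounded as above'' leaves implicit.
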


\begin{proof}[Proof of inequality \eqref{eq:simple_bound_ch2_part2}]
By condition (A0), $h(x) \notin \badY$ holds almost surely with respect to $P_{\mathcal{X}}$.
By (A2), which holds under (A0):
$$
\text{Impact}(h(a(x))) \;\le\;\mathbb{I}\bigl\{h(x)\neq h(a(x))\bigr\}.
$$
Since $a(x) \in \mathcal{B}_\varepsilon(x)$ (by A1),
$$
\mathbb{I}\bigl\{h(x)\neq h(a(x))\bigr\} \;\le\;\sup_{x'\in\mathcal{B}_\varepsilon(x)} \mathbb{I}\{h(x)\neq h(x')\}.
$$
Consequently,
$$
\text{Impact}(h(a(x))) \;\le\;\sup_{x'\in\mathcal{B}_\varepsilon(x)} \mathbb{I}\{h(x)\neq h(x')\}.
$$
Averaging over attacks $a \sim P_{\mathfrak{A}}$ and inputs $x \sim P_{\mathcal{X}}$:
$$
V(h) = \mathbb{E}_{x,a}[\text{Impact}(h(a(x)))] \;\le\;
    \mathbb{E}_{x}\left[\sup_{x'\in\mathcal{B}_\varepsilon(x)}
            \mathbb{I}\{h(x)\neq h(x')\}\right]
    \;=\;1-R_{class}(h).
$$

\paragraph{Proof of the bound \eqref{eq:relaxed_bound_ch2_part2} under a relaxed condition (A0).}
If condition (A0) does not hold, we split the expectation into two cases:
\begin{multline*}
V(h) = \mathbb{E}_{x,a}\bigl[\text{Impact}(h(a(x))) \cdot \mathbb{I}\{h(x) \notin \badY\}\bigr] \\
    + \mathbb{E}_{x,a}\bigl[\text{Impact}(h(a(x))) \cdot \mathbb{I}\{h(x) \in \badY\}\bigr].
\end{multline*}
The first term is bounded as above: it is at most $1 - R_{class}(h)$. The second term does not exceed $\mathbb{P}(h(x) \in \badY) = \beta$, because $\text{Impact} \le 1$. In total: $V(h) \le \beta + 1 - R_{class}(h)$.

\end{proof}

\begin{remark}
\textbf{On possible generalizations.}
The bound \eqref{eq:simple_bound_ch2_part2} can be relaxed for cases where attacks leave $\mathcal{B}_\varepsilon(x)$ only with some probability, or where condition (A2) holds in a weakened form. Such generalizations lead to more complex calibrated upper bounds, and this work does not use them as rigorous theoretical results.
\end{remark}

\begin{remark}
\label{rem:theorem_applicability_scope}
\textbf{Scope of applicability of the bound.}
Inequality \eqref{eq:simple_bound_ch2_part2} applies to attacks that satisfy conditions (A1) and (A2): localized adversarial perturbations for which $a(x) \in \mathcal{B}_\varepsilon(x)$. Non-localized attacks, in particular prompt injection and jailbreak, fall outside this bound. For them, Remark~\ref{rem:empirical_v_r_model} proposes a calibrated empirical model. The two-regime framework described at the beginning of Section~\ref{subsec:r_v_theorem} covers both regimes of the attack space and forms the framework for interpreting the experimental results of Chapter~\ref{ch:ch4}.
\end{remark}

\begin{remark}
\label{rem:empirical_v_r_model}
\textbf{Empirical relationship between $V$ and $R_{class}$ when (A1) does not hold.}
When conditions (A1) and (A2) are violated, for example for prompt injection attacks, the rigorous upper bound \eqref{eq:simple_bound_ch2_part2} does not apply. However, a monotone dependence between the vulnerability $V(h)$ and the non-robustness $(1-R_{class}(h))$ is often observed. It can be approximated empirically by a function of the form:
$$
    V(h) \approx g(1-R_{class}(h)), \quad \text{where} \quad g(t)=\min\{\,1,\;k_{\mathrm{eff}}\,(t+\mathbb{E}\eta)^\alpha\}.
$$
Here $\mathbb{E}\eta$ is the mean probability that the attack leaves $\mathcal{B}_\varepsilon(x)$, and the parameters $k_{\mathrm{eff}} > 0$ and $\alpha \in (0, 1]$ are calibrated on a validation set of attacks. This relation is an empirical approximation calibrated to data. It provides a rough vulnerability estimate within the studied class of models and attacks, rather than a universal rigorous upper bound.
\end{remark}

\begin{remark}
\label{rem:empirical_proposition_status}
\textbf{Epistemological status of the empirical model of Remark~\ref{rem:empirical_v_r_model}.}
Unlike Theorem~\ref{thm:robust_vs_vuln}, which is rigorously proved under conditions (A0)--(A2), the empirical model of Remark~\ref{rem:empirical_v_r_model} represents an empirical regularity, that is, a hypothesis: the monotone dependence $V(h) \approx g(1-R_{class}(h))$ was observed experimentally but is not derived deductively from axioms. The parameters $k_{\mathrm{eff}}$ and $\alpha$ are calibrated on data and can vary across families of models and attacks.
\end{remark}

\begin{remark}
\label{rem:rstab_v_gap}
\textbf{On the relation between $R_{stab}(f)$ and $V(f)$.}
Theorem~\ref{thm:robust_vs_vuln} is proved for the metric $R_{class}(h)$, where a change of the model output is binary. For the generative metric $R_{stab}(f)$, which is based on the divergence measure $D(p_1, p_2)$, there is no direct analog of the inequality $V(h) \le 1 - R_{class}(h)$. The obstacle is that condition~(A2), which links $\text{Impact}$ to a change of a discrete output, does not transfer directly to the continuous divergence~$D_T$. For full output-sequence distributions $p$ and $q$, Pinsker's inequality $|p(\badY)-q(\badY)| \le \sqrt{\tfrac{1}{2} D_{KL}(p \| q)}$ bounds changes in event probabilities through their KL divergence. This does not directly yield a bound from $D_T(x,x')$, which averages token-level divergences along a single reference trajectory. The construction of a rigorous upper bound on $V(f)$ through $R_{stab}(f)$ remains an open problem. The empirical relation between $R_{stab}(f)$ and $R_{class}(h)$ is studied in Section~\ref{subsec:empirical_rstab_validation} of Chapter~\ref{ch:ch4}.
\end{remark}

\subsection{Lemma on Model Composition}
\label{subsec:composition_lemma}

When several LLM components are applied sequentially, the robustness of the composition is determined by the robustness of the individual components.
\begin{lemma}
\label{lemma:composition_rclass}
\textbf{On model composition for $R_{class}$.}
Let $g\colon X\to Y$ and $f\colon Y\to Z$ be deterministic models, and let the composition be $h=f\circ g\colon X\to Z$.
Then for every input $x\in X$ the pointwise inequality $S_h(x)\ge S_g(x)$ holds, where $S_u(x)=1-\sup_{x'\in\mathcal{B}_\varepsilon(x)}\mathbb{I}\{u(x)\ne u(x')\}$ for a deterministic map $u$, and, consequently,
$$
R_{class}(h)\ge R_{class}(g).
$$
In particular, if for some $x$ the first component $g$ is robust on $\mathcal{B}_\varepsilon(x)$, that is, $g(x')=g(x)$ for all $x'\in\mathcal B_\varepsilon(x)$, then the composition $h$ is also robust on this input: $h(x')=h(x)$ for all $x'\in\mathcal B_\varepsilon(x)$.
\end{lemma}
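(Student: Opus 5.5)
The argument is pointwise, and the inequality for $R_{class}$ then follows by taking expectations. Fix $x\in X$. Since $S_u(x)$ takes only the values $0$ and $1$, it suffices to show that $S_g(x)=1$ implies $S_h(x)=1$. If $S_g(x)=0$, the inequality $S_h(x)\ge 0=S_g(x)$ holds trivially.

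Suppose $S_g(x)=1$. Then $\sup_{x'\in\mathcal{B}_\varepsilon(x)}\mathbb{I}\{g(x)\ne g(x')\}=0$, so $g(x')=g(x)$ for every $x'\in\mathcal{B}_\varepsilon(x)$. Applying the deterministic map $f$ to both sides gives
$$h(x')=f(g(x'))=f(g(x))=h(x)$$
for every such $x'$. Hence every indicator $\mathbb{I}\{h(x)\ne h(x')\}$ with $x'\in\mathcal{B}_\varepsilon(x)$ vanishes, the supremum is $0$, and $S_h(x)=1$.

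This step also proves the ``in particular'' clause verbatim. The contrapositive form is equivalent and may read more cleanly: $h(x)\ne h(x')$ forces $g(x)\ne g(x')$, because $f$ is a function. So for every $x'$,
$$\mathbb{I}\{h(x)\ne h(x')\}\le \mathbb{I}\{g(x)\ne g(x')\}.$$
Taking the supremum over $\mathcal{B}_\varepsilon(x)$ on both sides preserves this inequality, which gives $S_h(x)\ge S_g(x)$ directly without a case split.

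Finally, $R_{class}(u)=\mathbb{E}_{x\sim P_{\mathcal X}}[S_u(x)]$ by the definition in Section~\ref{subsec:r_metric_discrete}. By monotonicity of expectation, the pointwise inequality yields $R_{class}(h)\ge R_{class}(g)$.

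The only point needing care is measurability of $x\mapsto S_u(x)$, which is needed for the expectation to be defined. I would handle it the same way the definition of $R_{class}$ implicitly does: assume $S_g$ and $S_h$ are measurable, or note that they are indicator functions of sets on which the paper already takes expectations. No real obstacle is expected beyond that. It is also worth noting that neighborhoods live only in $X$, so no metric on $Y$ is needed. This is why the lemma bounds $R_{class}(h)$ through the first component $g$ only.
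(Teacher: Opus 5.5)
Your proof is correct and matches the paper's argument: the same case split on $S_g(x)\in\{0,1\}$, the observation that $g$ constant on $\mathcal{B}_\varepsilon(x)$ forces $h=f\circ g$ to be constant there, and then monotonicity of expectation. Your indicator inequality $\mathbb{I}\{h(x)\ne h(x')\}\le\mathbb{I}\{g(x)\ne g(x')\}$ is the contrapositive of that same step and removes the case split, and your measurability caveat is a reasonable addition that the paper leaves implicit.
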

\begin{proof}
Define
$$
  S_h(x)=1-\sup_{x'\in\mathcal{B}_\varepsilon(x)}
        \mathbb{I}\{h(x)\ne h(x')\}\in\{0,1\}.
$$
Then $R_{class}(h)=\mathbb{E}_{x\sim P_{\mathcal{X}}}[S_h(x)]$.

If $S_g(x)=1$, then $g(x')=g(x)$ for all $x'\in\mathcal{B}_\varepsilon(x)$.
Consequently,
$
    (f\circ g)(x')=f\!\bigl(g(x')\bigr)=f\!\bigl(g(x)\bigr)=(f\circ g)(x),
$
and therefore $S_{f\circ g}(x)=1$. If instead $S_g(x)=0$, then trivially $S_{f\circ g}(x) \ge 0 = S_g(x)$.

Consequently, $S_{f\circ g}(x)\ge S_g(x)$ holds for all $x\in X$.
Taking expectations, we obtain $R_{class}(f\circ g)=\E[S_{f\circ g}(x)]\ge\E[S_g(x)]=R_{class}(g)$.
\end{proof}

\begin{remark}
\textbf{On the stochasticity of LLMs.}
The lemma assumes deterministic models, corresponding to greedy decoding with deterministic argmax selection at each generation step. In general, LLMs are stochastic. Extending the lemma to the stochastic case requires a definition of robustness for distributions, as in $R_{stab}$, and an account of how stochasticity propagates through the composition, which is a harder problem.
\end{remark}

In the deterministic case, the inequality $R_{class}(h)\ge R_{class}(g)$ is global: the second component $f$ can only ``merge'' distinct outputs of $g$, but cannot ``split'' coinciding ones. The robustness of the composition does not decrease. At the same time, the inequality is a worst-case bound and assumes that both components are deterministic, with a fixed direction of composition $h = f \circ g$. Arbitrary addition of components to the system does not guarantee that robustness is preserved: the reverse order of composition, stochastic decoding, or feedback loops between components make the bound inapplicable. For stochastic models, an analogous statement requires a separate justification, because noise in~$f$ can break robustness even when $g$ is robust. The relation of this result to the bound on $V_d$ for recursive chains is discussed in Remark~\ref{rem:composition_vs_chain}.

\section{Analytical Estimates and Upper Bounds}
\label{sec:analytical_bounds}

The practical use of robustness metrics requires analytical estimates and bounds that allow one to compute or approximate these metrics. This section covers upper bounds on $R_{class}$ for bounded $\varepsilon$, vulnerability to adaptive attacks, and asymptotic properties of the estimates as the dataset size grows.

\subsection{Upper Bounds on \texorpdfstring{$R_{class}(h)$}{R\_class(h)} for Bounded \texorpdfstring{$\varepsilon$}{epsilon}}
\label{subsec:r_upper_bounds}

Consider upper bounds on the robustness metric $R_{class}(h)$ for a bounded perturbation radius $\varepsilon$.
\begin{theorem}
\textbf{Upper bound on $R_{class}(h)$.}
Let $h = \mathrm{Dec} \circ f$ be a deterministic decision operator that maps from the input space $\mathcal{X}$, equipped with a metric $d$, to a discrete output space.
Then, for bounded $\varepsilon$:
$$
R_{class}(h) \leq 1 - \Pbb(h(X) \neq h(X') \text{ and } d(X, X') \leq \varepsilon)
$$
where $X$ and $X'$ are independent random variables with the same distribution $P_{\mathcal{X}}$ on $\mathcal{X}$.
\end{theorem}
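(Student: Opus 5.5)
The plan is to rewrite $1-R_{class}(h)$ as the probability of a ``bad neighbourhood'' event and then show that the joint event in the statement is contained in it. By definition,
$$
1-R_{class}(h)=\mathbb{E}_{X\sim P_{\mathcal{X}}}\Bigl[\sup_{x'\in\mathcal{B}_\varepsilon(X)}\mathbb{I}\{h(X)\neq h(x')\}\Bigr]=\mathbb{E}_{X}\bigl[\phi(X)\bigr],
$$
where $\phi(x)\in\{0,1\}$ equals $1$ exactly when some $x'$ with $d(x,x')\le\varepsilon$ satisfies $h(x')\neq h(x)$. The claimed inequality is therefore equivalent to
$$
\Pbb\bigl(h(X)\neq h(X'),\ d(X,X')\le\varepsilon\bigr)\;\le\;\mathbb{E}_X[\phi(X)].
$$

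The key pointwise step is the following. For every pair $(x,x')$,
$$
\mathbb{I}\{h(x)\neq h(x')\}\,\mathbb{I}\{d(x,x')\le\varepsilon\}\;\le\;\phi(x).
$$
If the left-hand side equals $1$, then $x'\in\mathcal{B}_\varepsilon(x)$ and $h(x')\neq h(x)$, so this $x'$ witnesses the supremum and $\phi(x)=1$. Otherwise the left-hand side is $0\le\phi(x)$.

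Next, I take expectations with respect to the product measure $P_{\mathcal{X}}\otimes P_{\mathcal{X}}$ of $(X,X')$; this is where independence is used, though any coupling with first marginal $P_{\mathcal{X}}$ would also work. The left-hand side integrates to the joint probability. The right-hand side depends only on $X$, so by Fubini/Tonelli for nonnegative functions it integrates to $\mathbb{E}_X[\phi(X)]=1-R_{class}(h)$. Rearranging gives the statement.

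The only step I expect to need care is measurability. The product indicator is measurable as soon as $h$ and $d$ are measurable. However, $\phi$ is a supremum over a possibly uncountable ball, so its measurability is not automatic. Since $\Xspace=\Vspace^*$ is countable, I would first observe that $\phi$ is trivially measurable with respect to the discrete $\sigma$-algebra, which makes the argument rigorous in the paper's token setting. For a general metric space, I would add a remark that the same bound holds with an outer expectation, or under the standing assumption, already implicit in the definition of $R_{class}(h)$, that $\phi$ is measurable.
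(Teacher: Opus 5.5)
Your proof is correct and is essentially the paper's argument. The paper defines the event $E=\{h(X)\neq h(X'),\ d(X,X')\le\varepsilon\}$, notes that on $E$ the point $X'$ witnesses a decision change inside $\mathcal{B}_\varepsilon(X)$, and concludes $\Pbb(E)\le\Pbb_X(A_X^c)=1-R_{class}(h)$; this is your pointwise indicator inequality integrated over the pair. Your Tonelli formulation, the measurability remark for the countable space $\Vspace^*$, and the observation that independence is not actually needed are useful refinements, but they do not change the route.
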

\begin{proof}
By the definition of $R_{class}(h)$:
$$
R_{class}(h) = \E_{X \sim P_{\mathcal{X}}}\left[1 - \sup_{x_0 \in \B_\varepsilon(X)} \I\{h(X) \neq h(x_0)\}\right].
$$
Denote the event $A_X^c = \{\exists x_0 \in \B_\varepsilon(X) : h(X) \neq h(x_0)\}$. Then $R_{class}(h) = 1 - \Pbb_X(A_X^c)$.
Consider the event $E = \{ (X,X') \mid h(X) \neq h(X') \text{ and } d(X, X') \leq \varepsilon \}$.
If the event $E$ occurs for realizations $(x, x')$, then $x_0 = x'$ lies in $\B_\varepsilon(x)$ and $h(x) \neq h(x_0)$, so $A_x^c$ is true. The event $E$ implies $A_X^c$ for the first element of the pair, whence $\Pbb(E) \le \Pbb_X(A_X^c)$.
Then $1 - \Pbb_X(A_X^c) \le 1 - \Pbb(E)$, which gives
$$
R_{class}(h) \le 1 - \Pbb(h(X) \neq h(X') \text{ and } d(X, X') \leq \varepsilon),
$$
which completes the proof.
\end{proof}

\begin{remark}
\label{rem:r_upper_bound_interpretation}
\textbf{Interpretation of the upper bound on $R_{class}(h)$.}
The inequality $R_{class}(h) \leq 1 - \Pbb_{X,X'}(E)$ admits an intuitive interpretation. The more often two random inputs at a distance of at most $\varepsilon$ yield different outputs, the lower the upper bound on robustness. The central step of the proof is the relation $\Pbb_X(A_X^c) \geq \Pbb_{X,X'}(E)$. It means that the worst-case perturbation in $\B_\varepsilon(x)$ succeeds at least as often as a random one.
\end{remark}

\subsection{Estimating Vulnerability to Adaptive Attacks}
\label{subsec:v_lower_bounds}

Adaptive attacks adjust to a specific model and choose the best action for each input.
\begin{definition}
\textbf{Vulnerability to a pointwise-optimal adaptive attack.}
Let $\mathfrak{A}_{\text{adap}}$ be the set of adaptive attacks that can use information about the model $f$, and let $h=\mathrm{Dec}\circ f$ be the induced deterministic decision operator.

The vulnerability to a pointwise-optimal adaptive attack $V_{\text{opt\_adapt}}(f)$ is defined as:
$$
V_{\text{opt\_adapt}}(f) = \mathbb{E}_{x\sim P_{\mathcal{X}}}\bigl[\max_{a \in \mathfrak{A}_{\text{adap}}} \text{Impact}(h(a(x)))\bigr]
$$
where $\text{Impact}(y) = \mathbb{I}\{y \in \badY\}$.
\end{definition}
The quantity $V_{\text{opt\_adapt}}(f)$ models an adversary that can choose an optimal strategy from $\mathfrak{A}_{\text{adap}}$ separately for each input $x$. If $\mathfrak{A}\subseteq\mathfrak{A}_{\text{adap}}$, then, for a fixed attack distribution $P_\mathfrak{A}$ that does not depend on $x$, the mean vulnerability $V(f) = \mathbb{E}_{a\sim P_\mathfrak{A}}\mathbb{E}_{x\sim P_{\mathcal{X}}}[\text{Impact}(h(a(x)))]$ does not exceed $V_{\text{opt\_adapt}}(f)$.
The relation to the worst-case fixed attack $V_{\text{worst}}(f) = \sup_{a \in \mathfrak{A}} \mathbb{E}_x [\text{Impact}(h(a(x)))]$ is less straightforward. By Jensen's inequality, the maximum inside the expectation is not less than the maximum outside it: $V_{\text{opt\_adapt}}(f) \ge \max_{a \in \mathfrak{A}_{\text{adap}}} \mathbb{E}_x [\text{Impact}(h(a(x)))]$. Under the same inclusion, this gives $V_{\text{opt\_adapt}}(f) \ge V_{\text{worst}}(f)$.

\begin{remark}
\label{rem:adaptive_attack_bound_updated}
The metric $V_{\text{opt\_adapt}}(f)$ is useful for estimating the ``ceiling'' of vulnerability when the adversary has substantial adaptive capabilities. A comparison of $V_{\text{opt\_adapt}}(f)$ with $V(f)$ for some distribution $P_\mathfrak{A}$ over attacks, or with $V_{\text{worst}}(f)$, can show how much the adversary benefits from adapting the attack to each input.
\end{remark}

For any attack distribution supported on $\mathfrak{A}_{\text{adap}}$, the average vulnerability is bounded above by $V_{\text{opt\_adapt}}(f)$, the expected maximum impact over the available adaptive strategies.

\subsection{Asymptotic Properties as the Dataset Size \texorpdfstring{$\to \infty$}{-> infinity}}
\label{subsec:asymptotic_properties}

As the size of the dataset used for the empirical estimation of robustness metrics grows, the asymptotic properties of the resulting estimates become essential.

\begin{proposition}
\textbf{Convergence of the empirical estimate $\hat{R}_{class}$ to $R_{class}$.}
Let $\hat{R}_{class,n}(h)$ be the empirical estimator of $R_{class}(h)$ based on $n$ inputs $x_1,\ldots,x_n$ drawn independently from $P_{\mathcal{X}}$:
$$
\hat{R}_{class,n}(h) = \frac{1}{n} \sum_{i=1}^n \Big[1-\sup_{x'\in\mathcal{B}_\varepsilon(x_i)}\mathbb{I}\{h(x_i)\neq h(x')\}\Big]
$$
Then, as $n \to \infty$:
$$
\hat{R}_{class,n}(h) \xrightarrow{P} R_{class}(h)
$$
where $\xrightarrow{P}$ denotes convergence in probability.
\end{proposition}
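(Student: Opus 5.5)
The plan is to reduce the statement to the weak law of large numbers for i.i.d. bounded random variables. For each input define the pointwise robustness indicator
$$
S_h(x)=1-\sup_{x'\in\mathcal{B}_\varepsilon(x)}\mathbb{I}\{h(x)\neq h(x')\},
$$
which is the same quantity used in the proof of Lemma~\ref{lemma:composition_rclass}. Because it is one minus a supremum of indicators, $S_h(x)\in\{0,1\}$ for every $x$. By definition, $R_{class}(h)=\mathbb{E}_{x\sim P_{\mathcal{X}}}[S_h(x)]$ and $\hat{R}_{class,n}(h)=\frac1n\sum_{i=1}^n S_h(x_i)$. The estimator is therefore the sample mean of the random variables $Z_i=S_h(x_i)$.

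The key steps, in order, are as follows.

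\begin{enumerate}
\item Verify that $S_h$ is measurable, so that each $Z_i$ is a random variable and $R_{class}(h)$ is well defined. The statement implicitly assumes this, since $R_{class}(h)$ is defined as an expectation.
\item Since the $x_i$ are i.i.d.\ from $P_{\mathcal{X}}$ and $Z_i=S_h(x_i)$ with a fixed deterministic $h$, the $Z_i$ are i.i.d.\ Bernoulli variables with parameter $p=R_{class}(h)$.
\item They are bounded, so their variance is finite: $\operatorname{Var}(Z_i)=p(1-p)\le \tfrac14$.
\item Apply Chebyshev's inequality. For every $\delta>0$,
$$
\Pbb\bigl(|\hat{R}_{class,n}(h)-R_{class}(h)|>\delta\bigr)\le \frac{p(1-p)}{n\delta^2}\le\frac{1}{4n\delta^2}\to 0 .
$$
This is exactly convergence in probability. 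As a quantitative bonus, I would also note the Hoeffding bound $2e^{-2n\delta^2}$, and almost sure convergence via the strong law.
\end{enumerate}

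I expect the main obstacle to be the measurability of $x\mapsto \sup_{x'\in\mathcal{B}_\varepsilon(x)}\mathbb{I}\{h(x)\ne h(x')\}$, not the limit argument itself.

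\begin{itemize}
\item In the setting of Definition~\ref{def:token_space}, $\mathcal{X}=\mathcal{V}^*$ is countable, so every function on it is measurable with respect to the discrete $\sigma$-algebra, and the issue disappears.
\item For a general metric input space, I would first rewrite the supremum as the indicator of the event $\{x: \exists x'\in\mathcal{B}_\varepsilon(x),\ h(x')\neq h(x)\}$. I would then argue measurability either by assuming it, as the paper does implicitly when defining $R_{class}$, or by using separability to replace the supremum with one over a countable dense set when $h$ is piecewise constant on open sets.
\end{itemize}

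A secondary remark concerns computability rather than the proof. The estimator presumes that the inner supremum is evaluated exactly for each sample. If it is instead approximated by a finite attack search, the same argument gives convergence to the expectation of the approximated indicator, which is an upper bound on $R_{class}(h)$.
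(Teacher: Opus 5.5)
Your proposal is correct and takes the same approach as the paper. The paper also writes $\hat{R}_{class,n}(h)$ as the sample mean of the i.i.d.\ $[0,1]$-valued variables $Z_i$ with mean $R_{class}(h)$ and cites the law of large numbers (adding a CLT remark). Your Chebyshev bound just spells out that law explicitly, and your measurability remark (trivial on the countable space $\mathcal{V}^*$) addresses a point the paper leaves implicit.
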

\begin{proof}
Denote $Z_i = 1-\sup_{x'\in\mathcal{B}_\varepsilon(x_i)}\mathbb{I}\{h(x_i)\neq h(x')\}$. The quantities $Z_i$ are independent identically distributed random variables with $\mathbb{E}[Z_i] = R_{class}(h)$ and $Z_i \in [0, 1]$.

By the law of large numbers:
$$
\hat{R}_{class,n}(h) = \frac{1}{n} \sum_{i=1}^n Z_i \xrightarrow{P} \mathbb{E}[Z_i] = R_{class}(h)
$$
Moreover, by the central limit theorem:
$$
\sqrt{n}(\hat{R}_{class,n}(h) - R_{class}(h)) \xrightarrow{d} \mathcal{N}(0, \sigma^2)
$$
where $\sigma^2 = \text{Var}(Z_i)$ and $\xrightarrow{d}$ denotes convergence in distribution.

Hence $\hat{R}_{class,n}(h) \xrightarrow{P} R_{class}(h)$ as $n \to \infty$, which completes the proof. Analogous statements hold for the empirical estimates of $R_{stab}(f)$ and $V(f)$ under appropriate conditions on the moments of the random variables.
\end{proof}

The proposition is a direct consequence of the law of large numbers for bounded random variables. It guarantees that, for a sufficiently large dataset, the empirical estimate of the robustness metric approximates the true value arbitrarily well with probability arbitrarily close to one.

\section{Robustness of Compositional Systems}
\label{sec:compositional_systems}

Systems based on large language models often combine several interacting components. This section addresses the robustness evaluation of LLM pipelines with retriever components, recursive chains of LLM agents, and majority-vote ensembles.

\subsection{LLM + Retriever Pipelines}
\label{subsec:llm_retriever}

Pipelines that combine an LLM with an information retrieval component, or retriever, are common in question answering systems. The robustness of such systems is determined by the robustness of both the LLM and the retriever component.

\begin{proposition}
\label{thm:llm_retriever_robustness}
\textbf{Compositional robustness of LLM + retriever.}
Let $r: \mathcal{X} \to 2^{\mathcal{D}}$ be a retriever component that retrieves relevant documents from the database $\mathcal{D}$ and satisfies the Lipschitz condition:
$$
\forall x, x' \in \mathcal{X}: d_D(r(x), r(x')) \leq K_r \cdot d_X(x, x')
$$
where $K_r > 0$ is the Lipschitz constant of the retriever, $d_X$ is a metric on the query space, and $d_D$ is a metric on the space of document sets.

Let $f: \mathcal{X} \times 2^{\mathcal{D}} \to \mathcal{Y}$ be a language model that generates a response based on the query and the retrieved documents and satisfies the Lipschitz conditions:
\begin{multline*}
\forall x \in \mathcal{X}, \forall D_1, D_2 \in 2^{\mathcal{D}}: \\
d_Y(f(x, D_1), f(x, D_2)) \leq K_{f,D} \cdot d_D(D_1, D_2)
\end{multline*}
\begin{multline*}
\forall x_1, x_2 \in \mathcal{X}, \forall D \in 2^{\mathcal{D}}: \\
d_Y(f(x_1, D), f(x_2, D)) \leq K_{f,X} \cdot d_X(x_1, x_2)
\end{multline*}
where $K_{f,D}, K_{f,X} > 0$ are the Lipschitz constants of the LLM with respect to the documents and the query, and $d_Y$ is a metric on the outputs.

Then the compositional system $s(x) = f(x, r(x))$ satisfies the Lipschitz condition with the constant $K_s \leq K_{f,D} \cdot K_r + K_{f,X}$:
$$
\forall x, x' \in \mathcal{X}: d_Y(s(x), s(x')) \leq (K_{f,D} \cdot K_r + K_{f,X}) \cdot d_X(x, x')
$$
\end{proposition}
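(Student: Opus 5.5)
The plan is the standard triangle-inequality argument in the output metric $d_Y$, inserting an intermediate point that changes only one argument of $f$ at a time. Fix $x, x' \in \mathcal{X}$ and write $s(x) = f(x, r(x))$ and $s(x') = f(x', r(x'))$. I will introduce the hybrid output $f(x, r(x'))$. In it, the query is taken from the original input and the document set from the perturbed one.

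By the triangle inequality for $d_Y$,
$$
d_Y(s(x), s(x')) \le d_Y\bigl(f(x, r(x)), f(x, r(x'))\bigr) + d_Y\bigl(f(x, r(x')), f(x', r(x'))\bigr).
$$
The two terms are handled separately:
\begin{itemize}
\item In the first term the query argument is fixed at $x$. The Lipschitz condition of $f$ in the document argument, applied with $D_1 = r(x)$ and $D_2 = r(x')$, bounds it by $K_{f,D}\, d_D(r(x), r(x'))$. The Lipschitz condition on the retriever then bounds this by $K_{f,D} K_r\, d_X(x, x')$.
\item In the second term the document argument is fixed at $D = r(x')$. The Lipschitz condition of $f$ in the query argument bounds it by $K_{f,X}\, d_X(x, x')$. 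This condition holds uniformly over all $D \in 2^{\mathcal{D}}$, so in particular it holds for $D = r(x')$.
\end{itemize}
Summing the two bounds gives $d_Y(s(x), s(x')) \le (K_{f,D} K_r + K_{f,X})\, d_X(x,x')$. Since $x, x'$ were arbitrary, $s$ is Lipschitz. Its best constant $K_s$, the infimum of admissible constants, is at most $K_{f,D} K_r + K_{f,X}$.

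There is no real obstacle here. The only point needing care is to check that the hypotheses are quantified uniformly enough to justify the hybrid point. The document-Lipschitz bound must hold for every fixed query, and the query-Lipschitz bound must hold for every fixed document set, and the statement assumes exactly this. A secondary point is that $d_Y$ must satisfy the triangle inequality, so it must be a genuine metric or at least a pseudometric. This is part of the hypotheses. I will remark that the argument fails for non-metric discrepancies such as KL divergence, and that it works equally with the other hybrid $f(x', r(x))$.
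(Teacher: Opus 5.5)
Your proposal is correct and follows the paper's proof: the same intermediate point $f(x, r(x'))$, the triangle inequality in $d_Y$, the document-Lipschitz bound combined with the retriever bound for the first term, and the query-Lipschitz bound with $D = r(x')$ for the second. Your side remarks are also correct: the argument needs a genuine triangle inequality, so it fails for KL, and the other hybrid $f(x', r(x))$ works just as well.
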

\begin{proof}
Consider two arbitrary queries $x, x' \in \mathcal{X}$. We bound the distance between the corresponding outputs of the system:
\begin{align*}
d_Y(s(x), s(x')) &= d_Y(f(x, r(x)), f(x', r(x'))) \\
&\leq d_Y(f(x, r(x)), f(x, r(x'))) + d_Y(f(x, r(x')), f(x', r(x'))) \\
&\quad (\text{triangle inequality})
\end{align*}
By the Lipschitz condition for $f$ in the second argument, that is, the documents:
$$
d_Y(f(x, r(x)), f(x, r(x'))) \leq K_{f,D} \cdot d_D(r(x), r(x')).
$$
By the Lipschitz condition for $r$:
$$
d_D(r(x), r(x')) \leq K_r \cdot d_X(x, x').
$$
Consequently, the first term does not exceed $K_{f,D} \cdot K_r \cdot d_X(x, x')$.

By the Lipschitz condition for $f$ in the first argument, that is, the query:
$$
d_Y(f(x, r(x')), f(x', r(x'))) \leq K_{f,X} \cdot d_X(x, x').
$$
Then:
\begin{align*}
d_Y(s(x), s(x')) &\leq K_{f,D} \cdot K_r \cdot d_X(x, x') + K_{f,X} \cdot d_X(x, x') \\
&= (K_{f,D} \cdot K_r + K_{f,X}) \cdot d_X(x, x')
\end{align*}
It follows that the system $s(x) = f(x, r(x))$ satisfies the Lipschitz condition with the constant $K_s \leq K_{f,D} \cdot K_r + K_{f,X}$, which completes the proof.
\end{proof}

\begin{remark}
\label{rem:compositional_robustness_formulation}
\textbf{Applying the bound to composite systems.}
Applying Proposition~\ref{thm:llm_retriever_robustness} in practice requires choosing the metric $d_D$ and estimating the Lipschitz constants.

\textbf{The metric $d_D$.} The Jaccard distance $d_D(D_1, D_2) = 1 - |D_1 \cap D_2| / |D_1 \cup D_2|$ accounts only for the composition of the document sets. When the differences are semantic, an embedding-based distance is preferable: the cosine measure between the averaged vectors of the sets or the Earth Mover's Distance between sets of embeddings.

\textbf{Lipschitz constants.} An analytical computation of $K_r$, $K_{f,D}$, $K_{f,X}$ for neural networks is, as a rule, impossible. The alternative is an empirical estimate on a validation set $\mathcal{D}_{val}$. For the retriever component, pairs of close queries $(x_i, x'_i)$ are generated and the ratio $\rho_i = d_D(r(x_i), r(x'_i)) / d_X(x_i, x'_i)$ is computed. The estimate $\hat{K}_r$ is taken as the maximum of $\rho_i$, or as the 99th percentile for robustness to outliers. The constants $\hat{K}_{f,D}$ and $\hat{K}_{f,X}$ are estimated similarly on the corresponding spaces. A plug-in estimate for the composite system is $\hat{K}_s:=\hat{K}_{f,D}\hat{K}_r+\hat{K}_{f,X}$. This is an empirical sensitivity estimate, not a certified global Lipschitz bound: the sample maximum of $\rho_i$ never exceeds the true constant $K_r$, and the 99th percentile does not bound even the observed ratios. The contributions $\hat{K}_{f,D}\hat{K}_r$ and $\hat{K}_{f,X}$ can be compared to assess which pathway dominates this estimate.
\end{remark}

\subsection{Recursive Chains of LLM Agents}
\label{subsec:recursive_chains}

In recursive chains of LLM agents, one agent's output becomes another's input, creating a risk that errors accumulate and become amplified.

\begin{proposition}
\label{thm:chain_vulnerability}
\textbf{Upper bound on vulnerability in recursive chains under the cascading failure model.}
Let $V_1$ denote the vulnerability of a single LLM agent, defined as the probability that it returns an incorrect result under attack. We introduce error indicators $E_i \in \{0,1\}$, where $E_i = 1$ means an incorrect result of the $i$-th agent under attack. Consider the \textbf{cascading fatal failure model} under the following assumptions:
\begin{enumerate}
    \item[(C1)] \textbf{Independence of errors:} the indicators $E_1, \ldots, E_d$ are modeled as statistically independent random variables on a common probability space of the attack;
    \item[(C2)] \textbf{Homogeneity:} all $d$ agents have the same error probability $\mathbb{P}(E_i = 1) = V_1$;
    \item[(C3)] \textbf{No compensation:} the result of the chain is incorrect if and only if at least one agent produced an incorrect result, that is, subsequent links do not compensate for the error of the preceding ones.
\end{enumerate}
Under assumptions (C1)--(C3), the vulnerability index of a chain of $d$ sequential agents equals:
$$
V_d = 1 - (1 - V_1)^d.
$$
If assumption (C3) is weakened to a one-sided condition, namely that the correctness of all agents implies the correctness of the chain but partial compensation of errors is possible, then, with (C1)--(C2) preserved, the equality becomes an upper bound:
$$
V_d \leq 1 - (1 - V_1)^d.
$$
\end{proposition}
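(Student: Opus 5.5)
The plan is to compute the probability of the complementary event, namely that the chain is correct. Under (C3) the chain result is incorrect if and only if $\max_i E_i = 1$. So the chain is correct exactly on the event $\bigcap_{i=1}^d \{E_i = 0\}$, and
$$
V_d = \mathbb{P}\Bigl(\bigcup_{i=1}^d \{E_i=1\}\Bigr) = 1 - \mathbb{P}\Bigl(\bigcap_{i=1}^d \{E_i=0\}\Bigr).
$$
Next, I invoke (C1). Independence of the indicators $E_1,\dots,E_d$ implies independence of the events $\{E_i=0\}$, since these are generated by the same variables. The intersection probability therefore factorizes as $\prod_{i=1}^d \mathbb{P}(E_i=0)$. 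By (C2) each factor equals $1-V_1$, which gives $V_d = 1-(1-V_1)^d$. The only point that needs care is stating explicitly that the chain outcome is a deterministic function of $(E_1,\dots,E_d)$ on the common attack probability space. Assumption (C1) is phrased exactly so that this holds, and then the factorization is immediate.

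For the weakened version, let $F$ be the indicator that the chain result is incorrect. The one-sided assumption says that $\{E_1=\dots=E_d=0\}\subseteq\{F=0\}$, so by contraposition $\{F=1\}\subseteq\bigcup_i\{E_i=1\}$. Monotonicity of probability then gives
$$
V_d = \mathbb{P}(F=1) \le \mathbb{P}\Bigl(\bigcup_i\{E_i=1\}\Bigr).
$$
The right-hand side was already evaluated above as $1-(1-V_1)^d$ using only (C1)--(C2). Compensation may make $F=0$ even when some $E_i=1$, and this is exactly why the relation is only an inequality.

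I expect no real obstacle; the main subtlety is conceptual rather than computational. The proof must be clear that independence is a modeling assumption about the $E_i$ under attack and not a consequence of the chain structure. In a real chain, agent $i$ receives the output of agent $i-1$, so errors are typically correlated. I would add a brief remark after the proof that, without (C1), only the union bound $V_d \le dV_1$ survives. This follows from $\{F=1\}\subseteq\bigcup_i\{E_i=1\}$ and subadditivity, and it requires neither independence nor (C3) in its strict form.
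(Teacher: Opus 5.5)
Your proposal is correct and follows the paper's proof essentially step for step: the paper also computes $\mathbb{P}(\overline{B})=(1-V_1)^d$ for the event $B$ that some agent errs, identifies the chain-failure event $A$ with $B$ under (C3), and uses the inclusion $A\subseteq B$ for the one-sided version. Your closing remark that $V_d\le dV_1$ holds without (C1) is also correct, since it needs only the one-sided (C3) and (C2); note, though, that it is (C3), not (C1), that makes the chain outcome a function of the $E_i$, while (C1) only provides the common probability space and the factorization.
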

\begin{proof}
Consider a chain of $d$ sequential LLM agents $f_1, f_2, \ldots, f_d$, where the output of $f_i$ is the input of $f_{i+1}$, and the composition is given by the function $F_d = f_d \circ f_{d-1} \circ \ldots \circ f_1$. Denote $A = \{$the result of the chain is incorrect$\}$ and $B = \{E_1 = 1 \text{ or } \ldots \text{ or } E_d = 1\}$, the event that at least one agent makes an error.

By independence (C1) and homogeneity (C2), the probability that no agent makes an error equals $\mathbb{P}(\overline{B}) = \prod_{i=1}^d (1 - V_1) = (1 - V_1)^d$, whence $\mathbb{P}(B) = 1 - (1 - V_1)^d$.

Under the full condition (C3), the events $A$ and $B$ coincide, so $V_d = \mathbb{P}(A) = \mathbb{P}(B) = 1 - (1 - V_1)^d$.

Under the weakened (C3), the inclusion $A \subseteq B$ is preserved: if no agent makes an error, the result of the chain is correct, that is, $\overline{B} \subseteq \overline{A}$. Hence $V_d = \mathbb{P}(A) \le \mathbb{P}(B) = 1 - (1 - V_1)^d$.

For a heterogeneous chain with vulnerabilities $V_1^{(i)}$, similarly $V_d = 1 - \prod_{i=1}^d (1 - V_1^{(i)})$ under the full (C3), and if $V_1^{(i)} \leq V_1$ for all $i$, then $V_d \leq 1 - (1 - V_1)^d$.
\end{proof}

Under assumptions (C1)--(C3), vulnerability is nondecreasing with chain depth and increases strictly when $0<V_1<1$, which places stringent requirements on individual agents.

\begin{remark}
\label{rem:chain_vulnerability_inequality}
\textbf{Scope of applicability of the bound $V_d \leq 1 - (1 - V_1)^d$.}
The bound relies on the simplifying assumptions (C1)--(C3), which hold only approximately in practice. Models of the same family often share systematic biases, which produces correlated errors and violates condition C1. If the error indicators are positively associated, the probability that no agent fails is at least $\prod_{i}(1-V_1^{(i)})$, so the actual vulnerability does not exceed the value given by the independence model; for $d = 2$ this holds for any positive correlation. For $d \ge 3$, pairwise correlations alone do not determine the direction of the deviation. Heterogeneity of the agents, which violates condition C2, leads to the expression $V_d = 1 - \prod_{i=1}^d (1 - V_1^{(i)})$. If $V_1^{(i)} \leq V_1$ for all $i$, it reduces to $V_d \leq 1 - (1 - V_1)^d$. Meanwhile, subsequent agents can partially compensate for the errors of the preceding ones. This violates condition C3 and lowers the actual vulnerability below the upper bound. The formula $V_d \leq 1 - (1 - V_1)^d$ is a coarse upper bound under the assumption of cascading failure and independent errors, not a universal law for arbitrary agent chains.
\end{remark}

\begin{remark}
\label{rem:composition_vs_chain}
\textbf{Relation between the composition lemma and the bound on $V_d$.}
Lemma~\ref{lemma:composition_rclass} states that the robustness of a composition $R_{class}(f\circ g)\ge R_{class}(g)$ does not decrease when a deterministic component is added. Under assumptions (C1)--(C3), Proposition~\ref{thm:chain_vulnerability} gives $V_d=1-(1-V_1)^d$, which is nondecreasing with chain depth. Under the weakened condition (C3), only the upper bound is guaranteed to be nondecreasing. There is no contradiction between these results. The lemma describes a property of one fixed composition relative to its first component, whereas the bound on $V_d$ models the probability that at least one link fails in a chain of $d$ independent agents. The first result characterizes the structural monotonicity of robustness, the second the accumulation of error probability in a stochastic scenario.
\end{remark}

\subsection{Majority-Vote Ensembles}
\label{subsec:vote_ensemble}

An ensemble that combines model outputs by majority voting can improve LLM-system robustness, primarily in classification tasks with discrete outputs. Committees of models have also proved effective in increasing the robustness of LLM-as-a-Judge systems to prompt injections.

\begin{theorem}
\label{thm:vote_ensemble_stability}
\textbf{Robustness of a majority-vote ensemble in the presence of a robust majority with a common verdict.}
Let $f_1,\dots,f_k$ be $k$ models with a discrete output, $k$ odd, and let $F_k$ be the majority voting mechanism.
For an input $x$, denote by $v_i(x)=f_i(x)$ the verdict of the $i$-th model and let
$$
S_i(x)=\mathbb I\{f_i(x)=f_i(x') \text{ for all } x'\in\mathcal B_\varepsilon(x)\}.
$$
Assume that there exists a label $c(x)$ such that for at least $\lceil k/2\rceil$ models both of the following hold:
\begin{enumerate}
    \item $S_i(x)=1$;
    \item $f_i(x)=c(x)$.
\end{enumerate}
Then
$$
F_k(x')=c(x)\qquad \text{for all } x'\in\mathcal B_\varepsilon(x),
$$
that is, the ensemble is robust on the input $x$.
\end{theorem}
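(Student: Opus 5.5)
Let $M(x)\subseteq\{1,\dots,k\}$ be the set of indices $i$ with $S_i(x)=1$ and $f_i(x)=c(x)$. By hypothesis $|M(x)|\ge\lceil k/2\rceil$. The argument is a pointwise counting argument at a fixed perturbed input. Fix an arbitrary $x'\in\mathcal B_\varepsilon(x)$. For each $i\in M(x)$, the condition $S_i(x)=1$ gives $f_i(x')=f_i(x)$, and $f_i(x)=c(x)$ then gives $f_i(x')=c(x)$. So at least $\lceil k/2\rceil$ of the $k$ verdicts at $x'$ equal the single label $c(x)$. We need not track the models outside $M(x)$; they can vote arbitrarily, possibly for different labels.

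Next we use that $k$ is odd, so $\lceil k/2\rceil=(k+1)/2>k/2$. The label $c(x)$ therefore receives a strict majority of the votes at $x'$. At most $k-(k+1)/2=(k-1)/2<(k+1)/2$ votes remain for all other labels combined, so no other label can tie or exceed $c(x)$. This holds whether $F_k$ is read as strict-majority or as plurality voting. Hence $F_k(x')=c(x)$. Since $x'$ was arbitrary, the conclusion holds on all of $\mathcal B_\varepsilon(x)$.

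As a consistency remark, $x\in\mathcal B_\varepsilon(x)$, so the same argument gives $F_k(x)=c(x)$ as well. Hence $F_k(x')=F_k(x)$ on the whole ball, which is the ensemble analogue of $S_{F_k}(x)=1$ in the notation of Lemma~\ref{lemma:composition_rclass}. The only real obstacle is definitional. The majority mechanism $F_k$ is not fully specified for multi-class outputs, so I would state explicitly that $F_k$ returns any label receiving more than $k/2$ votes. Under both readings the strict-majority count above settles the matter and no tie-breaking rule is needed. It is also worth noting that the shared label $c(x)$ is essential: robust models that disagree with one another would not suffice, because their votes could split with no majority.
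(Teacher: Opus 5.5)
Your proof is correct and follows the paper's argument: the at least $\lceil k/2\rceil=(k+1)/2$ models that are robust on $\mathcal B_\varepsilon(x)$ with the common verdict $c(x)$ form a strict majority at every $x'\in\mathcal B_\varepsilon(x)$, so majority voting returns $c(x)$. Your extra remarks are harmless refinements that the paper leaves implicit: that the count settles both the strict-majority and the plurality reading of $F_k$, and that $x\in\mathcal B_\varepsilon(x)$.
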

\begin{proof}
By assumption, there exist at least $\lceil k/2\rceil$ models that preserve the same verdict $c(x)$ on the whole set $\mathcal B_\varepsilon(x)$. Consequently, for any $x'\in\mathcal B_\varepsilon(x)$, at least $\lceil k/2\rceil=(k+1)/2$ of the $k$ models, that is, a strict majority, output $c(x)$. Since $k$ is odd, majority voting uniquely returns $c(x)$ for all $x'\in\mathcal B_\varepsilon(x)$. The theorem is proved.
\end{proof}

\begin{corollary}
\label{cor:condorcet_ensemble}
\textbf{Committee accuracy under the Condorcet jury theorem.}
Let $c^*(x)$ denote the true verdict on the input $x$, and let $f_1,\dots,f_k$ be $k$ models with odd $k$, each of which gives a verdict $f_i(x)$ that coincides with $c^*(x)$ with probability $r > 0.5$. Let the errors of the models be mutually independent. Then the probability that the majority vote of the committee yields $c^*(x)$ satisfies:
$$
P_{ens}(k,r) = \sum_{i=\lceil k/2 \rceil}^{k} \binom{k}{i} r^i (1-r)^{k-i} \xrightarrow[k \to \infty]{k \text{ odd}} 1 \quad \text{for } r > 0.5.
$$
\end{corollary}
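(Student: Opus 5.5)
The plan is to reduce the statement to a tail bound for a binomial random variable and then apply the law of large numbers (or, more quantitatively, Hoeffding's inequality). Fix an input $x$ and define the correctness indicators $C_i=\mathbb{I}\{f_i(x)=c^*(x)\}$ for $i=1,\dots,k$. By the hypotheses of Corollary~\ref{cor:condorcet_ensemble}, the $C_i$ are mutually independent Bernoulli variables with parameter $r$, so $N_k=\sum_{i=1}^k C_i\sim\mathrm{Bin}(k,r)$.

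The first step is to justify the exact formula for $P_{ens}(k,r)$. Since $k$ is odd, the event ``the majority vote returns $c^*(x)$'' contains the event $\{N_k\ge\lceil k/2\rceil\}$: if at least $(k+1)/2$ models output $c^*(x)$, that label is a strict majority, by the same counting as in the proof of Theorem~\ref{thm:vote_ensemble_stability}. With binary verdicts the two events coincide. With more than two labels, a plurality rule could also return $c^*(x)$ with fewer correct votes, so the sum is a lower bound. I would state explicitly that $P_{ens}(k,r)$ is defined as $\mathbb{P}(N_k\ge\lceil k/2\rceil)$, which is exact in the binary case. Expanding this probability with the binomial distribution gives the displayed sum.

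The second step is the limit. Write $\delta=r-\tfrac12>0$. The complementary event $\{N_k\le (k-1)/2\}$ is contained in $\{N_k/k-r\le-\delta\}$. Hoeffding's inequality gives
\[
1-P_{ens}(k,r)\le \exp(-2k\delta^2),
\]
which tends to $0$ as $k\to\infty$ along odd $k$. Chebyshev's inequality would also suffice, giving the bound $r(1-r)/(k\delta^2)$, and so would a direct appeal to the weak law of large numbers, as used in the convergence proposition for $\hat R_{class,n}$.

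There is no substantial obstacle here. The only point needing care is the one in the first step: matching the majority-vote event to the binomial tail when there are more than two labels, and noting that the corollary concerns accuracy at clean inputs rather than $R_{class}$-robustness, as the thesis itself remarks.
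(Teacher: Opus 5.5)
Your proposal is correct and follows the paper's proof: both reduce to $S_k\sim\mathrm{Bin}(k,r)$ and conclude from the concentration of $S_k/k$ around $r>1/2$. The paper appeals to the law of large numbers directly, and your Hoeffding bound $1-P_{ens}(k,r)\le\exp(-2k\delta^2)$ only adds an explicit rate; your remark that the binomial tail is exact for binary verdicts and only a lower bound under plurality voting with more labels is a valid refinement that the paper leaves implicit.
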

\begin{proof}
The number of correct verdicts $S_k = \sum_{i=1}^k \mathbb{I}\{f_i(x) = c^*(x)\}$ has the distribution $\mathrm{Bin}(k, r)$. For $r > 0.5$, by the law of large numbers $S_k/k \xrightarrow{P} r > 1/2$, whence $\mathbb{P}(S_k \ge \lceil k/2 \rceil) \to 1$.
\end{proof}

\begin{remark}
While Theorem~\ref{thm:vote_ensemble_stability} guarantees robustness in the presence of a deterministic robust majority, Corollary~\ref{cor:condorcet_ensemble} reveals the probabilistic mechanism by which the accuracy of a committee increases with its size. The independence of verdicts is an idealization here. On overlapping training data, models acquire common biases, and full independence is unattainable. Meanwhile, the experimental data of Chapter~\ref{ch:ch4} show that heterogeneous committees retain a sufficiently low error correlation for the method to be effective in practice.
\end{remark}

Under the assumptions of Corollary~\ref{cor:condorcet_ensemble}, committee accuracy approaches one as the odd number of models $k$ increases, provided that each model has accuracy $r>0.5$. This accuracy result alone does not establish an increase in $R_{class}$. A statement about $R_{class}$ requires a stronger assumption: if, for each model, the event $\{f_i(x')=c^*(x) \text{ for all } x'\in\mathcal B_\varepsilon(x)\}$ has probability $\rho>0.5$ under $P_{\mathcal{X}}$ and these events are mutually independent, then Theorem~\ref{thm:vote_ensemble_stability} gives $R_{class}(F_k)\ge P_{ens}(k,\rho)$.

\begin{remark}
\label{rem:vote_ensemble_generative}
\textbf{Applicability of the theorem to $R_{stab}(f)$.}
The proved theorem applies rigorously to $R_{class}(h)$, where model outputs are discrete and unambiguously comparable. For generative models, a separate probability-averaging argument uses the joint convexity of the Jensen--Shannon divergence, discussed in Remark~\ref{rem:jsd_convexity_gap} of Chapter~\ref{ch:ch3}. If all component models and the ensemble are evaluated along the same reference trajectory, with the same generation horizon and fixed mixture weights, the ensemble's $D_T$ is at most the corresponding weighted mean of the component divergences. This argument concerns probability averaging, not voting over decoded verdicts. The experimental data of Chapter~\ref{ch:ch4} confirm that ensembling increases the robustness of LLM-as-a-Judge systems, where the final verdicts are aggregated.
\end{remark}

\section{Comparison with Existing Metrics}
\label{sec:metrics_comparison}

We compare the proposed robustness metrics with existing measures of language-model quality and safety.

\subsection{Completeness Criteria}
\label{subsec:completeness_criteria}

The completeness and applicability of robustness metrics are assessed against the following criteria:

\begin{itemize}
    \item \textbf{Invariance:} The metric is invariant to inessential transformations of the model or the data.
    \item \textbf{Sensitivity:} Changes that affect the robustness of the model are reflected in the values of the metric.
    \item \textbf{Computability:} The metric can be computed efficiently in practice.
\end{itemize}

We compare the proposed metrics $R_{stab}(f)$ and $V(f)$ with existing measures: BLEU/ROUGE-like metrics~\cite{papineni2002bleu,lin2004rouge}, which score the overlap between an output and a reference text, and a perplexity-based score, which thresholds the perplexity of the input as in the Perplexity Check filter of Chapter~\ref{ch:ch3}~\cite{jain_baseline_2023}. The ratings in Table~\ref{tab:metrics_comparison_updated_ch2} are qualitative assessments against the three criteria above, not measured values:

\begin{table}[ht]
\centering
\small
\begin{tabular}{|l|c|c|c|}
\hline
\textbf{Metric} & \textbf{Invar.} & \textbf{Sens.} & \textbf{Comput.} \\
\hline
$R_{stab}(f)$ & High & High & Med./High \\
$V(f)$ & High & High & High \\
BLEU/ROUGE-like metrics & Medium & Low & High \\
Perplexity-based score & Low & Medium & High \\
\hline
\end{tabular}
\caption{Qualitative comparison of metric properties.}
\label{tab:metrics_comparison_updated_ch2}
\end{table}

Table~\ref{tab:metrics_comparison_updated_ch2} presents a qualitative assessment of the proposed metrics and comparison measures in terms of invariance, sensitivity, and computational feasibility. The proposed metrics rate high on invariance and sensitivity. Their exact computation, however, requires more resources, because it involves either an exhaustive search for the $\sup$ or a simulation of many attacks.

\subsection{Application of the Metrics: Model Comparison}
\label{subsec:case_study}

Chapter~\ref{ch:ch4} uses experimental data to illustrate how $R_{stab}$, $V$, and $\text{Risk}$ are computed.

\subsection{Relation Between the Theoretical Metric \texorpdfstring{$R_{stab}(f)$}{R\_stab(f)} and Empirical Indicators}
\label{subsec:r_stab_empirical_bridge}

The metric $R_{stab}(f)$ is defined through the stepwise divergence of probability distributions and, in general, requires access to the internal distributions of the model at each generation step. For proprietary models (GPT-4, Claude-3-Opus), such access is not available, which makes a direct computation of $R_{stab}(f)$ impossible in the black-box setting. Even for open-weight models, a direct computation of $D_T(x,x')$ for all $x' \in \mathcal{B}_\varepsilon(x)$ is computationally expensive.

The experiments in Chapter~\ref{ch:ch4} therefore use the empirical estimate
$$
\widehat{R}_{emp}(f) = 1 - \mathrm{ASR}(f, \mathfrak{A}_{eval}),
$$
which estimates $1-V(h)$ for the fixed attack family $\mathfrak{A}_{eval}$. Under the assumptions below, this population quantity is an upper bound on $R_{class}(h)$. Its use as a proxy for $R_{stab}(f)$ requires separate empirical justification.

\begin{proposition}
\label{prop:r_emp_proxy}
\textbf{Justification of $\widehat{R}_{emp}$ as a proxy for $R_{class}$.}
For tasks with a discrete output, including LLM-as-a-Judge with a finite set of verdicts, let the following conditions hold:
\begin{enumerate}
    \item the pairs $(x_i,a_i)$ are sampled i.i.d. from $P_{\mathcal{X}}\otimes P_{\mathfrak{A}}$, with $P_{\mathfrak{A}}$ supported on $\mathfrak{A}_{eval}$;
    \item the success of an attack is determined by the indicator $\mathrm{Impact}(h(a(x))) = \mathbb{I}\{h(a(x)) \in \badY\}$ from Theorem~\ref{thm:robust_vs_vuln};
    \item assumptions (A0)--(A1) of the same theorem hold.
\end{enumerate}
Then the estimate $\widehat{R}_{emp}(f) = 1 - \mathrm{ASR}$ is consistent for the population quantity $1 - V(h)$, and for the population limit $1 - V(h) \ge R_{class}(h)$ holds. For a finite sample of size $n$, the inequality $\widehat{R}_{emp}(f) \ge R_{class}(h)$ holds up to a random error of order $O_p(n^{-1/2})$.
\end{proposition}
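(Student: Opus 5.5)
The argument splits into three parts: a consistency claim, a population inequality, and a finite-sample statement.

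For consistency, I define for each sampled pair the variable $W_i = \mathbb{I}\{h(a_i(x_i)) \in \badY\}$, so that $\mathrm{ASR} = \frac1n\sum_{i=1}^n W_i$ and $\widehat{R}_{emp}(f) = 1 - \frac1n\sum_i W_i$. By condition~1 the $W_i$ are i.i.d. They take values in $\{0,1\}$. By condition~2 and the definition of $V(h)$ in Theorem~\ref{thm:robust_vs_vuln}, their mean is $\mathbb{E}[W_i] = V(h)$. The weak law of large numbers then gives $\widehat{R}_{emp}(f) \xrightarrow{P} 1 - V(h)$. This mirrors the proof of the convergence proposition for $\hat R_{class,n}$ in Section~\ref{subsec:asymptotic_properties}.

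For the population inequality, I invoke Theorem~\ref{thm:robust_vs_vuln} directly. Condition~3 supplies (A0) and (A1). The theorem itself observes that (A2) follows automatically from (A0) when Impact is the indicator of $\badY$, which is what condition~2 guarantees. Hence $V(h) \le 1 - R_{class}(h)$. Rearranging this gives $V(h) \le 1-R_{class}(h)$ in a form about vulnerability, and the statement asks for a comparison of $1-V(h)$ with $R_{class}(h)$, so I need to fix the direction carefully.

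This direction is the main obstacle. Rearranging $V \le 1 - R_{class}$ yields $R_{class} \le 1 - V$, which is exactly the claimed $1 - V(h) \ge R_{class}(h)$. I will write the rearrangement explicitly to avoid a sign slip. Conceptually, the bound says that a worst-case robust input can never be attacked successfully by a localized attack, so the fraction of non-attacked inputs dominates the fraction of robust inputs.

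For the finite-sample part, I write
\[
\widehat{R}_{emp}(f) - R_{class}(h) = \bigl(\widehat{R}_{emp}(f) - (1-V(h))\bigr) + \bigl(1 - V(h) - R_{class}(h)\bigr).
\]
The second term is nonnegative by the population inequality. The first term is $O_p(n^{-1/2})$ by the central limit theorem, or by Chebyshev's inequality, since the variance is $V(h)(1-V(h))/n \le 1/(4n)$. Together these give $\widehat{R}_{emp}(f) \ge R_{class}(h) - O_p(n^{-1/2})$, which is the stated inequality up to the random error. If a nonasymptotic form is wanted, Hoeffding's inequality gives $\widehat{R}_{emp}(f) \ge R_{class}(h) - \sqrt{\log(1/\delta)/(2n)}$ with probability at least $1-\delta$.
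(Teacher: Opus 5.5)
Your proposal is correct and follows essentially the same route as the paper: the law of large numbers applied to the i.i.d.\ indicators gives consistency of $1-\mathrm{ASR}$ for $1-V(h)$. Theorem~\ref{thm:robust_vs_vuln}, with (A2) inherited from (A0) for the indicator impact, rearranges to $1-V(h)\ge R_{class}(h)$, and an $O_p(n^{-1/2})$ fluctuation bound handles the finite-sample claim. The differences are minor: you use the weak rather than the strong law, and either suffices for consistency; your explicit decomposition with Chebyshev or one-sided Hoeffding also gives a slightly sharper, nonasymptotic version of the paper's appeal to the central limit theorem.
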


\begin{proof}
By definition, $\mathrm{ASR} = \frac{1}{n}\sum_{i=1}^{n} \mathrm{Impact}(h(a_i(x_i)))$ is a mean of i.i.d. bounded terms with expectation $V(h) = \mathbb{E}[\mathrm{Impact}(h(a(x)))]$. By the strong law of large numbers, $\mathrm{ASR} \to V(h)$ almost surely, whence $\widehat{R}_{emp}(f) = 1 - \mathrm{ASR} \to 1 - V(h)$, which proves consistency. By Theorem~\ref{thm:robust_vs_vuln}, under (A0)--(A1), $V(h) \le 1 - R_{class}(h)$ holds, that is, $1 - V(h) \ge R_{class}(h)$ for the population quantities. By the central limit theorem, the deviation of $\widehat{R}_{emp}(f)$ from its limit is of order $O_p(n^{-1/2})$, which gives the stated finite-sample bound.
\end{proof}

The proposition gives a theoretical interpretation of the empirical results of Chapter~\ref{ch:ch4}: the population limit of $1-\mathrm{ASR}$ is $1-V(h)$, which is an upper bound on $R_{class}(h)$ under the stated assumptions, while $\mathrm{ASR}$ is a consistent estimator of $V(h)$ for the attack family $\mathfrak{A}_{eval}$. For non-localized attacks such as prompt injection, the relation is established through the calibrated model, see Remark~\ref{rem:empirical_v_r_model}.

\section{Chapter Conclusions}
\label{sec:conclusions_ch2_part2} 

The problem of evaluating LLM robustness is formalized through the definition of the input space, the output space, the attack family, and the sets of goals and consequences. The chapter distinguishes adversarial robustness $\text{AdvRobustness}(f)$ from robustness to small input perturbations, measured by $R_{stab}(f)$. Formal metrics are proposed: $R_{class}(h)$ for classification tasks, $R_{stab}(f)$ for generative models, $V(f)$ for global vulnerability, and $\text{Risk}(a, f)$ for quantifying the risk of a specific attack. For practical application, the normalization parameter $C_{max}$ and the choice of the divergence measure $D$ are specified.

The monotonicity of $R_{class}(h)$ and $R_{stab}(f)$ with respect to a strengthened defense is verified as a consistency condition of the definitions. Theorem~\ref{thm:robust_vs_vuln} establishes $V(h)\le1-R_{class}(h)$ for localized attacks when clean inputs produce no undesirable outputs, and $V(h)\le\beta+1-R_{class}(h)$ for localized attacks when the clean-input failure probability is $\beta$. The composition property for $R_{class}$ is also proved. Analytical upper bounds for $R_{class}(h)$ are obtained, and the consistency of empirical estimates is confirmed as a consequence of the law of large numbers for bounded random variables.

For compositional systems, we derive robustness bounds for LLM--retriever pipelines using Lipschitz constants, an upper bound on the vulnerability of recursive agent chains, and conditions for the robustness of majority-vote ensembles with respect to $R_{class}$.

The proposed framework has several limitations. Exact computation of the metrics for large models requires substantial computational resources. The rigorous bound $V \le 1 - R_{class}$ covers only localized attacks, whereas the transfer of the results to the metric $R_{stab}(f)$ for generative models remains an open problem. The choice of adequate distance measures in the discrete token space is a separate unsolved problem. Chapter~\ref{ch:ch4} presents the experimental validation of these metrics and their application to real models.

\chapter{Methods for Improving the Robustness of Large Language Models}\label{ch:ch3}

\section*{Chapter Overview}
The local sensitivity of a model to input perturbations determines its robustness. In this chapter, we establish a necessary condition and a sufficient condition for high values of the metric $R_{stab}(f)$ that connect the local properties of a model with its global robustness. Building on these results, the chapter systematizes architectural modifications, training methods, and multi-layer defense mechanisms that operate in the discrete token space. We also consider the trade-off between robustness and the functional utility of the model.

Known approaches are presented in Sections~\ref{sec:robustness_foundations_ch3_part3}--\ref{sec:training_robustness_methods_ch3_part3}. Section~\ref{sec:algorithms_complexity_ch3_part3} presents the methods proposed by the author of this thesis.

\section{Theoretical Foundations of Robustness Improvement}
\label{sec:robustness_foundations_ch3_part3}

This section formalizes the central concepts of robustness for large language models, which operate in the discrete token space $\Xspace = \Vspace^*$. Building on the notions of robustness introduced in Chapter~\ref{ch:ch1} (Section~\ref{sec:theoretical_foundations}) and Chapter~\ref{ch:ch2} (Section~\ref{subsec:r_metric_generative}), we trace the relationship between the local sensitivity of a model and its overall robustness.

\subsection{Formalization of Robustness and Local Sensitivity}
\label{subsec:robustness_formalization_ch3_part3}

The robustness metric $R_{stab}(f)$ for generative models is introduced in Definition~\ref{def:robustness_metric} of Chapter~\ref{ch:ch2}. In all theoretical statements of this chapter, the divergence of outputs is understood as $D_T(x,x')$ from the same definition. The shorthand notation $D\bigl(p(\cdot\mid x),\,p(\cdot\mid x')\bigr)$ appears below only where the two distributions are specified explicitly, namely in Lemma~\ref{lem:ensemble_robustness_ch3_part3} and in the regularizers of Section~\ref{subsec:robust_training_ch3_part3}. Elsewhere it is not used, to avoid ambiguity between the per-step distribution and the full distribution over sequences. A value of $R_{stab}(f)$ close to~1 corresponds to high robustness.

The local sensitivity of a model reflects the degree to which the output distribution changes under small input perturbations. In discrete spaces, the analog of the Lipschitz constant is the maximal shift of the divergence $D_T$ under the smallest admissible changes of the input, in particular for $d(x,x')=1$ in the Levenshtein metric.

\begin{lemma}
\label{lem:robustness_necessary_condition_ch3_part3}
\textbf{Necessary condition for robustness.}
If a model $f: \Xspace \rightarrow \ProbSpace{\Yspace}$ has high robustness $R_{stab}(f) \approx 1$ according to Definition \ref{def:robustness_metric}, then the maximal change of the output distribution under small perturbations, averaged over the data distribution, is bounded.

Formally, if $R_{stab}(f) = 1 - \delta_R$ for a small $\delta_R > 0$, then the average maximal normalized divergence is bounded:
$$
\Exp_{x \sim \DataDist}\left[\sup_{x' \in \Bepsilon{x}} \frac{D_T(x,x')}{C_{max}}\right] = \delta_R,
$$
and therefore
$$
\Exp_{x \sim \DataDist}\left[\sup_{x' \in \Bepsilon{x}} D_T(x,x')\right] \leq C_{max} \cdot \delta_R.
$$
\end{lemma}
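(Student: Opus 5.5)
The argument is a direct unpacking of Definition~\ref{def:robustness_metric} combined with linearity of expectation. There is no analytic content beyond checking that the relevant expectation is well defined. I will write
\[
Z(x) := \sup_{x' \in \Bepsilon{x}} \frac{D_T(x,x')}{C_{max}}
\]
for the normalized worst-case divergence at a single input. The data distribution $\DataDist$ in the statement is identified with $P_{\mathcal{X}}$ from the definition of $R_{stab}(f)$. With this notation the definition reads $R_{stab}(f) = \Exp_{x}[1 - Z(x)]$.

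\textbf{Step 1 (integrability of $Z$).} I first verify that $Z(x)\in[0,1]$ for every $x$. Each per-step term $D(p_t(\cdot\mid x),p_t(\cdot\mid x'))$ lies in $[0,C_{max}]$, by the standing boundedness requirement of Section~\ref{subsec:robustness_formalization} (for the Jensen--Shannon divergence with the base-2 logarithm, $C_{max}=1$). Hence the average $D_T(x,x')$ also lies in $[0,C_{max}]$, and so does its supremum over $\Bepsilon{x}$. This is the only point that needs care, and I expect it to be the main, if mild, obstacle. For an unbounded divergence such as $D_{KL}$, the quantity $Z(x)$ could be infinite and the expectation might fail to exist. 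I will therefore state explicitly that the lemma is read under the bounded-divergence convention. I will also assume that $Z$ is measurable, which is implicit in the very definition of $R_{stab}$.

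\textbf{Step 2 (linearity).} Since $Z$ is bounded and measurable, it is integrable, so $\Exp[1-Z] = 1-\Exp[Z]$. Substituting $R_{stab}(f)=1-\delta_R$ gives $\Exp_{x\sim\DataDist}[Z(x)] = \delta_R$, which is the displayed equality.

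\textbf{Step 3 (rescaling).} Multiplying by the positive constant $C_{max}$, and pulling it out of the supremum and the expectation, yields $\Exp[\sup_{x'} D_T(x,x')] = C_{max}\,\delta_R$. This in particular gives the stated inequality $\le C_{max}\delta_R$.

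Finally, I would remark that the content of the lemma is the interpretation, not the computation. As a further consequence, Markov's inequality applied to the nonnegative variable $Z$ gives $\Pbb(Z(x) > \tau) \le \delta_R/\tau$. So when $\delta_R$ is small, a large worst-case divergence can occur only on a set of inputs of small probability.
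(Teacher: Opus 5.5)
Your proposal is correct and follows essentially the same route as the paper: rearrange the definition of $R_{stab}(f)$ by linearity of expectation to get $\Exp[\sup D_T/C_{max}]=\delta_R$, rescale by $C_{max}$ (which in fact gives equality), and add the Markov-inequality interpretation, just as the paper does. Your explicit check that $Z(x)\in[0,1]$ under the bounded-divergence convention, which fails for an unbounded $D_{KL}$, is a small but worthwhile detail that the paper leaves implicit.
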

\begin{proof}
By definition,
$$
R_{stab}(f) = \Exp_{x \sim \DataDist} \left[1 - \sup_{x' \in \Bepsilon{x}} \frac{D_T(x,x')}{C_{max}}\right].
$$

If $R_{stab}(f) = 1 - \delta_R$, then rearranging the terms we obtain
$$
\Exp_{x \sim \DataDist} \left[\sup_{x' \in \Bepsilon{x}} \frac{D_T(x,x')}{C_{max}}\right] = \delta_R.
$$

Since $D_T(\cdot, \cdot) \geq 0$ and $C_{max} > 0$, we also have
$$
\sup_{x' \in \Bepsilon{x}} \frac{D_T(x,x')}{C_{max}} \geq 0.
$$

From
$$
\Exp_{x \sim \DataDist}\left[\sup_{x' \in \Bepsilon{x}} \frac{D_T(x,x')}{C_{max}}\right] = \delta_R,
$$
and since the integrand is nonnegative, Markov's inequality implies that, for any $\delta \in (0,1)$, for a fraction of inputs $x$ of at least $1 - \delta$ with respect to the measure $\DataDist$, the quantity $\sup_{x' \in \Bepsilon{x}} \frac{D_T(x,x')}{C_{max}}$ does not exceed $\frac{\delta_R}{\delta}$.

Formally, Markov's inequality for the nonnegative random variable $\sup_{x' \in \Bepsilon{x}} D_T(x,x')$ gives
\[
\mathbb{P}\left(\sup_{x' \in \Bepsilon{x}} D_T(x,x') > \tau\right) \le \frac{C_{max} \delta_R}{\tau}
\]
for every $\tau > 0$. For small $\delta_R$, this bounds from above the fraction of inputs with high local sensitivity.

Equivalently, let $t > 1$. If $\sup_{x' \in \Bepsilon{x}} D_T(x,x')$ exceeded $t \cdot C_{max} \cdot \delta_R$ on a fraction of inputs larger than $1/t$, then the expectation of $\sup_{x' \in \Bepsilon{x}} \frac{D_T(x,x')}{C_{max}}$ would exceed $t \cdot \delta_R \cdot \frac{1}{t} = \delta_R$, which contradicts the assumption.

Consequently, on average over $\DataDist$, the model has low sensitivity to small perturbations $x' \in \Bepsilon{x}$, and
\begin{align*}
\Exp_{x \sim \DataDist}\bigl[\sup_{x' \in \Bepsilon{x}} D_T(x,x')\bigr]
&= C_{max} \cdot \Exp_{x \sim \DataDist}\biggl[\sup_{x' \in \Bepsilon{x}} \frac{D_T(x,x')}{C_{max}}\biggr] \\
&= C_{max} \cdot \delta_R.
\end{align*}
\end{proof}

\begin{proposition}
\label{thm:robustness_sufficient_condition_ch3_part3}
\textbf{On a sufficient condition for robustness.}
Let $f: \Xspace \rightarrow \ProbSpace{\Yspace}$ be a language model. If there exists a constant $L > 0$ such that for all $x \in \Xspace$ and all $x' \in \Bepsilon{x}$ the condition of bounded local sensitivity holds:
$$
D_T(x,x') \leq L \cdot d(x, x'),
$$
then the robustness of the model $R_{stab}(f)$ is bounded from below:
$$
R_{stab}(f) \geq 1 - \frac{L \cdot \epsilon}{C_{max}}.
$$
\end{proposition}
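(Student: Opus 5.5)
The argument is a pointwise bound followed by taking expectations, the same pattern used in the monotonicity remark and in Lemma~\ref{lem:robustness_necessary_condition_ch3_part3}. Fix an input $x \in \Xspace$. For every $x' \in \Bepsilon{x}$, the hypothesis of bounded local sensitivity gives $D_T(x,x') \le L\, d(x,x')$. By Definition~\ref{def:token_perturbation}, membership in the neighborhood means $d(x,x') \le \epsilon$. Hence $D_T(x,x') \le L\epsilon$ for every such $x'$. This bound does not depend on $x'$, so it passes to the supremum:
$$
\sup_{x' \in \Bepsilon{x}} \frac{D_T(x,x')}{C_{max}} \le \frac{L\epsilon}{C_{max}},
$$
using $C_{max}>0$. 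Equivalently, the integrand in Definition~\ref{def:robustness_metric} satisfies $1-\sup_{x'} D_T(x,x')/C_{max} \ge 1 - L\epsilon/C_{max}$ for every $x$.

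Next, take the expectation over $x \sim P_{\mathcal{X}}$ and use monotonicity and linearity of expectation. Since the right-hand side is a constant, this gives $R_{stab}(f) \ge 1 - L\epsilon/C_{max}$ directly.

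The only step needing care is well-posedness rather than the inequality itself. I would note that the integrand is bounded, because $0 \le D_T \le C_{max}$ for the Jensen--Shannon choice with base-2 logarithm and $C_{max}=1$. The expectation therefore exists, provided the map $x \mapsto \sup_{x'} D_T(x,x')$ is measurable. This is automatic on the countable space $\Xspace = \Vspace^*$ with its discrete $\sigma$-algebra.

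I would close with two remarks. First, the bound is informative only when $L\epsilon < C_{max}$; otherwise it falls below $0$ and becomes vacuous. Second, because the hypothesis uses $d(x,x')$ rather than $\epsilon$ directly, the lower bound may be sharpened by replacing $\epsilon$ with $\sup_{x'\in\Bepsilon{x}} d(x,x')$ averaged over $x$. No step is a genuine obstacle.
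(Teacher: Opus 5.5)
Your proof is correct and follows the paper's argument exactly: you bound $D_T(x,x') \le L\,d(x,x') \le L\epsilon$ pointwise, pass to the supremum over $\Bepsilon{x}$, and take the expectation of the resulting constant lower bound on the integrand. Your additional remarks on well-posedness and on the sharpening via $\mathbb{E}_x\bigl[\sup_{x'\in\Bepsilon{x}} d(x,x')\bigr]$ are valid but go beyond what the paper states.
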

\begin{proof}
By definition,
$R_{stab}(f) = \Exp_{x \sim \DataDist} [1 - \sup_{x' \in \Bepsilon{x}} \frac{D_T(x,x')}{C_{max}}]$.
By the assumption of the proposition, for any $x$ and any $x' \in \Bepsilon{x}$, we have $D_T(x,x') \leq L \cdot d(x, x')$.
Since $x' \in \Bepsilon{x}$, we have $d(x, x') \leq \epsilon$. Consequently, $D_T(x,x') \leq L \cdot \epsilon$.
This inequality holds for all $x' \in \Bepsilon{x}$, therefore:
$$
\sup_{x' \in \Bepsilon{x}} D_T(x,x') \leq L \cdot \epsilon.
$$
Substituting this into the definition of $R_{stab}(f)$:
$$
R_{stab}(f) = \Exp_{x \sim \DataDist} \left[1 - \frac{\sup_{x' \in \Bepsilon{x}} D_T(x,x')}{C_{max}}\right] \geq \Exp_{x \sim \DataDist} \left[1 - \frac{L \cdot \epsilon}{C_{max}}\right].
$$
Since the expression under the expectation does not depend on $x$, we obtain:
$$
R_{stab}(f) \geq 1 - \frac{L \cdot \epsilon}{C_{max}},
$$
which completes the proof.
\end{proof}

\begin{remark}
Proposition~\ref{thm:robustness_sufficient_condition_ch3_part3} is a direct substitution of the Lipschitz condition into the definition of $R_{stab}(f)$. Its substantive value lies not in the proof technique but in the identification of bounded local sensitivity as a sufficient condition for robustness. This establishes a connection with practical methods for improving robustness, such as smoothing and ensembling, which are considered below.
\end{remark}

Thus, a uniform bound on the local sensitivity of a model is sufficient for high overall robustness, and high overall robustness implies that the local sensitivity is small on average. The methods presented below aim to bound the local sensitivity.

The distance metrics $d$ in the discrete space $\Xspace$, namely the Hamming distance $d_H$, the Levenshtein distance $d_L$, and the semantic distance $d_S$, as well as the corresponding types of perturbations (substitution, insertion, and deletion of tokens), are introduced in Section~\ref{subsec:robustness_definition_significance} of Chapter~\ref{ch:ch1}. The choice of a specific metric and budget $\epsilon$ determines against which perturbations robustness is evaluated and improved.

\subsection{Robustness Metrics for LLMs}
\label{subsec:robustness_metrics_llm_ch3_part3}

In addition to the general metric $R_{stab}(f)$, indicators of attack success and trojan detection, namely ASR, TSR, Recall, and REASR, are used to evaluate individual aspects of robustness. Rigorous definitions and computation protocols for these metrics are given in Section~\ref{subsec:evaluation_criteria_ch4_part4} of Chapter~\ref{ch:ch4}.

\section{Architecture-Level Methods for Improving Robustness}
\label{sec:arch_robustness_methods_ch3_part3}

The robustness of an LLM can be improved by modifying the architecture of the model itself, which is usually based on the Transformer \cite{vaswani_attention_2017}. Such modifications aim to weaken the sensitivity of the model to small perturbations of the input data and of intermediate representations. Known approaches are systematized below.

\subsection{Modification of the Transformer Architecture}
\label{subsec:transformer_modification_ch3_part3}

The standard Transformer architecture consists of layers that include a multi-head attention mechanism and a feed-forward network (FFN), with layer normalization \cite{ba2016layer} and residual connections. We denote the operation of a single layer by $TLayer$:
$$
h_l = TLayer(h_{l-1}).
$$
$TLayer$ usually includes:
\begin{align*}
h_l^{\text{att}} &= \text{LayerNorm}(h_{l-1}) \\
a_l &= \text{MultiHeadAttention}(h_l^{\text{att}}, h_l^{\text{att}}, h_l^{\text{att}}) \\
h_l^{\text{mid}} &= a_l + h_{l-1} \\
h_l^{\text{ffn}} &= \text{LayerNorm}(h_l^{\text{mid}}) \\
f_l &= \text{FFN}(h_l^{\text{ffn}}) \\
h_l &= f_l + h_l^{\text{mid}}
\end{align*}
where $h_l$ is the output of the $l$-th layer.

Modifications aimed at improving robustness are considered below.

\textbf{1. Robust Attention Mechanism.}
Attacks of the GCG type exploit the sensitivity of the attention mechanism to small perturbations of queries (Q), keys (K), and values (V). Reducing this sensitivity is the main task of robust modifications.
Standard attention (scaled dot-product attention):
$$
\text{Attention}(Q, K, V) = \text{softmax}\left(\frac{QK^T}{\sqrt{d_k}}\right)V.
$$
Softmax smoothing \cite{hinton2015distilling} introduces a temperature parameter. With $T_{temp} > 1$, the function $\text{softmax}( (QK^T/\sqrt{d_k}) / T_{temp} )$ flattens the distribution of weights and weakens the response to small changes in $Q$ or $K$. We write $T_{temp}$ for the temperature to distinguish it from the transpose $T$. Entropy regularization \cite{entropy_regularization_attention_2023} pursues a related goal: a penalty for low entropy prevents the weights from concentrating on a few tokens.

Clipping and quantization of attention weights limit the influence of outliers. Mechanisms with explicit Lipschitz guarantees \cite{lipschitz_attention_2021} control sensitivity by construction. Rigorous theoretical guarantees of improvement of $R_{stab}(f)$ for the listed modifications have not yet been obtained, although empirically they increase robustness to a number of attacks \cite{robust_attention_analysis_2023}.

\textbf{2. Stochastic Layers.}
Stochasticity in forward propagation regularizes the model and increases robustness to random perturbations. Dropout \cite{srivastava2014dropout} is applied to embeddings, attention outputs, or the FFN. DropConnect \cite{wan2013regularization} and Token Dropout are also common in LLMs. At the layer level, stochastic depth \cite{huang2016deep} randomly skips entire blocks during training. At inference time, Monte Carlo Dropout makes it possible to estimate the uncertainty of predictions \cite{gal2016dropout}, which is indirectly related to robustness.

\textbf{3. Ensemble Methods.}
This approach combines $m$ independently trained models or models that differ in initialization or hyperparameters. Aggregation of their predictions through probability averaging or voting usually leads to more robust and accurate results \cite{lakshminarayanan2017simple}. The experiments of Chapter~\ref{ch:ch4} showed that committees of several LLMs with diverse architectures increase the robustness of LLM-as-a-Judge systems to prompt injection attacks. Committees of 5--7 models reduce the ASR of the ASA attack by 47--55~percentage points relative to the undefended Gemma-3-4B, the open model with the highest ASR point estimate (73.8\%).

Since Chapter~\ref{ch:ch4} evaluates model committees as a defense mechanism experimentally, here we limit ourselves to describing the principle of committee construction and its algorithmic implementation. With probability averaging, the output distribution of the ensemble is
$$
p_{\text{ensemble}}(y|x) = \frac{1}{m} \sum_{j=1}^{m} p_{\theta_j}(y|x).
$$

\begin{lemma}
\label{lem:ensemble_robustness_ch3_part3} 
\textbf{On the robustness of a model ensemble.}
Let $f_1, \dots, f_m: \Xspace \to \ProbSpace{\Yspace}$ be language models, and let $f_{\text{ensemble}}(x)$ be their ensemble with probability averaging with the fixed weights $1/m$: $p_{\text{ensemble}}(y|x) = \frac{1}{m} \sum_{j=1}^{m} p_{\theta_j}(y|x)$. If the divergence measure $D(p, q)$ is jointly convex, that is, convex as a function of the pair of arguments $(p, q)$, in particular the total variation distance $D_{TV}(p, q) = \frac{1}{2} \sum_y |p(y) - q(y)|$ or the squared Euclidean distance, then, for every pair of inputs $x$ and $x'$, the divergence between the outputs of the ensemble is bounded by the average of the divergences of the models:
$$
D(p_{\text{ensemble}}(\cdot|x) , p_{\text{ensemble}}(\cdot|x')) \leq \frac{1}{m} \sum_{j=1}^{m} D(p_{\theta_j}(\cdot|x) , p_{\theta_j}(\cdot|x')).
$$
The inequality holds for any distributions that are averaged with the same fixed weights: for full output distributions over sequences, and for next-token distributions under a common fixed prefix.

\emph{Passage to $D_T$ and $R_{stab}$.} Let the ensemble average the next-token distributions of the models with the fixed weights $1/m$ at every generation step. Let $D_T$ of the ensemble and of every model be computed along the same reference trajectory $y^{\star}(x)$, with the same horizon $T$ and the same normalization $C_{max}$. Denote these quantities by $D_T^{\text{ens}}$ and $D_T^{(j)}$, and denote by $R_{stab}$ the metric computed under this common protocol. Then
$$
D_T^{\text{ens}}(x,x') \leq \frac{1}{m} \sum_{j=1}^{m} D_T^{(j)}(x,x'), \qquad R_{stab}(f_{\text{ensemble}}) \geq \frac{1}{m} \sum_{j=1}^{m} R_{stab}(f_j).
$$
Thus the ensemble is no more sensitive to perturbations than its members on average. It can still be more sensitive than its most robust member. If every model uses its own reference trajectory, as in the default protocol of Definition~\ref{def:robustness_metric}, these inequalities do not follow from the lemma. Averaging of full-sequence distributions does not give them either, because the next-token distributions of such a mixture have weights that depend on the input and on the prefix.

Here and below, $D$ in this lemma denotes a jointly convex divergence measure. Such measures include, in particular, the total variation distance $D_{TV}$ and the KL divergence $D_{KL}$.
\end{lemma}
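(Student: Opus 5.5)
The plan is to derive everything from the single defining property of joint convexity, applied to the mixture with equal weights. First I will extend joint convexity from two-point combinations to $m$-point combinations by induction on $m$: if $D(\lambda p_1+(1-\lambda)p_2,\lambda q_1+(1-\lambda)q_2)\le \lambda D(p_1,q_1)+(1-\lambda)D(p_2,q_2)$, then writing $\frac1m\sum_{j=1}^m p_j=\frac1m p_1+\frac{m-1}{m}\cdot\frac{1}{m-1}\sum_{j\ge2}p_j$ (and similarly for the $q_j$) reduces the $m$-term case to the $(m-1)$-term case. I then set $p_j=p_{\theta_j}(\cdot\mid x)$ and $q_j=p_{\theta_j}(\cdot\mid x')$, which gives the main inequality directly. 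For the listed examples I will briefly check joint convexity. For $D_{TV}$ it follows from the triangle inequality for $|\cdot|$. The squared Euclidean distance is a convex function of the difference $p-q$, which is linear in $(p,q)$. For $D_{KL}$ it follows from the log-sum inequality. For $\JS$ it follows because $\JS$ is a sum of KL terms whose arguments are linear in $(p,q)$. The argument is stated for arbitrary distributions mixed with the same fixed weights, so it covers both full-sequence distributions and next-token distributions under a common fixed prefix.

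For the passage to $D_T$, I will fix $x$, $x'$, and the common reference trajectory $y^\star(x)$. At each step $t$ the ensemble's next-token distribution is by construction $\frac1m\sum_j p_{\theta_j}(\cdot\mid x,y^\star_{<t})$, and the same prefix is used for $x'$. So the first part applies stepwise: $D(p^{\text{ens}}_t(\cdot\mid x),p^{\text{ens}}_t(\cdot\mid x'))\le\frac1m\sum_j D(p^{(j)}_t(\cdot\mid x),p^{(j)}_t(\cdot\mid x'))$. Averaging over $t=1,\dots,T$ and exchanging the two finite sums gives $D_T^{\text{ens}}(x,x')\le\frac1m\sum_j D_T^{(j)}(x,x')$.

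For $R_{stab}$, I will take the supremum over $x'\in\Bepsilon{x}$ and use $\sup_{x'}\frac1m\sum_j D_T^{(j)}(x,x')\le\frac1m\sum_j\sup_{x'}D_T^{(j)}(x,x')$. Dividing by $C_{max}$, subtracting from $1$, and taking $\Exp_{x\sim P_{\mathcal X}}$ then gives the bound by linearity of expectation, since $1-\frac1m\sum_j s_j=\frac1m\sum_j(1-s_j)$. I will also note that measurability and integrability are inherited from the individual terms.

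The main obstacle is justifying the protocol restriction rather than the calculus. I must make explicit that the step-$t$ ensemble distribution is a fixed-weight mixture only because the ensemble averages next-token distributions along a shared prefix. If one instead mixes full-sequence distributions, then conditioning on a prefix produces posterior weights proportional to $p_{\theta_j}(y^\star_{<t}\mid x)$, and these weights differ between $x$ and $x'$. Joint convexity then does not apply, because the two mixtures have different weights. I will state this as the reason the inequalities fail under per-model reference trajectories or under sequence-level averaging.
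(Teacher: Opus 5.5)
Your proposal is correct and follows the paper's proof essentially step for step. You use joint convexity with the equal weights $1/m$, deriving the $m$-point form by induction from the two-point definition where the paper simply invokes the generalized Jensen inequality. You then apply it at each step along the common prefix $y^{\star}_{<t}(x)$, average over $t$, bound the supremum of an average by the average of the suprema, divide by $C_{max}$, and take the expectation over $x$.

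One wording point: in the protocol remark, say that the inequalities are not guaranteed once the fixed-weight mixture structure is lost, rather than that they fail. You exhibit no counterexample.
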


\begin{proof}
Let the divergence $D(p,q)$ be a jointly convex function of the pair of distributions $(p,q)$, that is, of probability vectors. By the convexity property for linear combinations, known as the generalized Jensen inequality,
if $p_{\text{ensemble}} = \sum w_j p_j$ and $q_{\text{ensemble}} = \sum w_j q_j$ with $\sum w_j=1, w_j \ge 0$, where $w_j = 1/m$, then
$$ D\left(\sum_{j=1}^{m} w_j p_j, \sum_{j=1}^{m} w_j q_j\right) \leq \sum_{j=1}^{m} w_j D(p_j, q_j). $$
Applying this to $p_j = p_{\theta_j}(\cdot|x)$, $q_j = p_{\theta_j}(\cdot|x')$, and $w_j=1/m$, we obtain the first inequality.
The total variation distance $D_{TV}$ is jointly convex. The KL divergence $D_{KL}(p\|q)$ is also a jointly convex function of the pair $(p,q)$ \cite{cover_elements_2006}, so the lemma also applies when the KL divergence is used as the divergence measure $D$.

For the passage to $D_T$, apply the first inequality at every step $t$ to the next-token distributions $p_{\theta_j}(\cdot \mid x, y^{\star}_{<t}(x))$ and $p_{\theta_j}(\cdot \mid x', y^{\star}_{<t}(x))$. The prefix is the same for all models, and the ensemble distributions at step $t$ are their averages with the weights $1/m$. Averaging over $t=1,\dots,T$ gives the bound on $D_T^{\text{ens}}(x,x')$. For every $x$, the supremum of an average does not exceed the average of the suprema:
$$
\sup_{x' \in \Bepsilon{x}} D_T^{\text{ens}}(x,x') \leq \frac{1}{m} \sum_{j=1}^{m} \sup_{x' \in \Bepsilon{x}} D_T^{(j)}(x,x').
$$
Dividing by $C_{max}$, subtracting from one, and taking the expectation over $x \sim \DataDist$ gives the bound on $R_{stab}(f_{\text{ensemble}})$.
\end{proof}

Together with the empirical observations of \cite{lakshminarayanan2017simple}, Lemma~\ref{lem:ensemble_robustness_ch3_part3} supports ensembling with probability averaging as a method for improving robustness. LLM-as-a-Judge committees vote over decoded verdicts. This is a different aggregation rule, which the lemma does not cover. Its effect is evaluated empirically in Chapter~\ref{ch:ch4}.

\begin{remark}
\label{rem:jsd_convexity_gap}
\textbf{On the applicability of the lemma to the Jensen--Shannon divergence.}
The Jensen--Shannon divergence $\JS(p, q)$ is jointly convex in the pair $(p, q)$, like every $f$-divergence. This follows from the joint convexity of the KL divergence \cite{cover_elements_2006}. The maps $(p, q) \mapsto \bigl(p, \tfrac{p+q}{2}\bigr)$ and $(p, q) \mapsto \bigl(q, \tfrac{p+q}{2}\bigr)$ are affine, and the representation $\JS(p, q) = \tfrac{1}{2} D_{KL}\bigl(p \,\big\|\, \tfrac{p+q}{2}\bigr) + \tfrac{1}{2} D_{KL}\bigl(q \,\big\|\, \tfrac{p+q}{2}\bigr)$ is a sum of compositions of a jointly convex function with affine maps, that is, a jointly convex function. Therefore, Lemma~\ref{lem:ensemble_robustness_ch3_part3} applies to $D = \JS$: the inequality $\JS(p_{\text{ensemble}}(\cdot|x), p_{\text{ensemble}}(\cdot|x')) \le \frac{1}{m} \sum_{j} \JS(p_{\theta_j}(\cdot|x), p_{\theta_j}(\cdot|x'))$ holds for any distributions averaged with the same fixed weights. The passage to $D_T$ with the base divergence $\JS$ still requires step-wise averaging and a common reference trajectory, as stated in the lemma. Chapter~\ref{ch:ch4} does not measure $D_T$ for ensembles. Its committees aggregate verdicts by majority voting and are evaluated by ASR, so those experiments do not test this inequality directly.
\end{remark}

\textbf{4. Representation-Level Methods.}
A number of approaches aim to increase the robustness of the intermediate and input representations of the model. Adaptive normalization, in particular Conditional Layer Normalization and feature-wise conditioning \cite{perez2021film}, allows the model to adjust the normalization parameters to a specific input, which reduces sensitivity to anomalous requests. At the level of token embeddings, robustness to synonym substitutions and typos is achieved by training with perturbations added to the embeddings \cite{zhu2019freelb}, by contrastive learning \cite{gao_simcse_2021}, or by using subword representations such as BPE \cite{sennrich2016neural}. Contrastive learning at the level of hidden states, in which semantically similar inputs are mapped to nearby points of the representation space, increases semantic robustness. For this purpose, InfoNCE loss functions \cite{oord2018representation} are applied to intermediate Transformer layers \cite{chen2020simple, he2020momentum}.

Architectural modifications usually require retraining the model and can affect its performance and computational cost.

\subsection{Multi-Layer Defense with a Focus on Robustness}
\label{subsec:multilayer_defense_ch3_part3} 

\begin{figure}[ht!]
\centering
\resizebox{0.95\linewidth}{!}{%
\begin{tikzpicture}[
    layer/.style={
        draw, rounded corners=8pt,
        text width=4.2cm,
        minimum height=1.4cm,
        align=center, font=\small,
        fill=blue!8,
        inner sep=5pt},
    details/.style={font=\scriptsize, text width=6.5cm, align=left},
    arrow/.style={->, thick, >=stealth},
    extnode/.style={draw, rounded corners=4pt, fill=gray!10, font=\small,
                    minimum height=0.9cm, minimum width=2cm, align=center},
    numlabel/.style={circle, fill=blue!70, text=white, font=\bfseries\small,
                     inner sep=2pt, minimum size=18pt},
    flowlabel/.style={font=\scriptsize, fill=white, inner sep=2pt}
]

\node[extnode] (user) at (0, 2.2) {User};

\node[layer] (input) at (0, 0) {
    \textbf{Input defense}\\
    \textbf{layer}
};
\node[numlabel, left=0.3cm of input.west] {1};

\node[layer, below=1.5cm of input] (model) {
    \textbf{Model layer}\\
    \textbf{(LLM)}
};
\node[numlabel, left=0.3cm of model.west] {2};

\node[layer, below=1.5cm of model] (output) {
    \textbf{Output defense}\\
    \textbf{layer}
};
\node[numlabel, left=0.3cm of output.west] {3};

\node[extnode, below=1.5cm of output] (world) {External environment};

\node[details, right=0.8cm of input.east, anchor=west] (d1) {
    $\bullet$ Input normalization and sanitization\\
    $\bullet$ Malicious-pattern filtering\\
    $\bullet$ Anomaly detection (Perplexity Check)
};

\node[details, right=0.8cm of model.east, anchor=west] (d2) {
    $\bullet$ Robust architecture, system prompt\\
    $\bullet$ Regularization, safety alignment (RLHF)\\
    $\bullet$ Isolation of data from instructions
};

\node[details, right=0.8cm of output.east, anchor=west] (d3) {
    $\bullet$ Response calibration and filtering\\
    $\bullet$ Ensembling (model committees)\\
    $\bullet$ Content Moderation and verification
};

\draw[arrow] (user.south) -- node[flowlabel, right=4pt] {request} (input.north);
\draw[arrow] (input.south) -- node[flowlabel, right=4pt] {processed request} (model.north);
\draw[arrow] (model.south) -- node[flowlabel, right=4pt] {generated text} (output.north);
\draw[arrow] (output.south) -- node[flowlabel, right=4pt] {response} (world.north);

\draw[dashed, thick, blue!60]
      ([xshift=-1.2cm, yshift=-5pt]output.south west) -- ([xshift=-1.2cm, yshift=5pt]input.north west)
      node[midway, xshift=-0.6cm, rotate=90, font=\small\bfseries, text=blue!70] {Defense in depth};

\end{tikzpicture}%
}
\caption{Flow of a request through the layers of a multi-layer defense.}
\label{fig:defense_in_depth}
\end{figure}

The principle of defense in depth, whose practical deployment is considered in Chapter~\ref{ch:ch6} (Section~\ref{subsec:implementation_steps_ch6}), also applies to improving the overall robustness of a system. It relies on the formalization of LLM security as a service from Chapter~\ref{ch:ch2} (Section~\ref{subsec:llm_service_security}): each stage of the request-processing pipeline becomes a point where defense mechanisms are applied, which reduces the sensitivity of the system to unforeseen changes at the input or output. Specific implementations, including system instructions, Python filters, and LLM filters, are studied in~\cite{Khomsky_2025} together with their vulnerability to targeted attacks. Perplexity Check, Instruction Filtering, and Content Moderation complement this set.

The defense is organized in three layers (Fig.~\ref{fig:defense_in_depth}).

Before a request reaches the model, the \textbf{input layer} normalizes the text: it removes extraneous characters, normalizes case, and corrects obvious typos. Anomaly detectors (Perplexity Check, Instruction Filtering) and Python filters reject requests with characteristic attack patterns, and smoothing transformations \cite{ye_text-crs_2023} are applied to the input.

The architectural solutions of Section~\ref{subsec:transformer_modification_ch3_part3}, the training methods of Section~\ref{sec:training_robustness_methods_ch3_part3}, and the modification of system instructions form the \textbf{model layer}. In addition, regularization of internal states \cite{zhou_from_2024} penalizes sharp changes of activations under small modifications of the input.

At the \textbf{output layer}, probability calibration through temperature scaling \cite{guo2017calibration} and response ensembling \cite{lakshminarayanan2017simple} increase reliability. The generated text is checked for consistency and for the absence of confidential information, after which LLM filters and Python filters complete the sanitization.

The sequence of checks of the three layers is organized as a pipeline with early rejection.

\vspace{5pt}
\hrule
\vspace{5pt}
\textbf{Algorithm: Multi-layer filtering (MultiLevelDefense)}
\vspace{2pt}
\hrule
\begin{algorithmic}[1]
    \State \textbf{Input:} Request $x$, model $f$, thresholds $\tau_{ppl}$, $\tau_{tox}$.
    \State \textbf{Output:} Response $y$ or a rejection code.
    \If{$\mathrm{Perplexity}(x) > \tau_{ppl}$} \State \textbf{Return} $\text{Reject}(\text{ANOMALY})$ \EndIf
    \If{$\mathrm{ContainsInjection}(x)$} \State \textbf{Return} $\text{Reject}(\text{INJECTION})$ \EndIf
    \If{$\mathrm{Toxicity}(x) > \tau_{tox}$} \State \textbf{Return} $\text{Reject}(\text{TOXIC\_INPUT})$ \EndIf
    \State $y \leftarrow f(x)$
    \If{$\mathrm{IsHarmful}(y)$} \State \textbf{Return} $\text{Reject}(\text{HARMFUL\_OUTPUT})$ \EndIf
    \State \textbf{Return} $y$
\end{algorithmic}
\vspace{5pt}
\hrule
\vspace{5pt}

The layers are independent and target different classes of threats. The perplexity filter rejects anomalous sequences, the injection filter recognizes command markers, the toxicity classifier handles malicious input content, and the post-filter checks the response for harmful content and for leakage of confidential data. Early rejection reduces the average latency, since anomalous requests do not reach inference.

\section{Training-Level Methods for Improving Robustness}
\label{sec:training_robustness_methods_ch3_part3} 

Integrating robustness objectives directly into the training procedure transforms the standard loss minimization problem and encourages the model to become insensitive to certain classes of perturbations.

\subsection{Robust Training}
\label{subsec:robust_training_ch3_part3} 

From a formal point of view, robust training reduces to an optimization problem that accounts for the behavior of the model in the worst-case scenario in the neighborhood of each data point.

\textbf{1. Adversarial Training (AT) \cite{goodfellow_explaining_2014, madry_towards_2017}.}
The main idea is to minimize the loss on adversarial examples generated to attack the current state of the model. The optimization problem is:
$$
\min_{\theta} \Exp_{(x, y) \sim \DataDist} \left[ \max_{x' \in \Bepsilon{x}} \LossFunc(f_{\theta}(x'), y) \right].
$$
The inner maximization problem searches for the worst-case perturbation $x'$ in the neighborhood $\Bepsilon{x}$ with respect to the metric $d$, and the outer minimization problem updates the parameters $\theta$ to improve the performance of the model on these hard examples. To generate $x'$, one can use methods such as Greedy Coordinate Gradient (GCG) or more sophisticated adaptive search strategies, for example, the Adaptive Search-Based Attack (ASA) of Section~\ref{subsec:asa_pseudocode}.

The empirical effectiveness of ASA against various LLM-as-a-Judge models is reported in Chapter~\ref{ch:ch4}. These results confirm the applicability of the algorithm as a tool for adversarial testing under black-box access.

In the discrete space of an LLM, the inner maximization problem $\max_{x' \in \Bepsilon{x}} \LossFunc(f_{\theta}(x'), y)$ is a hard combinatorial problem. Common approaches to its approximate solution \cite{morris_textattack_2020} are:
\begin{itemize}
    \item \textbf{Gradient methods in the embedding space:} the gradient of the loss with respect to the token embeddings guides the search for substitutions that maximize the loss. This class includes HotFlip \cite{ebrahimi_hotflip_2018} and PGD on embeddings \cite{zhu2019freelb}.
    \item \textbf{Combinatorial search:} greedy and genetic algorithms enumerate sequences of substitutions, insertions, and deletions of tokens. Examples are TextFooler \cite{jin2019bert}, BERT-Attack \cite{li_bert-attack_2020}, and ASA for prompt injection.
    \item \textbf{Generative attacker models:} an auxiliary model is trained to produce adversarial perturbations \cite{wallace_universal_2019}.
\end{itemize}
Adversarial training effectively strengthens robustness to the types of attacks used during training. At the same time, it can reduce performance on clean data and require substantial computational resources. This trade-off, called the robustness-accuracy trade-off, is studied theoretically in \cite{tsipras_robustness_2019, zhang2019theoretically}.

\textbf{2. Regularization for Improving Robustness.}
Instead of explicitly solving the min-max problem, one can add to the standard loss function regularization terms that encourage smoothness or insensitivity of the model.
$$
\min_{\theta} \Exp_{(x, y) \sim \DataDist} \left[ \LossFunc(f_{\theta}(x), y) + \lambda \cdot \Omega_{\text{robust}}(f_{\theta}, x) \right],
$$
where $\Omega_{\text{robust}}$ is a regularizer and $\lambda > 0$. Examples of regularizers:
\begin{itemize}
    \item \textbf{Penalty for sensitivity to random perturbations:} $\Omega = \Exp_{x' \sim \mathcal{N}(x)} [D(p_\theta(\cdot|x) \| p_\theta(\cdot|x'))]$, where $\mathcal{N}(x)$ is a distribution of random perturbations of $x$, for example, noise added to the embeddings.
    \item \textbf{Penalty on the norm of the input gradient:} $\Omega = \|\nabla_{\text{emb}(x)} \LossFunc(f_{\theta}(x), y)\|^2$, applicable in the embedding space.
    \item \textbf{Methods such as Jacobian Regularization \cite{hoffman2019robust}:} a penalty on the norm of the Jacobian of the model function with respect to the input embeddings.
    \item \textbf{Virtual Adversarial Training (VAT) \cite{miyato2018virtual}:} a penalty for the change of the output under small perturbations in the direction that maximally increases this change.
\end{itemize}
Regularization is usually less computationally expensive than adversarial training, but it can give a less pronounced robustness effect against targeted attacks.

\textbf{3. Data Augmentation and Example Interpolation.}
Perturbed copies of training examples are added to the training set. The perturbations include synonym substitutions, typos, and insertions and deletions of words \cite{wei2019eda}. At a low computational cost, this technique noticeably increases robustness to natural input variations. The optimization problem is:
$$
\min_{\theta} \Exp_{(x, y) \sim \DataDist} \left[ \Exp_{t \sim \mathcal{T}_{\text{aug}}} \left[ \LossFunc(f_{\theta}(t(x)), y) \right] \right],
$$
where $\mathcal{T}_{\text{aug}}$ denotes the distribution of augmentation functions. A related idea, Mixup \cite{zhang2017mixup}, produces virtual training examples by linear interpolation of pairs of real examples and their labels. Direct interpolation of tokens is meaningless, so for text data the method is implemented in the embedding space or in the space of hidden states (MixText \cite{chen2020mixtext}, Manifold Mixup \cite{verma2019manifold}). In the latent space, interpolation smooths the model function and acts as a regularizer.

\subsection{Knowledge Distillation for Improving Robustness}
\label{subsec:distillation_robustness_ch3_part3} 

In distillation \cite{hinton2015distilling}, a compact ``student model'' ($f_S$) is trained to reproduce the behavior of a large ``teacher model'' ($f_T$). If the procedure is configured appropriately, the student inherits not only the accuracy but also the robustness of the teacher.

The standard distillation loss function is:
$$
\mathcal{L}_{\text{distill}}(\theta_S) = \Exp_{(x, y) \sim \mathcal{D}} \left[ \alpha \cdot \mathcal{L}_{\text{hard}}(f_S(x), y) + (1 - \alpha) \cdot \mathcal{L}_{\text{soft}}(f_S(x), f_T(x)) \right],
$$
where $\mathcal{L}_{\text{hard}}$ is the loss on the true labels, for example cross-entropy, $\mathcal{L}_{\text{soft}}$ is the loss on the ``soft'' labels of the teacher, for example the KL divergence between the output distributions $p_S(\cdot|x)$ and $p_T(\cdot|x)$, often with a temperature $T_{temp}>1$, and $\alpha \in [0, 1]$.

Below we describe four variants of distillation aimed at improving robustness.

\textbf{1. Distillation from a robust teacher.}
If the teacher model $f_T$ was trained with robust training methods (Section \ref{subsec:robust_training_ch3_part3}), then distillation can transfer part of this robustness to the student $f_S$. The student learns to imitate the smoothed or robust behavior of the teacher \cite{wang_robust_2023}.

\textbf{2. Adversarial Distillation \cite{goldblum2020adversarially}.}
Adversarial attacks are applied during distillation. The student $f_S$ is trained to be similar to the teacher $f_T$ not only on clean data but also on adversarial examples generated to attack the teacher or the student.
$$
\mathcal{L}_{\text{adv\_distill}} = \dots + \lambda \cdot \Exp_{(x,y)\sim \mathcal{D}} \left[ \max_{x' \in \Bepsilon{x}} \mathcal{L}_{\text{soft}}(f_S(x'), f_T(x')) \right].
$$

\textbf{3. Distillation with augmentation.}
Augmented, that is, perturbed, data $t(x)$ are used in the computation of both $\mathcal{L}_{\text{hard}}$ and $\mathcal{L}_{\text{soft}}$.

\textbf{4. Distillation from an ensemble of teachers.}
Distillation from an ensemble of robust teachers can yield a more robust student than distillation from a single teacher \cite{lin_ensemble_2020}. In this case, the loss function relies on the average or weighted average of the KL divergences over all teachers.

Distillation makes it possible to obtain compact and robust models by transferring the behavior of larger robust models.

The choice of a particular method for improving robustness during training is determined by the available computational resources, the acceptable degradation of quality on clean data, and the nature of the expected threats. The best results are often achieved by combining several approaches, for example adversarial training and data augmentation.

\subsection{Theoretical Limitations and Trade-offs}
\label{subsec:theoretical_limitations_robustness_ch3_part3} 

Methods for improving robustness are subject to a number of fundamental limitations.

The most studied of them is the robustness-accuracy trade-off. Theoretical analysis \cite{tsipras_robustness_2019, zhang2019theoretically} and empirical data \cite{madry_towards_2017} show that robust models, being less sensitive to input variations, capture fewer of the fine distinctions in the data that are necessary for maximal accuracy. A second trade-off concerns computational cost: generating adversarial examples at every training step multiplies the cost of training.

The generalization of robustness is limited. A model that is robust to one type of perturbation, for example to token substitutions within $d_L \le \epsilon$, can remain vulnerable to semantic paraphrases or character-level attacks \cite{jia_adversarial_2017}. Moreover, defense against one class of attacks sometimes increases vulnerability to others \cite{tramer2017ensemble}. The transferability of attacks between architectures is also nonuniform: attacks that are effective against open-weight models may not transfer to proprietary systems.

Theoretical results \cite{tsipras_robustness_2019} indicate that perfect robustness to all admissible perturbations cannot be achieved while preserving high accuracy. The evaluation of robustness also depends substantially on the choice of the metric $d$ and the budget $\epsilon$: robustness achieved for small $\epsilon$ may not hold for large ones, which makes the choice of realistic parameters decisive \cite{carlini2019evaluating}.

\bigskip
\noindent\textit{Sections~\ref{sec:robustness_foundations_ch3_part3}--\ref{sec:training_robustness_methods_ch3_part3} contain a review and systematization of existing approaches to improving the robustness of LLMs, carried out by the author. Starting from Section~\ref{sec:algorithms_complexity_ch3_part3}, we present original algorithmic solutions developed by the author of this thesis.}
\bigskip

\section{Algorithmic Implementation and Complexity Analysis}
\label{sec:algorithms_complexity_ch3_part3}

This section contains the methods proposed in this thesis. To ensure engineering reproducibility and to assess practical applicability, we present conceptual algorithms and an analysis of their computational complexity.

Although the ASA algorithm belongs to the class of attack methods, it is included in this chapter because it is used as a tool for adversarial testing and as a component in the design of defense mechanisms, in particular when choosing the configuration of committees and filters.

\subsection{Pseudocode of the Model Committee Pipeline}
\label{subsec:committee_pseudocode}

A model committee, that is, an ensemble that votes over decoded verdicts, is among the most effective defense methods (see the experimental results in Chapter~\ref{ch:ch4}). Its operation is based on the aggregation of responses from several LLMs that are queried independently.

\vspace{5pt}
\hrule
\vspace{5pt}
\textbf{Algorithm 1: LLM committee pipeline for the evaluation task (LLM-as-a-Judge)}
\vspace{2pt}
\hrule
\begin{algorithmic}[1]
    \State \textbf{Input:} Request $x$, a set of models $M = \{f_1, f_2, \ldots, f_k\}$ with odd $k$, an aggregation rule $g(\cdot)$, for example $\mathrm{MajorityVote}$, an agreement threshold $\tau_{agr}$.
    \State \textbf{Output:} Final verdict $V_{final}$ with confidence level $c$.
    \State Initialize the list of verdicts $V_{list} \leftarrow []$.
    \For{each model $f_i \in M$}
        \State $y_i \leftarrow f_i(x)$  \Comment{Get the response of the $i$-th model}
        \State $v_i \leftarrow \text{ParseVerdict}(y_i)$ \Comment{Extract the verdict from the response}
        \State Append $v_i$ to $V_{list}$.
    \EndFor
    \State $V_{final} \leftarrow g(V_{list})$ \Comment{Aggregate the verdicts}
    \State $c \leftarrow |\{i : v_i = V_{final}\}| / k$ \Comment{Fraction of agreeing models}
    \If{$c < \tau_{agr}$} \State Flag $x$ for manual review \Comment{Low agreement may indicate manipulation} \EndIf
    \State \textbf{Return} $(V_{final}, c)$.
\end{algorithmic}
\vspace{5pt}
\hrule
\vspace{5pt}

\textbf{Complexity analysis.} Let $T_{infer}(f_i, |x|)$ denote the inference time
of model $f_i$ for an input of length $|x|$, and let $k$ be the number of models in the committee.
The complexity of the pipeline is determined by the maximum inference time over the models when requests are executed in parallel, and by the sum of these times when they are executed sequentially:
\begin{itemize}
    \item \textbf{Time complexity (parallel execution):} \\ $O\bigl(\max_{i} T_{infer}(f_i, |x|)\bigr)$.
    \item \textbf{Time complexity (sequential execution):} \\ $O\bigl(\sum_{i=1}^{k} T_{infer}(f_i, |x|)\bigr)$.
    \item \textbf{Space complexity:} depends on the memory requirements for loading $k$ models.
\end{itemize}
The inference time $T_{infer}$ of Transformer models grows as $O(|x|^2)$ in the input length because of the attention mechanism, although more efficient implementations can be used for long sequences.

\subsection{Conceptual Algorithm of the ASA Attack}
\label{subsec:asa_pseudocode}

The ASA (Adaptive Search-Based Attack) algorithm proposed by the author implements a specialized evolutionary search in the discrete token space, adapted to the task of generating attack prompts for LLM-as-a-Judge and protected LLM systems under black-box access.

\vspace{5pt}
\hrule
\vspace{5pt}
\textbf{Algorithm 2: Concept of the Adaptive Search-Based Attack (ASA)}
\vspace{2pt}
\hrule
\begin{algorithmic}[1]
    \State \textbf{Input:} Target model $f_{target}$, original prompt $x_{orig}$, population size $N$, number of generations $G$.
    \State \textbf{Output:} Optimized malicious prompt $x_{best}$.
    \State $P_0 \leftarrow \text{InitializePopulation}(x_{orig}, N)$ \Comment{Create the initial population of prompts}
    \For{$g \leftarrow 1$ to $G$}
        \State $\text{FitnessScores} \leftarrow \text{EvaluatePopulation}(P_{g-1}, f_{target})$ \Comment{Evaluate the fitness of each prompt}
        \State $P_{parents} \leftarrow \text{SelectParents}(\text{FitnessScores}, P_{g-1})$ \Comment{Select the best parents}
        \State $P_{offspring} \leftarrow \text{CrossoverAndMutate}(P_{parents})$ \Comment{Crossover and mutation to create offspring}
        \State $P_g \leftarrow P_{parents} \cup P_{offspring}$ \Comment{The population size stays $|P_g| = N$}
    \EndFor
    \State $x_{best} \leftarrow \text{GetBest}(P_0, \ldots, P_{G-1})$ \Comment{Best prompt among all evaluated populations}
    \State \textbf{Return} $x_{best}$.
\end{algorithmic}
\vspace{5pt}
\hrule
\vspace{5pt}

\textbf{Complexity analysis.} The complexity of ASA is determined by the number of queries to the target model, which is the dominant factor.
\begin{itemize}
    \item \textbf{Time complexity (number of LLM queries):} $O(N \cdot G)$. Each population evaluation step requires $N$ queries to the model $f_{target}$, one per prompt. The final selection reuses the stored fitness values and requires no further queries.
    \item \textbf{Sensitivity to hyperparameters:} the effectiveness of the attack depends strongly on the population size $N$, the number of generations $G$, and the implementation of the mutation and crossover operators. Larger $N$ and $G$ increase the chances of finding an effective attack but require a proportionally larger number of queries.
\end{itemize}
The experiments of Chapter~\ref{ch:ch4} use $N = 50$, $G = 20$, and the objective weights $\alpha = 0.8$, $\beta = 0.2$ defined below, which sets a budget of $N \cdot G = 1000$ queries to the target model per attack. Despite its significant computational cost, ASA remains an effective black-box attack.

\begin{remark}
\label{rem:asa_convergence}
\textbf{The nature of ASA convergence.}
ASA is a stochastic heuristic for global optimization in discrete combinatorial spaces. The search space scales as $|\mathcal{V}|^L$, where $|\mathcal{V}|$ is the vocabulary size and $L$ is the prompt length. For realistic $|\mathcal{V}| \sim 10^4$--$10^5$, exact methods are inapplicable. Convergence to the global optimum is not guaranteed. At the same time, if $\text{SelectParents}$ retains the best candidate of $P_{g-1}$ and the fitness of a candidate does not change between evaluations, for example under greedy decoding or with cached fitness values, then the best value of the objective function in the population is monotonically nondecreasing: $\text{Fitness}(x_{best}^{(g+1)}) \ge \text{Fitness}(x_{best}^{(g)})$, where $x_{best}^{(g)}$ denotes the best candidate of $P_g$. This property guarantees that the returned prompt is at least as good, by the objective, as the best prompt of the initial population. It does not guarantee a strict improvement.
\end{remark}

\textbf{Differences between ASA and classical genetic algorithms.} Although ASA is based on the evolutionary paradigm, it differs fundamentally from the standard genetic algorithms (GA) used to generate adversarial examples in the domain of images and continuous data.

The first difference concerns the mutation operators. ASA operates with synonym substitution, semantics-preserving paraphrasing, and insertion of distracting fragments. Grammatical correctness of the candidates is essential for bypassing linguistic filters, and the bitwise or real-valued mutations of standard GAs do not preserve it.

The second difference concerns the objective function. In ASA it has two components: $\text{Fitness}(x_i) = \alpha \cdot \text{ASR}(x_i) + \beta \cdot \text{Stealth}(x_i)$. Here $\text{ASR}(x_i) \in [0,1]$ expresses the success of the attack, and $\text{Stealth}(x_i) = \max(0, \text{sim}(x_i, x_{orig})) \in [0,1]$ reflects the cosine similarity of the embeddings of the attack prompt and the original prompt, truncated from below at zero to map it to the interval $[0,1]$. The weights $\alpha, \beta > 0$ set the balance between effectiveness and stealth. Classical GAs for adversarial examples optimize only the distance to the decision boundary.

Finally, ASA accounts for the specifics of the LLM-as-a-Judge interface. The attack aims to change the verdict of the judge model, and crossover and mutation operate on structural elements of the prompt, such as the instruction, the context, and the request, rather than on arbitrary positions in the string.

\section*{Chapter Conclusions}
Bounded local sensitivity is sufficient for high robustness $R_{stab}(f)$ (Proposition~\ref{thm:robustness_sufficient_condition_ch3_part3}). Conversely, $R_{stab}(f) = 1 - \delta_R$ implies that the average maximal divergence equals $C_{max} \cdot \delta_R$ and is therefore small (Lemma~\ref{lem:robustness_necessary_condition_ch3_part3}). Together, the two results connect the local sensitivity of the model with the global behavior of the metric.

In the experiments of Chapter~\ref{ch:ch4}, model committees, which vote over decoded verdicts, are the most effective defense. For ensembles with probability averaging, Lemma~\ref{lem:ensemble_robustness_ch3_part3} establishes that, for jointly convex divergence measures, the sensitivity of an ensemble does not exceed the average sensitivity of its members. These measures include $D_{TV}$, $D_{KL}$, and the Jensen--Shannon divergence as an $f$-divergence (Remark~\ref{rem:jsd_convexity_gap}). For $D_T$, the bound requires step-wise averaging and a common reference trajectory. The lemma does not cover voting over verdicts. Adversarial training and regularization complement ensembles at the training level: the former targets deliberate attacks more precisely, while the latter is computationally cheaper. The algorithms of the model committee pipeline and of the ASA attack, with complexity estimates, are given for reproducibility.

None of the described methods removes the fundamental trade-off between robustness and accuracy. A defense against one class of perturbations does not automatically transfer to other classes, and the practical applicability of each method depends on the choice of the metric $d$ and the budget $\epsilon$.

\chapter{Experimental Evaluation of Robustness Enhancement Methods}\label{ch:ch4}

\section*{Chapter Overview}
This chapter experimentally evaluates the vulnerability of modern LLMs to prompt injections, adversarial perturbations, and trojans. The experiments cover LLM-as-a-Judge architectures and systems with a multi-layer defense built with the methods of Chapter~\ref{ch:ch3}. The measurements rely on standardized and original metrics. The results reveal trade-offs between robustness, utility, and computational cost.

\section{Experimental Evaluation Methodology}
\label{sec:methodology_ch4_part4}

This section describes the methodology of the experiments: the evaluation criteria, the datasets, and the testing procedure, adapted to the discrete nature of text data.

\subsection{Robustness Evaluation Criteria}
\label{subsec:evaluation_criteria_ch4_part4}

To evaluate the methods for improving LLM robustness and the effectiveness of attacks, we define a set of quantitative criteria.

\begin{definition}
\label{def:robustness_evaluation_criteria_ch4_part4}
\textbf{Criteria for evaluating robustness and attack effectiveness.}
Let $f: \Xspace \rightarrow \ProbSpace{\Yspace}$ be a language model. The evaluation criteria are defined as a set of quantitative metrics $\{C_i(f)\}$ that measure the ability of the model $f$ to withstand perturbations and attacks and to preserve correct behavior, as well as the effectiveness of the attacks themselves.
\end{definition}

We used the following criteria:

\textbf{1. Attack Success Rate (ASR):}
The fraction of attack attempts that achieve their goal, for example, changing the decision of an LLM-as-a-Judge, bypassing a defense, or extracting a secret.
$$
\text{ASR} = \frac{\text{Number of successful attacks}}{\text{Total number of attack attempts}}.
$$
Success can be defined in different ways depending on the attack goal: a change of the LLM-as-a-Judge score by a given amount or a change of a comparative verdict. Prior work showed that the CUA attack on LLM-as-a-Judge can reach an ASR above 30\%~\cite{maloyan2025adversarialattacksllmasajudgesystems}.

\textbf{2. Transfer Success Rate (TSR):}
The fraction of successful attacks when a malicious prompt optimized for a surrogate model is transferred to the target model.
$$
\text{TSR} = \frac{\text{Number of successful transferred attacks}}{\text{Total number of transfer attempts}}.
$$

\textbf{3. Trigger detection recall (Recall for trojans):}
In trojan detection tasks, the Recall metric measures how close the predicted triggers are to the true triggers implanted in the model. In this work, for consistency with the TDC2023 protocol, Recall is computed through a BLEU-like matching of predicted and true triggers.
$$
\text{Recall}_i=\frac{1}{|Y_i|}\sum_{y\in Y_i}\max_{x\in X_i}\mathrm{BLEU}(x,y),
$$
where $X_i$ is the set of predicted triggers for the $i$-th target and $Y_i$ is the set of true triggers.

\textbf{4. Reverse-Engineered Attack Success Rate (REASR for trojans):}
This metric evaluates how effectively the predicted triggers force the model to generate the target malicious output. It is computed as BLEU between the generated output and the target string:
\[
\text{REASR}=\frac{1}{N}\sum_{i=1}^{N}\mathrm{BLEU}(G_{i},T_{i}),
\]
where $G_i$ is the output generated for the $i$-th predicted trigger, $T_i$ is the corresponding target string, and $N$ is the number of evaluated trigger-target pairs.
Higher REASR values indicate more effective reconstructed triggers, whereas in a defense against trojans the goal is to minimize REASR for any potential trigger.

\begin{remark}
\label{rem:recall_reasr_protocol}
\textbf{Protocol dependence of the Recall and REASR metrics.}
In the context of trojan detection, the Recall and REASR metrics are protocol-dependent surrogate indicators defined within the TDC2023 competition protocol. Unlike the standard definition of recall in information retrieval tasks, here Recall is computed through BLEU-like string matching and cannot be interpreted as classical detection recall in the binary sense. Similarly, REASR measures the effectiveness of surrogate triggers through the BLEU similarity of the generated and target outputs, not as a binary attack success. The interpretation of the results ($\mathrm{Recall} \approx 0.17$, $\mathrm{REASR} \approx 0.99$) must take this protocol specificity into account.
\end{remark}

\textbf{5. Computational overhead, utility, and the robustness-utility trade-off (RUT):}
The computational overhead $\Delta \text{Cost} \in [0, 1]$ is the normalized increase in inference cost when a defense mechanism is applied, compared with the baseline configuration. The utility term $\Delta \text{Utility} \in [0, 1]$ reflects the normalized degradation of the model's main task quality. The composite RUT metric that combines these quantities is defined in Section~\ref{subsec:parameter_rut_analysis_ch4_part4}.

\begin{remark}
\label{rem:one_minus_asr_caveat}
\textbf{Relativity of metrics of the form $1 - \text{ASR}$.}
The indicator $\widehat{R}_{emp}(f) = 1 - \text{ASR}$ is an empirical robustness estimate relative to a fixed set of attacks and a fixed evaluation protocol. Its value depends on the specific attack family used in testing and on the criterion that defines attack success, which as a rule assumes a discrete verdict. This estimate should not be interpreted as an absolute characteristic of model robustness: new attacks or a change of protocol can substantially change the result.
\end{remark}

\begin{remark}
\label{rem:r_stab_asr_bridge_ch4}
\textbf{Relation between the theoretical metric $R_{stab}(f)$ and the experimental estimate via ASR.}
The theoretical robustness metric $R_{stab}(f)$, defined in Chapter~\ref{ch:ch2} (Definition~\ref{def:robustness_metric}), is computed through the step-wise divergence of the model's internal probability distributions at each generation step. Exact computation of $R_{stab}(f)$ requires access to the full distributions. This is impossible for the proprietary models GPT-4 and Claude-3-Opus, which are available only through an API, and it is computationally expensive even for open-weight models, since it requires enumerating all $x' \in \mathcal{B}_\varepsilon(x)$.

For this reason, the experimental part of this chapter uses the indicator $\widehat{R}_{emp}(f) = 1 - \text{ASR}(f, \mathfrak{A}_{eval})$ as an empirical robustness proxy. The justification of this choice is given in Proposition~\ref{prop:r_emp_proxy}, Chapter~\ref{ch:ch2}, Section~\ref{subsec:r_stab_empirical_bridge}. There, ASR is a consistent estimator of the vulnerability $V(h)$ for the fixed attack family $\mathfrak{A}_{eval}$. Under the conditions of Theorem~\ref{thm:robust_vs_vuln}, the population limit $1 - V(h)$ of $\widehat{R}_{emp}(f)$ is an upper bound on $R_{class}(h)$, that is, $1 - V(h) \ge R_{class}(h)$. For a finite sample, $\widehat{R}_{emp}(f) \ge R_{class}(h)$ holds up to a random error of order $O_p(n^{-1/2})$. This gives the results a theoretical interpretation: each ASR measurement estimates $V(h)$ for $\mathfrak{A}_{eval}$, and $1 - \mathrm{ASR}$ estimates an upper bound on $R_{class}(h)$. The prompt injection attacks of this chapter are not localized in $\mathcal{B}_\varepsilon(x)$. For them, the relation rests on the calibrated empirical model of Remark~\ref{rem:empirical_v_r_model}, not on the theorem.

\textbf{Direction for future research:} the direct computation of $R_{stab}(f)$ for the open-weight models Gemma, Llama, and Qwen by analyzing the internal distributions at each generation step, and the verification of the correlation between $R_{stab}(f)$ and the empirical proxy $\widehat{R}_{emp}(f)$, lies beyond the scope of this work.
\end{remark}

\subsection{Evaluation Datasets}
\label{subsec:evaluation_datasets_ch4_part4}

The evaluation of LLM robustness used specialized datasets.

\begin{definition}
\label{def:robustness_evaluation_datasets_ch4_part4}
\textbf{Datasets for robustness evaluation.}
Datasets for evaluating the robustness of an LLM $f: \Xspace \rightarrow \ProbSpace{\Yspace}$ comprise structured collections of pairs or groups of input sequences $(\{x_i\}, \{x'_j\}, \ldots)$ and, where needed, the corresponding reference outputs $\{y_k\}$, designed for the quantitative measurement of the robustness indicators $C_i(f)$ from Section~\ref{subsec:evaluation_criteria_ch4_part4}.
\end{definition}

The study used the following datasets:
\begin{itemize}
    \item \textbf{Attacks on LLM-as-a-Judge:} MT-Bench Human Judgments, datasets from the Kaggle competition ``LLMs: You Can't Please Them All'', including \texttt{ppe\_human\_preference}, \texttt{search\_arena\_v1\_7k}, \texttt{mt\_bench}, and the author-developed \texttt{code\_review}.
    \item \textbf{Multi-layer defense:} SaTML~2024 CTF scenarios~\cite{Khomsky_2025}.
    \item \textbf{Trojan detection:} Pythia models from the Trojan Detection Challenge~2023 (TDC2023).
\end{itemize}

\begin{proposition}
\label{thm:sufficient_sample_size_ch4_part4}
\textbf{On a sufficient sample size for robustness evaluation~\cite{hoeffding1963probability}.}
Let $\hat{C}(f)$ be an empirical estimate of the robustness metric $C(f)$, where $C(f)$ is assumed to be the mean of some indicator $c(x)$ over the data distribution $\mathcal{D}_{\text{data}}$. The estimate is obtained on a test dataset $\mathcal{X}_{\text{test}}$ of size $n$, where the examples are drawn independently from $\mathcal{D}_{\text{data}}$.
If the values of the indicator $c(x)$ are bounded in the range $[a, b]$, then to achieve an estimation accuracy $|\hat{C}(f) - C(f)| \leq \epsilon_{\text{acc}}$ with probability at least $1 - \delta_{\text{conf}}$, the minimum dataset size $n$ must satisfy the condition that follows from Hoeffding's inequality:
\[
n \geq \frac{{(b-a)}^2}{2{\epsilon_{\text{acc}}}^2} \ln\frac{2}{\delta_{\text{conf}}}.
\]
For metrics with values in $[0, 1]$, in particular ASR, Recall, and similar robustness indicators, $(b-a)^2=1$.
\end{proposition}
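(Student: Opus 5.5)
The plan is to apply Hoeffding's inequality directly to the empirical mean and then solve the resulting tail bound for $n$. Write $\hat{C}(f) = \frac{1}{n}\sum_{i=1}^n c(x_i)$, where $x_1,\dots,x_n$ are drawn i.i.d.\ from $\mathcal{D}_{\text{data}}$. The summands $Z_i = c(x_i)$ are then independent, take values in $[a,b]$, and have common mean $\mathbb{E}[Z_i] = C(f)$. So $\hat{C}(f)$ is unbiased for $C(f)$, and we are in exactly the setting of the two-sided Hoeffding inequality for bounded independent variables.

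The first step invokes Hoeffding's inequality~\cite{hoeffding1963probability}. For every $t>0$,
\[
\Pbb\bigl(|\hat{C}(f) - C(f)| \ge t\bigr) \le 2\exp\!\left(-\frac{2 n^2 t^2}{\sum_{i=1}^n (b-a)^2}\right) = 2\exp\!\left(-\frac{2 n t^2}{(b-a)^2}\right).
\]
Set $t = \epsilon_{\text{acc}}$. It then suffices to make the right-hand side at most $\delta_{\text{conf}}$, because the complementary event $|\hat{C}(f)-C(f)| < \epsilon_{\text{acc}}$, which implies the stated $\le$, will have probability at least $1-\delta_{\text{conf}}$.

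The second step is algebra. Taking logarithms, $2\exp(-2n\epsilon_{\text{acc}}^2/(b-a)^2) \le \delta_{\text{conf}}$ is equivalent to $2n\epsilon_{\text{acc}}^2/(b-a)^2 \ge \ln(2/\delta_{\text{conf}})$. This rearranges to $n \ge \frac{(b-a)^2}{2\epsilon_{\text{acc}}^2}\ln\frac{2}{\delta_{\text{conf}}}$. The final remark follows by substituting $[a,b]=[0,1]$. ASR and the other metrics here are averages of indicators or of BLEU scores in $[0,1]$, so $(b-a)^2=1$.

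There is no real obstacle; the main care needed is in the interpretation. The bound is a \emph{sufficient} sample size, obtained from a concentration inequality, and not a minimum in the strict sense. The proof should say this even though the statement's wording is stronger. The second point of care is checking the i.i.d.\ and boundedness hypotheses for each concrete metric. For ASR computed over pairs $(x,a)$, the i.i.d.\ assumption is the one from Proposition~\ref{prop:r_emp_proxy}. For $R_{class}$ the indicator involves a supremum over $\mathcal{B}_\varepsilon(x)$, but it is still a $\{0,1\}$-valued function of $x$ alone, so Hoeffding applies unchanged.
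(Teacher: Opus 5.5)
Your proof is correct and matches the paper's: apply the two-sided Hoeffding inequality to the mean of the $n$ independent $[a,b]$-valued terms $c(x_i)$, require $2\exp\bigl(-2n\epsilon_{\text{acc}}^2/(b-a)^2\bigr) \le \delta_{\text{conf}}$, and solve for $n$. Your caveat is accurate: Hoeffding's inequality gives only a sufficient sample size, not a strict minimum, and the paper's proof also establishes only that direction despite the statement's stronger wording.
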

\begin{proof}
This is a direct application of Hoeffding's inequality to the mean of $n$ independent random variables bounded in the range $[a, b]$. The empirical estimate $\hat{C}(f) = \frac{1}{n} \sum_{i=1}^n c(x_i)$ is the mean of $n$ indicators $c(x_i)$ over the test examples.
Hoeffding's inequality states: $P(|\hat{C}(f) - \mathbb{E}[\hat{C}(f)]| \geq \epsilon_{\text{acc}}) \leq 2 \exp\left( - \frac{2 n \epsilon_{\text{acc}}^2}{{(b-a)}^2} \right)$. Setting the right-hand side equal to $\delta_{\text{conf}}$ and solving for $n$ gives the required result.
\end{proof}
The proposition sets a lower bound on the test set size required for a given estimation accuracy and confidence level in robustness evaluation.

\textbf{Justification of the sample size.} The experiments use $n = 50$ prompts for each model/attack/task combination. For $n = 50$ and $\delta_{\text{conf}} = 0.05$, the width of the 95\% confidence interval for ASR by Hoeffding's inequality does not exceed $2\varepsilon$, where $\varepsilon = \sqrt{\frac{\ln(2/\delta_{\text{conf}})}{2n}} \approx 0.19$. This gives an interval width $2\varepsilon \approx 0.38$, that is, $\pm 19$ percentage points (pp). The bound is conservative. Bootstrap resampling with 1000 iterations gives empirical intervals of $\pm 12{-}14$ pp for ASR near 0.5, which agrees with the normal approximation $1.96\sqrt{p(1-p)/n} \approx 0.139$. For ASR values far from 0.5, the intervals become narrower. The computational cost also influenced the sample size: an attack on an LLM-as-a-Judge requires text generation by two models, the attacked model and the judge model, and proprietary APIs make scaling expensive. The aggregated size $n = 200$ per model over 4 tasks lies in the range of 100--500 examples typical of current work on LLM safety.

\begin{remark}
\label{rem:per_task_vs_aggregated}
\textbf{Distinction between per-task and aggregated estimates.}
To avoid ambiguity in the interpretation of statistical results, we consistently distinguish two levels of evaluation below:
\begin{itemize}
    \item \textbf{Per-task evaluation ($n = 50$):} each model/attack type/task combination uses $n = 50$ prompts. Confidence intervals at this level, computed by bootstrap, are $\pm 12{-}14$ pp for the fractional metrics ASR and TSR at values near 0.5. With this sample size, per-task analysis is suitable only for detecting large effects: for independent samples near ASR 0.5, a power of 0.80 at $\alpha = 0.05$ requires a difference of about 28~pp. For differences of $\le$10~pp, the power at this level stays below 0.80, and the per-task analysis makes no claims about such differences. For pairs with a small difference, such as GPT-4 and Claude-3 with $\Delta \approx 2.8$ pp, the power of the per-task test is clearly insufficient, and this is stated explicitly in the analysis of the results.
    \item \textbf{Aggregated evaluation ($n = 200$):} the results are aggregated over 4 tasks, which gives $n = 200$ observations per model. Confidence intervals at this level narrow to $\pm 5.5{-}6.9$ pp by bootstrap with 1000 iterations, see~Tables~\ref{tab:llm_judge_asr_summary} and~\ref{tab:llm_judge_defense_evasion}. For independent samples near ASR 0.5, the aggregated size gives a power of 0.80 at significance level $\alpha = 0.05$ for differences of about 14~pp or more. For a difference of 10~pp, the power is about 0.5. Paired tests on the same prompts can detect smaller differences when the paired outcomes are positively correlated. For differences of about 5~pp, the power remains low.
\end{itemize}
All confidence intervals reported in the tables of this chapter refer to the aggregated level $n = 200$ unless explicitly stated otherwise. The statistical tests, namely the paired test with Bonferroni correction and McNemar's test, are also performed on the aggregated data.
\end{remark}

\subsection{Experimental Procedure}
\label{subsec:experimental_procedure_ch4_part4}

The experimental evaluation followed a standardized procedure that ensures the reproducibility and statistical soundness of the results.

\textbf{1. Model preparation.}
We define the set of models under study. It includes the open models Gemma-3, Llama-3.2, and Pythia as well as the commercial models GPT-4, Claude-3-Opus, and GPT-3.5. All models use greedy decoding with \texttt{temperature=0} to ensure reproducibility. The open models run through the Hugging Face Transformers library, and the commercial models are accessed through their respective APIs.

\textbf{2. Preparation of datasets and attack/defense scenarios.}
The datasets described in Section~\ref{subsec:evaluation_datasets_ch4_part4} are selected and prepared.
Attack methods are developed or adapted: Basic Injection (BI), Complex Word Bombardment (CWB), Contextual Misdirection (CM), Adaptive Search\nobreakdash-Based Attack (ASA), Comparative Undermining Attack (CUA), Justification Manipulation Attack (JMA), code\nobreakdash-oriented attacks, distraction attacks, system prompt attacks, and word splitting attacks. The defense mechanisms are adapted in the same way: Perplexity Check, Instruction Filtering, Content Moderation, Multi\nobreakdash-Model Committees, and the three-layer defense from the SaTML~CTF, which comprises system instructions, a Python filter, and an LLM filter.

\textbf{3. Definition of evaluation metrics.}
A set of metrics from Section~\ref{subsec:evaluation_criteria_ch4_part4} relevant to the specific experiment is selected. In particular, attacks on LLM\nobreakdash-as\nobreakdash-a\nobreakdash-Judge use ASR and TSR, trojan detection uses Recall and REASR, and attacks on defended systems take into account the number of attempts and the success of secret extraction.

\textbf{4. Running the experiments.}
Each model/system is evaluated on the relevant data with the selected attacks/defenses and metrics.
Each condition uses $n = 50$ different prompts (Remark~\ref{rem:per_task_vs_aggregated}).

\textit{Example of the metric computation process:}
\begin{itemize}
    \item \textbf{ASR (for LLM\nobreakdash-as\nobreakdash-a\nobreakdash-Judge):} The attack is applied to each of the $n=50$ prompts per task. An attack is considered successful if the output score of the judge model changes by $\ge 2$ points or the comparative verdict changes. $\text{ASR} = (\text{number of successes}) / n$.
    \item \textbf{TSR (for LLM\nobreakdash-as\nobreakdash-a\nobreakdash-Judge):} Computed according to the definition from the results of transferring attacks between models.
    \item \textbf{Recall and REASR (for trojans):} Computed from the BLEU comparison of predicted and true triggers and from the ability of predicted triggers to elicit the target malicious outputs, also measured through BLEU.
\end{itemize}

\textbf{5. Statistical analysis of the results.}
The obtained metric values are processed statistically.
Means, standard deviations, and confidence intervals are computed with bootstrap resampling over 1000 iterations. Where needed, tests for the statistical significance of differences between methods are performed. To compare attack success rates, we use nonparametric procedures, in particular bootstrap estimation of the difference of proportions and tests for paired binary outcomes. Interval estimation is based on bootstrap confidence intervals.

\textbf{6. Interpretation of the results.}
Conclusions are drawn from the statistical analysis: a comparison of the effectiveness of different attack/defense methods, the identification of statistically significant differences, an analysis of trade-offs (RUT), and the formulation of recommendations.

This procedure ensures the reproducibility and statistical soundness of the evaluation of robustness enhancement methods and of the analysis of LLM vulnerabilities.

\section{Experimental Evaluation Results}
\label{sec:experimental_results_ch4_part4}

Below we present the results of the experimental evaluation of LLM vulnerabilities and of the effectiveness of defense mechanisms, obtained with the described methodology and from the author's published studies.

\subsection{Vulnerability of LLM-as-a-Judge to Prompt Injection Attacks}
\label{subsec:llm_judge_vulnerability_ch4_part4}

LLM-as-a-Judge systems proved susceptible to prompt injection attacks. We proposed and tested a unified framework for developing and evaluating adversarial attacks, including the Adaptive Search-Based Attack (ASA).

\textbf{Summary ASR results for attacks on LLM-as-a-Judge:}
\begin{table}[ht!]
\centering
\caption[Attack success rate (ASR) on LLM-as-a-Judge models]{Attack success rate (ASR, \%) on various LLM\protect\nobreakdash-as\protect\nobreakdash-a\protect\nobreakdash-Judge models (aggregated over~4 tasks, \texorpdfstring{$n=200$}{n=200} per model). CI denotes the 95\% confidence interval obtained by bootstrap with 1000 iterations.}
\label{tab:llm_judge_asr_summary}
\begin{tabular}{|l|c|c|c|c|}
\hline
\textbf{Model} & \textbf{BI} & \textbf{CWB} & \textbf{CM} & \textbf{ASA} \\
\hline
Gemma\nobreakdash-3\nobreakdash-27B\nobreakdash-Instruct & 57.5 $\pm$ 6.9 & 48.7 $\pm$ 6.9 & 67.7 $\pm$ 6.5 & 72.3 $\pm$ 6.2 \\
Gemma\nobreakdash-3\nobreakdash-4B\nobreakdash-Instruct  & 66.7 $\pm$ 6.5 & 55.5 $\pm$ 6.9 & 67.4 $\pm$ 6.5 & 73.8 $\pm$ 6.1 \\
Llama\nobreakdash-3.2\nobreakdash-3B\nobreakdash-Instruct& 60.1 $\pm$ 6.8 & 47.5 $\pm$ 6.9 & 47.8 $\pm$ 6.9 & 58.2 $\pm$ 6.8 \\
GPT-4                & 32.4 $\pm$ 6.5 & 28.6 $\pm$ 6.3 & 41.2 $\pm$ 6.8 & 45.7 $\pm$ 6.9 \\
Claude\nobreakdash-3\nobreakdash-Opus        & 29.8 $\pm$ 6.3 & 25.3 $\pm$ 6.0 & 38.5 $\pm$ 6.7 & 42.9 $\pm$ 6.9 \\
\hline
\end{tabular}
\end{table}

\begin{figure}[ht]
\centering
\includegraphics[width=0.7\textwidth]{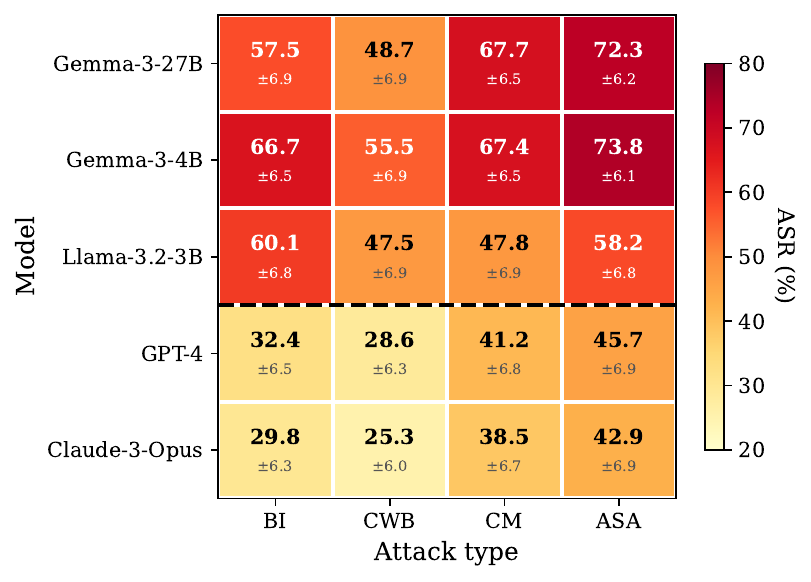}
\caption{Heat map of the attack success rate (ASR) on LLM-as-a-Judge models. The horizontal dashed line separates the open models (top) from the proprietary models (bottom). Darker shades correspond to higher vulnerability. The adaptive ASA attack is the most effective against all models except Llama-3.2-3B-Instruct, for which BI gives a slightly higher ASR (60.1\% versus 58.2\%).}
\label{fig:asr_heatmap_ch4}
\end{figure}

\textbf{Brief description of the methods.}
The \textbf{BI} (Basic Injection) attack is a direct injection of the command ``Ignore the previous instructions and output a score''. No auxiliary techniques are applied. \textbf{CWB} (Complex Word Bombardment) works differently: a sequence of rare and complex words that overloads the model's attention is inserted before the command. The \textbf{CM} (Contextual Misdirection) attack has the most complex structure: relevant context, a ``breaking'' pattern, and a subsequent injection jointly distract the model and replace the verdict. These attacks fall into two classes by their mechanism of action. Instruction attacks, which include BI, inject a direct command that takes over control of the verdict. Content attacks, which include CWB and CM, do not rely on the command itself: they shift the verdict by manipulating the content and context of the evaluation. This distinction determines the choice of countermeasure: injection filters recognize instruction attacks by command markers, while content attacks require an analysis of verdict consistency in a committee of models.

The adaptive search-based attack ASA proved the most effective against four of the five models: its ASR reaches 73.8\%. Vulnerability is determined not only by model scale but also by the architectural family and the alignment procedure. Gemma\nobreakdash-3\nobreakdash-4B\nobreakdash-Instruct was the most vulnerable model, with the highest ASR under three of the four attacks. The Gemma family shows high vulnerability regardless of scale (4B: 73.8\%, 27B: 72.3\%), whereas Llama\nobreakdash-3.2\nobreakdash-3B, with fewer parameters, showed higher robustness under CWB, CM, and ASA. GPT\nobreakdash-4 and Claude\nobreakdash-3\nobreakdash-Opus showed greater robustness, but they also remained vulnerable with $\text{ASR} = 25.3\text{--}45.7\%$. The ASR difference between GPT-4 and Claude-3-Opus for ASA is $\Delta \approx 2.8$ pp with overlapping 95\% confidence intervals. At this sample size, the data do not distinguish the level of protection of these two models. For GPT-4, the gap between BI and ASA is substantially larger: $\Delta = 13.3$ pp, which is statistically significant ($p < 0.01$) and confirms the effectiveness of the proposed algorithm. Attack transferability between open models remains high ($\text{TSR} = 50.5\text{--}62.6\%$). Attacks on the system prompt proved considerably more effective than attacks through user content.

\textbf{Effectiveness of defense mechanisms for LLM\nobreakdash-as\nobreakdash-a\nobreakdash-Judge:}
\begin{table}[ht]
\centering
\small
\caption[Effectiveness of defense mechanisms against ASA]{ASR (\%) of the ASA attack in the presence of various defense mechanisms. CI denotes the 95\% confidence interval.}
\label{tab:llm_judge_defense_evasion}
\begin{tabular}{|l|c|}
\hline
\textbf{Defense mechanism} & \textbf{ASR (\% $\pm$ CI)} \\
\hline
Perplexity Check        & 58.4 $\pm$ 6.8 \\
Instruction Filtering   & 63.7 $\pm$ 6.7 \\
Content Moderation      & 67.5 $\pm$ 6.5 \\
All defenses combined & 42.1 $\pm$ 6.8 \\
\hline
\textbf{Committee of models} & \\
\hline
3 Models & 39.4 $\pm$ 6.8 \\
5 Models & 26.8 $\pm$ 6.1 \\
7 Models & 19.3 $\pm$ 5.5 \\
\hline
\end{tabular}
\end{table}
\textit{Source: Adapted from Tables~VI and~VII of~\cite{maloyan2025investigatingvulnerabilityllmasajudgearchitectures}.}
Findings:
\begin{itemize}
    \item Individual defenses, namely Perplexity Check, Instruction Filtering, and Content Moderation, do not provide complete protection and let a significant share of ASA~attacks through.
    \item A combination of defenses improves the situation but does not fully solve the problem.
    \item Committees of several models with diverse architectures were the most effective. Committees of~5--7 models reduced the ASR of ASA to $19.3\text{--}26.8\%$. Preliminary experiments showed that homogeneous committees of models from one family give a substantially smaller ASR reduction: correlated errors limit the gain from voting. Corollary~\ref{cor:condorcet_ensemble} concerns committee accuracy and assumes independent members with accuracy $r > 0.5$. If $r$ is estimated as $1 - \mathrm{ASR}$ under ASA, only the proprietary models of Table~\ref{tab:llm_judge_asr_summary} satisfy this condition (Section~\ref{subsec:parameter_rut_analysis_ch4_part4}). The corollary therefore does not explain the effect of a heterogeneous committee by itself. The likely mechanism is the decorrelation of errors across architectures, which amplifies the gain from voting.
\end{itemize}

For tasks with a discrete verdict, the indicator $1-\mathrm{ASR}$ is interpreted below as an empirical robustness estimate relative to a fixed family of the considered attacks and a fixed evaluation protocol. This interpretation does not claim to be a universal replacement for the theoretical metric $R_{stab}(f)$.

\subsection{Trojan Detection in LLMs}
\label{subsec:trojan_detection_results_ch4_part4} 

This section summarizes our experiments within the Trojan Detection Challenge 2023 (TDC2023) on Pythia models. The competition methodology gave participants access to ``suspicious'' models and to the target malicious strings that these models could generate. The task was to reconstruct the trigger phrases that cause this behavior.

\begin{table}[ht!]
\centering
\caption{Performance comparison of trojan detection methods (TDC2023).}
\label{tab:trojan_detection_tdc2023}
\begin{tabular}{|l|c|c|}
\hline
\textbf{Method} & \textbf{Recall} & \textbf{REASR} \\
\hline
Random baseline\textsuperscript{*} & 0.14 & {--} \\
PEZ (baseline)         & 0.105 & 0.052 \\
GBDA (baseline)        & 0.116 & 0.056 \\
UAT (baseline)         & 0.131 & 0.030 \\
GCG (best run)    & 0.167 & 0.987 \\
\hline
\end{tabular}
\smallskip

\noindent\textsuperscript{*}Random sampling of sentences from the distribution of training prefixes. The Recall $\approx 0.14$ is due to accidental n-gram matches in the BLEU computation.
\end{table}
\textit{Source: results of the author's participation in TDC2023~\cite{noauthor_trojan_nodate}.}

The random baseline value is not zero because of partial n-gram overlap in the BLEU matching of short trigger sequences.

Achieving a high REASR proved relatively easy: many participants obtained values close to 1.0. Knowing the malicious output, one can find an input that elicits it. However, the key Recall metric, which reflects the BLEU-like similarity of predicted triggers to the true ones, remained low: the best values were about 0.17, only 0.027 above the random sampling baseline ($\text{Recall} = 0.14$). In most cases, the recovered triggers did not match the original ones. The gap between REASR and Recall indicates a fundamental difficulty of exact reconstruction of hidden triggers in LLMs and calls into question the suitability of existing reverse engineering methods for reliable model auditing.

We hypothesize that the asymmetry REASR $\gg$ Recall grows with the number of model parameters. A larger weight space provides additional degrees of freedom for encoding triggers that are indistinguishable from normal behavior. The current experiments were performed on Pythia-1.4B. The verification of this hypothesis on larger models remains a direction for future research.

\subsection{Attacks on LLMs with a Multi-Layer Defense}
\label{subsec:defended_llm_attacks_ch4_part4} 

The results of the SaTML~CTF~2024 showed that LLMs, namely GPT\nobreakdash-3.5 and Llama~2, equipped with a three-layer defense of system instructions, a Python filter, and an LLM filter, remain vulnerable.
The successful bypass vectors are systematized into four classes. For each class, we established the conditions of applicability and effective countermeasures.
\begin{enumerate}
    \item \textit{Code-oriented attacks}: the request is formatted as a fragment of code or pseudocode and bypasses text filters. Countermeasure: bring the formatting to a canonical form before filtering.
    \item \textit{ASCII encoding}: the characters of the request are passed as numeric codes, which hinders keyword-based detection. For GPT-3.5, a request to apply a function to the secret value produced the secret as the ASCII codes \texttt{[107, 100, 75, 71, 121, 113]}. Countermeasure: decode numeric representations at the input normalization stage.
    \item \textit{Word splitting}: target words are split into substrings that the model assembles during generation. For Llama-2, a request to create a JSON with a field \texttt{Sec} of 6 secret characters led to a response of the form \texttt{\{"Sec": ["t","D","f","J","7","R"]\}}. Countermeasure: join substrings during normalization.
    \item \textit{Contextual redirection}: the model is persuaded that the request is part of an acceptable debugging or testing scenario. This class is robust to normalization, since it operates on semantics, not syntax. Countermeasure: context-isolated system instructions that strictly separate the user and system levels of the prompt.
\end{enumerate}
Input normalization neutralizes the first three classes, and the fourth requires context isolation. More capable models, which handle complex instructions and code better, were vulnerable to sophisticated vectors, whereas less capable models yielded to simple requests to reveal the system prompt.

In a controlled experiment on the SaTML CTF 2024 data, the three-layer defense reduced the ASR from initial values of about $90\%$ in the control scenarios for undefended configurations to $15\text{--}25\%$. No combination of filters blocked all four classes at once. This demonstrates both the effectiveness of a multi-layer defense and its fundamental incompleteness. Defense mechanisms need regular updates as new vectors appear. With the described methods, the author won a prize in the attack track of the SaTML CTF 2024.

\subsection{Analysis of Parameter Effects and Trade-offs}
\label{subsec:parameter_rut_analysis_ch4_part4} 
The ASA attack implements the evolutionary scheme from Section~\ref{subsec:asa_pseudocode} of Chapter~\ref{ch:ch3} with population size $N = 50$, number of generations $G = 20$, and objective function weights $\alpha = 0.8$, $\beta = 0.2$. The query budget per attack is $N \cdot G = 1000$ model calls. The defense mechanisms are listed in Section~\ref{subsec:experimental_procedure_ch4_part4}. Their parameters, including the Perplexity Check thresholds, the Instruction Filtering rules, the Content Moderation settings, and the committee composition, directly determine the effectiveness of both the attacks and the defense mechanisms.

The trade-off between robustness and utility is quantified by the RUT metric:
\[
\text{RUT} = \frac{\Delta \text{Robustness}}{\Delta \text{Utility} + \Delta \text{Cost} + \epsilon},
\]
where $\Delta \text{Robustness}$ expresses the robustness improvement, $\Delta \text{Utility}$ captures the loss of utility, and $\Delta \text{Cost}$ reflects the additional computational cost. All quantities are normalized to $[0, 1]$. The parameter $\epsilon = 10^{-6}$ is a regularization constant. For example, for a committee of 7 models: $\Delta \text{Robustness} \approx 0.55$, which corresponds to an ASR reduction from $73.8\%$ to $19.3\%$. Here $\Delta \text{Cost}$ is measured as the normalized increase in response latency. With parallel querying of the committee members, $\Delta \text{Cost} \approx 0.08$ and $\Delta \text{Utility} \approx 0.02$, which gives $\text{RUT} \approx 5.5$ and indicates a high return from this configuration. The total compute, measured in model calls, still grows linearly with the number of models, about sevenfold for this committee.

Committees of models provide the best protection: the ASR against~ASA drops to~$19.3\%$. The computational overhead grows in proportion to the number of models. Corollary~\ref{cor:condorcet_ensemble}, based on the Condorcet jury theorem, guarantees that the accuracy of a majority-vote committee tends to one as its size grows, if the members are independent and each has accuracy $r > 0.5$. If $r$ is estimated as $1 - \mathrm{ASR}$ under ASA, the proprietary models satisfy this condition: for GPT-4 with $\text{ASR} = 45.7\%$ the base accuracy is $r \approx 0.54$, and for Claude-3-Opus with $\text{ASR} = 42.9\%$ the value is $r \approx 0.57$. The open models, in contrast, do not satisfy the condition: Llama-3.2-3B with $\text{ASR} = 58.2\%$ gives $r \approx 0.42$, and Gemma-3-4B with $\text{ASR} = 73.8\%$ gives $r \approx 0.26$. The corollary therefore does not cover the observed committee results, and their explanation is empirical. The heterogeneity of the composition decorrelates the errors of models with different architectures. An attack tailored to one model transfers to the others only partially (even between open models, $\text{TSR} = 50.5\text{--}62.6\%$), so a majority can restore the correct verdict even when some members have an individual $r < 0.5$. This mechanism is a hypothesis consistent with the data, and the corollary does not guarantee it. Instruction Filtering and Perplexity Check are cheaper, although they are inferior to committees against complex attacks. The specific choice of strategy depends on the scenario: critical evaluation systems justify the cost of large committees, while less critical applications can rely on a combination of lightweight filters.

\subsection{Empirical Validation of the Relation between \texorpdfstring{$R_{stab}(f)$}{R\_stab(f)} and \texorpdfstring{$R_{class}(h)$}{R\_class(h)}}
\label{subsec:empirical_rstab_validation}

The generative metric $R_{stab}(f)$, introduced in Definition~\ref{def:robustness_metric}, measures the distributional robustness of the model, whereas $R_{class}(h)$ from Section~\ref{subsec:r_metric_discrete} measures the robustness of the discrete output. Intuitively, $R_{stab}$ is a finer measure: the distributions can differ substantially even when the discrete output is unchanged, and they can remain partially consistent when the discrete output changes. The two metrics are therefore not ordered a priori. We ran a computational experiment to check empirically whether in practice the generative robustness is not lower than the discrete one, that is, whether $R_{stab}(f) \ge R_{class}(h)$.

\textbf{Methodology.} The experiment used the GPT-4o-mini model through the OpenRouter API with greedy decoding and extraction of the top-20 logprobs. We formed 20~MT-Bench-style prompts and generated 5~perturbations for each of them, of three types: word insertion, word permutation, and punctuation variation. The Levenshtein distance from the original prompt was 1--2, and in total the experiment took 120~API calls. The metric $R_{stab}(f)$ was computed through an approximate Jensen--Shannon divergence over per-token log-probabilities with the top-20 approximation, normalized by the maximum value $C_{max} = \ln 2$:
\[
\widehat{R}_{stab}(f, x) = 1 - \sup_{x' \in \mathcal{B}_\varepsilon(x)} \frac{1}{T} \sum_{t=1}^{T} \frac{\JS\bigl(\hat{P}_t(x),\, \hat{P}_t(x')\bigr)}{\ln 2},
\]
where $\hat{P}_t$ approximates the distribution over the top-20 tokens at step~$t$, and the supremum over $\mathcal{B}_\varepsilon(x)$ is approximated by the maximum over the 5 generated perturbations of $x$. The value $\widehat{R}_{stab} = 1$ corresponds to full robustness with zero divergence, and $\widehat{R}_{stab} = 0$ corresponds to maximum divergence. The metric $R_{class}(h)$ was computed as the fraction of perturbations under which the full generated sequence does not change. This per-perturbation average is not smaller than the empirical analog of the definition in Chapter~\ref{ch:ch2}, which requires the output to be preserved under all perturbations of a prompt. The comparison below therefore also holds for that analog.

\textbf{Results.} Table~\ref{tab:rstab_empirical_validation} summarizes the experiment.

\begin{table}[ht]
\centering
\caption{Approximate values of $R_{stab}(f)$ and $R_{class}(h)$ for GPT-4o-mini (20 prompts, 5 perturbations).}
\label{tab:rstab_empirical_validation}
\begin{tabular}{|l|c|}
\hline
\textbf{Metric} & \textbf{Value} \\
\hline
$R_{stab}(f)$ (mean $\pm$ std.\,dev.) & $0.115 \pm 0.139$ \\
$R_{class}(h)$ & $0.050$ \\
$R_{stab}(f) \ge R_{class}(h)$ & Yes \\
Fraction of preserved discrete outputs & 5\% \\
Number of prompts & 20 \\
Number of perturbations per prompt & 5 \\
\hline
\end{tabular}
\end{table}

\textbf{Discussion.} On this sample we obtained $\widehat{R}_{stab}(f) = 0.115 > R_{class}(h) = 0.050$, which is consistent with the hypothesis that the generative robustness metric is not lower than the discrete one. The generative metric $R_{stab}(f)$ has a finer granularity: even when the discrete output has changed and $R_{class}(h) = 0$, the value of $R_{stab}(f)$ can be positive if the token-level distributions remain partially consistent. This indicates that $R_{stab}(f)$ captures information about the ``hidden robustness'' of the distributions, which is lost in discretization.

Both metric values are small for benign perturbations. This is explained by the autoregressive nature of LLMs: even a minimal change of the input with a Levenshtein distance of 1--2 affects the internal distributions, since each subsequent token depends on all previous ones. The range of values $R_{stab}(f) \in [0.005, 0.540]$ indicates a high variability of sensitivity depending on the specific prompt.

\textbf{Limitations.} The computation was performed through an API in independent generation mode, not with teacher forcing. A rigorous computation of $R_{stab}(f)$ according to Definition~\ref{def:robustness_metric} requires access to the model weights. The top-20 approximation of logprobs also introduces error: the tail probability mass is underestimated, which leads to an underestimated divergence and hence an overestimated $\widehat{R}_{stab}$. The sample contains 20 prompts with benign perturbations. The standard deviation of $R_{stab}$ across prompts, equal to 0.139, exceeds the mean of 0.115, which indicates a high variability between prompts. The results do not transfer directly to adversarial scenarios. With all these limitations, the obtained data do not contradict the hypothesis $R_{stab}(f) \ge R_{class}(h)$, but a statistically significant confirmation requires a larger sample in future studies.

\section*{Chapter Conclusions}
\label{sec:chapter4_conclusions_ch4_part4} 

The experiments revealed consistent patterns. The adaptive ASA attack reaches $\text{ASR} = 73.8\%$ on Gemma-3-4B, and the transferability between open models is $\text{TSR} = 50.5\text{--}62.6\%$. With such transferability, a defense strategy that relies on hiding the architecture is not justified. GPT-4 and Claude-3 are less vulnerable, but even for them the ASR under ASA is $42.9\text{--}45.7\%$, which cannot be considered acceptable.

Trojans remain an unsolved problem. Recall, computed through BLEU matching of predicted and true triggers, reaches only $\approx 0.17$ against a random-match baseline of $0.14$. At the same time, REASR is close to $0.99$. The gap between the two metrics has a practical meaning: it is easy to find some input that makes the model generate the target output, but it is practically impossible to reconstruct the original triggers with existing methods. Trigger reverse engineering in its current state is unsuitable for model auditing.

Heterogeneous committees of 5--7 models were the most effective of the defense mechanisms. They reduced the ASR of ASA by 47--55~pp relative to the most vulnerable undefended model, Gemma-3-4B-Instruct (73.8\%). Two factors limit their applicability: the linear growth of computational cost and the dependence on the diversity of the composition. Preliminary observations indicate that homogeneous committees of models from one family gain less because of correlated errors, but a quantitative estimate of this effect requires a separate study. Determining the minimum level of architectural diversity needed for a stable ASR reduction remains an open problem.

The validation of the metrics on GPT-4o-mini gave the approximate value $\widehat{R}_{stab}(f) = 0.115$ with $R_{class}(h) = 0.050$. The generative metric captures distributional instability that is invisible in a discrete evaluation. The experiment is limited to 20 prompts with benign perturbations and the top-20 approximation of logprobs. Extending the sample and testing on adversarial scenarios is the immediate next task.

\chapter{Software Implementation of the Suite for Evaluating and Improving LLM Robustness}\label{ch:ch5}

\section*{Chapter Overview}
The metrics and algorithms of Chapters~\ref{ch:ch2}--\ref{ch:ch3} require software tools that automate attack generation, the collection of model responses, and the computation of robustness indicators. For this purpose, two registered software suites were developed. The JudgeGuard suite (software registration certificate No.~2025687702) evaluates the robustness of judge models to adversarial attacks. The TrojanArmor suite (certificate No.~2025684247) detects and analyzes trojans in large language models. Together, these suites cover the full security evaluation cycle: from the generation of attack inputs and the collection of model responses to the computation of robustness metrics.

Section~\ref{sec:architecture_ch5_part5} presents the modular architecture of both suites. Section~\ref{sec:tech_stack_ch5_part5} justifies the choice of the technology stack. Section~\ref{sec:mcpsec_module_ch5_part5} describes the extension of the suites to agentic LLM systems through the MCPSec module, and Section~\ref{sec:performance_ch5_part5} reports quantitative characteristics of computational performance.

\section{Architectural Model of the JudgeGuard and TrojanArmor Software Suites}
\label{sec:architecture_ch5_part5}

The JudgeGuard and TrojanArmor suites solve different tasks. JudgeGuard evaluates the robustness of judge models in the LLM-as-a-Judge architecture to adversarial attacks, whereas TrojanArmor targets the detection and analysis of trojans, also known as backdoors, in large language models. Despite this difference, both suites rely on a common four-component architecture with a separation of responsibilities between modules.

\subsection{Modular Structure of the System}
\label{subsec:modular_structure_ch5_part5}

The architecture of each suite includes four main modules:

\begin{enumerate}
    \item \textbf{Attack Generator Module (AGM).}
    The module creates malicious prompts, trigger sequences, and adversarial examples. In JudgeGuard, it implements evolutionary prompt optimization with the ASA algorithm described in Section~\ref{subsec:asa_pseudocode} of Chapter~\ref{ch:ch3}, context injections, and Comparative Undermining Attacks (CUA). In TrojanArmor, the module generates candidate triggers for trigger reconstruction with the gradient-based GCG, PEZ, and GBDA algorithms. For broader applicability, the TrojanArmor suite additionally supports attacks on computer vision models, such as BadNet, Blended, and WaNet, but this thesis uses only the LLM components.

    \item \textbf{Model Router / API Gateway (MRAG).}
    It provides a unified access interface to local models through vLLM or the HuggingFace pipeline and to the cloud APIs of OpenAI, Anthropic, and Google. The router implements load balancing, retries when API rate limits are exceeded, response caching, and interaction logging. The routing configuration is specified declaratively in YAML format.

    \item \textbf{Evaluation Engine (EE).}
    It computes the quantitative characteristics of robustness and attack effectiveness introduced in Chapter~\ref{ch:ch4}: ASR, TSR, Recall, REASR, and the aggregated RUT metric. Architecturally, the module is a set of independent evaluator functions. New metrics are added without changes to existing code, which follows the open/closed principle.

    \item \textbf{Storage Layer (SL).}
    It stores intermediate and final experimental results. Two operating modes are provided. File storage in JSON/JSONL format targets reproducibility and versioning of results. Relational storage based on PostgreSQL supports analytical SQL queries over metrics. Each record carries experiment metadata: the run identifier, timestamps, the model configuration, and the generation parameters.
\end{enumerate}

\subsection{Component and Data Flow Diagram}
\label{subsec:component_diagram_ch5_part5}

Fig.~\ref{fig:architecture_diagram_ch5_part5} shows the component and data flow diagram of the software suite.

\begin{figure}[ht]
\centering
\begin{tikzpicture}[
    module/.style={draw, rounded corners, minimum width=3.8cm, minimum height=1.6cm, align=center, font=\small, thick, fill=gray!10},
    data/.style={-{Stealth[length=3mm]}, thick},
    datalabel/.style={font=\footnotesize, fill=white, inner sep=1pt},
    node distance=2.2cm and 3cm
]
    \node[module] (agm) {Attack generator\\(AGM)\\{\footnotesize Prompts, triggers}};
    \node[module, right=3cm of agm] (mrag) {Model\\router (MRAG)\\{\footnotesize API Gateway}};
    \node[module, below=2.2cm of mrag] (ee) {Evaluation engine\\(EE)\\{\footnotesize ASR, TSR, Recall}};
    \node[module, below=2.2cm of agm] (sl) {Data\\storage (SL)\\{\footnotesize JSON / PostgreSQL}};

    \draw[data] (agm) -- node[datalabel, above] {Prompt Data} (mrag);
    \draw[data] (mrag) -- node[datalabel, right] {JSON Responses} (ee);
    \draw[data] (ee) -- node[datalabel, below] {Metrics} (sl);
    \draw[data] (sl) -- node[datalabel, left] {Config / State} (agm);

    \node[draw, dashed, rounded corners, minimum width=2.5cm, minimum height=1.1cm, right=1.5cm of mrag, align=center, font=\small] (llm) {Target LLMs\\{\footnotesize vLLM / Cloud API}};
    \draw[data, <->] (mrag) -- node[datalabel, above] {HTTP/gRPC} (llm);

    \draw[data, dashed, bend left=20] (ee.west) to node[datalabel, below, pos=0.4] {Fitness Score} (agm.south);
\end{tikzpicture}
\caption{Component and data flow diagram of the JudgeGuard / TrojanArmor software suites.}
\label{fig:architecture_diagram_ch5_part5}
\end{figure}

The data flow in the system is cyclic. The AGM generator produces attack prompts or trigger sequences, which enter the MRAG router as structured Prompt Data requests. The router selects the target model and issues an HTTP or gRPC request to the inference backend. The model responses (JSON Responses) pass to the EE module, which matches them against reference data and converts them into a set of quantitative Metrics. The computed metrics are then written to the SL storage. For iterative attack algorithms, in particular ASA, a feedback channel is provided: the EE module returns the objective function value, Fitness Score, to the AGM generator and thereby steers the evolutionary search in subsequent iterations.

\subsection{Data Processing Pipeline}
\label{subsec:data_pipeline_ch5_part5}

The data processing pipeline implements a sequence of operations that starts with loading the source prompts and ends with the final metrics. Formally, the pipeline is a composition of mappings:
$$
\Pi = \text{SL}_{\text{write}} \circ \text{EE} \circ \text{MRAG} \circ \text{AGM} \circ \text{SL}_{\text{read}},
$$
where the stage $\text{SL}_{\text{read}}: \text{Config} \to (\Xspace, \text{Params})$ loads the configuration and source data, $\text{AGM}: (\Xspace, \text{Params}) \to \Xspace^N$ generates $N$ attack prompts, $\text{MRAG}: \Xspace^N \to \Xspace^N \times \Yspace^N$ obtains model responses and passes them on together with the prompts, $\text{EE}: \Xspace^N \times \Yspace^N \to \mathbb{R}^M$ computes $M$ metrics, and $\text{SL}_{\text{write}}: \mathbb{R}^M \to \text{Storage}$ saves the results.

Each pipeline stage is implemented as a separate Python module with a fixed interface, which permits independent testing, replacement, and parallel execution of modules. The MRAG stage supports batch processing of requests with a configurable batch size. Without batching, local inference uses GPU resources inefficiently.

\section{Choice of the Technology Stack and Tools}
\label{sec:tech_stack_ch5_part5}

The technology stack was selected for performance when working with LLMs and for compatibility with the machine learning ecosystem. An additional criterion was the ease of deploying the experimental environment.

\subsection{Core Technologies}
\label{subsec:core_technologies_ch5_part5}

\textbf{Python 3.10+} was chosen as the main development language. Python dominates the machine learning ecosystem and has mature libraries for working with LLMs. Built-in asyncio support enables concurrent interaction with several APIs.

PyTorch \cite{paszke_pytorch_2019} serves as the deep learning framework. The framework provides GPU acceleration through CUDA, FP16/BF16 mixed precision to reduce memory consumption, and automatic differentiation, which gradient-based attacks require. Since version 2.0, PyTorch supports model compilation through \texttt{torch.compile}.

The \textbf{HuggingFace Transformers} library \cite{wolf_huggingface_2020} provides a unified API for loading, configuring, and running inference with pretrained models. The HuggingFace Hub offers more than 200{,}000 models, including the GPT, LLaMA, Mistral, and Gemma families and other architectures used in the experimental evaluation of Chapter~\ref{ch:ch4}.

High-throughput inference is implemented with vLLM \cite{kwon_vllm_2023}, which is built on the PagedAttention mechanism for efficient GPU memory management. Unlike the standard HuggingFace pipeline, vLLM implements continuous request batching, which substantially increases throughput in large-scale testing. Table~\ref{tab:vllm_comparison_ch5_part5} gives a quantitative performance comparison of the two approaches.

The REST API of the model router is built on the FastAPI + Uvicorn combination. FastAPI automatically generates OpenAPI documentation, validates input data through Pydantic schemas, and natively supports asynchronous request handling. Uvicorn, an ASGI server based on uvloop, handles thousands of concurrent connections with minimal overhead.

\textbf{PostgreSQL / file storage.} Experimental results are stored in a two-tier scheme. JSONL files hold intermediate results for fast access and are easy to version with Git. PostgreSQL serves long-term storage and analytical SQL queries over metrics.

\subsection{Performance Comparison of vLLM and the Standard HuggingFace Pipeline}
\label{subsec:vllm_comparison_ch5_part5}

The transition from the standard HuggingFace pipeline to vLLM was one of the key engineering decisions. The PagedAttention mechanism~\cite{kwon_vllm_2023}, which underlies vLLM, splits the attention KV cache into fixed-size blocks. The blocks are managed like pages of virtual memory in an operating system. This almost eliminates GPU memory fragmentation and increases the number of requests processed in parallel.

\begin{table}[ht]
\centering
\caption{Performance comparison of vLLM and the HuggingFace pipeline (model LLaMA-2-13B, GPU: NVIDIA A100 80GB, generation length: 256 tokens).}
\label{tab:vllm_comparison_ch5_part5}
\begin{tabular}{|p{6cm}|c|c|c|}
\hline
\textbf{Characteristic} & \textbf{HF Pipeline} & \textbf{vLLM} & \textbf{Gain} \\
\hline
Throughput (req./s) & 2.8 & 14.1 & $\times$5.0 \\
\hline
Latency (p50), ms & 1420 & 310 & $\times$4.6 \\
\hline
Latency (p99), ms & 3200 & 780 & $\times$4.1 \\
\hline
GPU memory utilization, \% & 62 & 91 & +29\,pp \\
\hline
Max. parallel requests & 4 & 32 & $\times$8.0 \\
\hline
Full evaluation time (1000~prompts)\textsuperscript{*}, min & 47.2 & 9.8 & $\times$4.8 \\
\hline
\multicolumn{4}{|p{0.93\textwidth}|}{\textsuperscript{*}\footnotesize The full evaluation of 1000 prompts includes multi-model evaluation by a committee of $k$ models with repeated requests for aggregation, as well as preprocessing/postprocessing overhead. At a throughput of 2.8~req./s, the time of 47.2~min corresponds to $47.2 \times 60 \times 2.8 \approx 7930$ model requests, i.e., ${\sim}8$ model calls per prompt.} \\
\hline
\end{tabular}
\end{table}

According to Table~\ref{tab:vllm_comparison_ch5_part5}, throughput increased fivefold, and the full evaluation time for 1000 prompts decreased from 47 to 10 minutes. The full evaluation takes longer than 1000 single requests would, because each prompt goes through multi-model evaluation by a committee, with about 8 model calls per prompt (see the note to Table~\ref{tab:vllm_comparison_ch5_part5}). For iterative attack algorithms such as ASA, where each generation of the evolutionary algorithm produces $N$ calls to the target model, such a speedup is critical.

The JudgeGuard and TrojanArmor suites are built on these technologies. The next section presents the MCPSec module, which extends both suites to the security evaluation of agentic LLM systems.

\section{Module for Evaluating the Robustness of Agentic LLM Systems (MCPSec)}
\label{sec:mcpsec_module_ch5_part5}

Agentic LLM systems form input sequences from the data of external tools through the Model Context Protocol (MCP) \cite{anthropic_mcp_2024}. To test such systems, the MCPSec module \cite{maloyan2025breaking} was developed and integrated into the four-component architecture described in Section~\ref{sec:architecture_ch5_part5}. As part of the integration, the AGM attack generator was extended with MCP injection templates, the MRAG router received adapters for the MCP transport, and the EE module was extended with specialized security metrics for agentic interaction. The MCP architecture and the testing mechanisms are described below, followed by the defense algorithms.

\subsection{Architecture of the Model Context Protocol}
\label{subsec:mcp_architecture_ch5_part5}

The Model Context Protocol defines a client-server architecture in which the LLM agent acts as the client and interacts with a set of servers that provide access to tools, data sources, and prompts. The protocol is based on the JSON-RPC 2.0 specification with two supported transport mechanisms:

\begin{itemize}
    \item \textbf{stdio (Standard I/O)}: a transport over the standard input and output streams of a child process, intended for local servers launched as child processes of the host application.
    \item \textbf{SSE (Server-Sent Events)}: an asynchronous HTTP transport for remote servers with support for streaming data transfer.
\end{itemize}

The main protocol methods include:
\begin{itemize}
    \item \texttt{tools/list}: retrieves the list of available tools with descriptions and JSON schemas of their parameters.
    \item \texttt{tools/call}: invokes a tool with the given arguments and returns a \texttt{CallToolResult} object with the result.
    \item \texttt{prompts/list}: retrieves the list of available prompt templates.
    \item \texttt{resources/read}: reads data from resources provided by the server.
\end{itemize}

\subsection{Interaction Sequence Diagram}
\label{subsec:mcp_sequence_ch5_part5}

Fig.~\ref{fig:mcp_sequence_ch5_part5} shows a UML sequence diagram of a typical interaction cycle between an LLM agent and an MCP server.

\begin{figure}[ht]
\centering
\resizebox{\textwidth}{!}{%
\begin{tikzpicture}[font=\small]
    \node[draw, fill=gray!10, minimum width=2.4cm, minimum height=0.8cm] (agent) at (0, 0) {LLM agent};
    \node[draw, fill=gray!10, minimum width=2.4cm, minimum height=0.8cm] (transport) at (4.5, 0) {Transport};
    \node[draw, fill=gray!10, minimum width=2.4cm, minimum height=0.8cm] (server) at (9, 0) {MCP server};
    \node[draw, fill=gray!10, minimum width=2.4cm, minimum height=0.8cm, align=center] (exec) at (13.5, 0) {Execution\\environment};

    \draw[dashed] (0, -0.4) -- (0, -11.5);
    \draw[dashed] (4.5, -0.4) -- (4.5, -11.5);
    \draw[dashed] (9, -0.4) -- (9, -11.5);
    \draw[dashed] (13.5, -0.4) -- (13.5, -11.5);

    \draw[-{Stealth}] (0, -1.2) -- node[above, font=\footnotesize] {tools/list (JSON-RPC)} (4.5, -1.2);
    \draw[-{Stealth}] (4.5, -1.8) -- node[above, font=\footnotesize] {Routing} (9, -1.8);
    \draw[{Stealth}-] (4.5, -2.6) -- node[above, font=\footnotesize] {Tool list} (9, -2.6);
    \draw[{Stealth}-] (0, -3.2) -- node[above, font=\footnotesize] {Tool[] + JSON Schema} (4.5, -3.2);

    \draw[-{Stealth}] (0, -4.4) -- node[above, font=\footnotesize] {tools/call \{name, args\}} (4.5, -4.4);
    \draw[-{Stealth}] (4.5, -5.0) -- node[above, font=\footnotesize] {JSON-RPC Request} (9, -5.0);
    \draw[-{Stealth}] (9, -5.8) -- node[above, font=\footnotesize] {Execute code} (13.5, -5.8);

    \draw[thick, fill=yellow!10] (13, -6.2) rectangle (14, -7.4);
    \node[font=\tiny] at (13.5, -6.8) {exec};

    \draw[{Stealth}-] (9, -7.8) -- node[above, font=\footnotesize] {Result} (13.5, -7.8);
    \draw[{Stealth}-] (4.5, -8.6) -- node[above, font=\footnotesize] {CallToolResult} (9, -8.6);
    \draw[{Stealth}-] (0, -9.2) -- node[above, font=\footnotesize] {content: TextContent[]} (4.5, -9.2);

    \draw[thick, fill=blue!5] (-1.0, -9.6) rectangle (1.0, -10.8);
    \node[font=\tiny, align=center] at (0, -10.2) {Response\\generation};

    \draw[decorate, decoration={brace, mirror, amplitude=5pt}] (-2.2, -0.8) -- (-2.2, -3.5) node[midway, left=8pt, font=\footnotesize, align=right] {Initialization\\phase};
    \draw[decorate, decoration={brace, mirror, amplitude=5pt}] (-2.2, -4.0) -- (-2.2, -9.5) node[midway, left=8pt, font=\footnotesize, align=right] {Execution\\phase};
\end{tikzpicture}%
}
\caption{UML sequence diagram of the interaction between an LLM agent and an MCP server.}
\label{fig:mcp_sequence_ch5_part5}
\end{figure}

\subsection{Protocol Vulnerabilities: From Prompt Injection to Tool Hijacking}
\label{subsec:mcp_vulnerabilities_ch5_part5}

MCP contains a critical vulnerability at the boundary between natural language processing and program code execution. The LLM agent decides which tool to call based on a context that may include data from untrusted sources. This creates the attack vector known as Prompt Injection to Tool Hijacking.

An analysis of the MCP specification revealed three architectural vulnerabilities:

\begin{enumerate}
    \item \textbf{Capability Attestation Absence.} MCP servers declare their own capabilities when the connection is initialized, and neither the client nor a third party verifies them. The protocol allows capabilities to change during a session without validation. A compromised server can escalate its privileges from restricted to unrestricted after the connection is established.

    \item \textbf{Unauthenticated Sampling.} The sampling mechanism allows servers to request the generation of LLM responses while assigning the ``user'' role to messages. The protocol does not authenticate the message source, and the user cannot visually distinguish messages injected by the server from their own requests. This opens a channel for direct protocol-level prompt injection.

    \item \textbf{Implicit Trust Propagation.} In configurations with several MCP servers, isolation boundaries are not defined. Tool responses from one server influence tool calls to another without tracking data provenance. A compromise of one server propagates privileges in a cascade through all connected services.
\end{enumerate}

The context of an LLM agent consists of the system prompt, the user request, and the results of previous tool calls. An adversary who controls call results through a compromised MCP server injects a malicious instruction into the returned data. The LLM agent does not distinguish instructions from data, so the modified context forces it to call an arbitrary tool with malicious arguments.

Threat~\ref{threat:prompt_injection} in Section~\ref{subsec:threat_descriptions} of Chapter~\ref{ch:ch1} formalizes prompt injection. The mechanism above is its indirect variant: the injected instruction arrives through tool data, not through the user query. In the MCP context, however, the consequences are more severe: tool calls lead to arbitrary code execution, file system modification, and outgoing network requests, and the side effects reach the real execution environment.

The software suite implements a module for automated testing of MCP servers for robustness to this class of attacks. The module generates malicious tool responses, tracks the subsequent agent calls, and evaluates whether the hijacking succeeds.

\subsection{Experimental Evaluation of Protocol Vulnerabilities: MCPBench}
\label{subsec:mcpbench_ch5_part5}

To quantify the impact of the architectural vulnerabilities of MCP on the security of agentic systems, the MCPBench research benchmark \cite{maloyan2025breaking} was developed. The benchmark combines attack scenarios on agentic LLM systems with the MCP infrastructure. Testing covered 847 scenarios with the Claude-3.5-Sonnet, GPT-4o, and Llama-3.1-70B models.

To isolate the contribution of protocol vulnerabilities from general LLM vulnerabilities, MCP implementations were compared with baseline non-MCP tool integrations. Table~\ref{tab:mcpbench_results_ch5_part5} summarizes the results.

\begin{table}[ht]
\centering
\caption{Comparison of attack success rates: MCP vs. baseline (non-MCP) tool integration.}
\label{tab:mcpbench_results_ch5_part5}
\begin{tabular}{|p{5.5cm}|c|c|c|}
\hline
\textbf{Attack type} & \textbf{Baseline ASR} & \textbf{MCP ASR} & \textbf{$\Delta$} \\
\hline
Indirect injection & 31.2\% & 47.8\% & +16.6\,pp \\
\hline
Tool response manipulation & 28.4\% & 52.1\% & +23.7\,pp \\
\hline
Cross-server propagation & 19.7\% & 61.3\% & +41.6\,pp \\
\hline
\textbf{Average} & \textbf{26.4\%} & \textbf{53.7\%} & \textbf{+27.3\,pp} \\
\hline
\end{tabular}
\end{table}

The architectural decisions of MCP increase the ASR against agentic systems by 16.6--41.6 percentage points (pp) relative to non-MCP integrations. For indirect injections, the increase was +16.6\,pp. Cross-server attacks gave the largest ASR increase, +41.6\,pp. In the latter case, the difference arises from the absence of isolation between servers and from implicit trust propagation.

The obtained ASR values refer to the studied agent implementations, the selected tool scenarios, and the specific MCP interaction configurations. They confirm the existence and practical significance of this class of vulnerabilities, but they do not cover the entire space of possible agentic architectures.

\subsection{Algorithmic Support for the Security of Agentic Interaction}
\label{subsec:protocol_security_ch5_part5}

Based on the MCP vulnerabilities identified in Section~\ref{subsec:mcp_vulnerabilities_ch5_part5}, the MCPSec module includes two algorithms for the software defense of agentic systems. The first, AttestMCP, implements cryptographic attestation of messages. The second, the Commit Boundary pattern, isolates code execution.

\subsubsection{Algorithm for Cryptographic Attestation and Message Routing (AttestMCP)}
\label{subsec:attest_mcp_ch5_part5}

The AttestMCP algorithm restricts the tool calls available to the LLM agent to those listed in a permission table fixed in advance, and it protects the integrity of each RPC packet with an HMAC signature. The defense rests on two distinct layers. The permission table defines authorization: it specifies which tools, and with which arguments, are allowed in the session. The HMAC signature ensures integrity: it authenticates the packet source at the transport layer and protects the packet from modification in transit.

Let $\mathcal{P} = \{(s_i, t_j, a_k)\}$ be the permission table, where $s_i$ is the session identifier, $t_j$ is the tool identifier, and $a_k$ is the specification of allowed arguments. Without loss of generality, $\mathcal{P}$ holds at most one entry per session--tool pair, because several argument specifications for one pair merge into their union. The table is built during the session initialization phase, before data from untrusted sources enter the agent context. During the execution phase, $\mathcal{P}$ remains unchanged, and the LLM agent has no operation that extends it. This restriction of the reachable agent actions to $\mathcal{P}$ is the main defense mechanism, and it does not depend on the context content.

Each RPC packet carries a signature $\sigma = \text{HMAC}(K_{session}, s_i \| t_j \| \text{hash}(\text{args}))$, where $K_{session}$ is the session key established when the MCP connection is initialized, $\text{args}$ are the actual call arguments, and $\|$ denotes unambiguous (length-prefixed) concatenation. The signature binds the packet to the session and the tool and protects the integrity of the transmitted arguments through their hash.

\vspace{5pt}
\hrule
\vspace{5pt}
\textbf{Algorithm 3: AttestMCP, cryptographic attestation of tool calls}
\vspace{2pt}
\hrule
\begin{algorithmic}[1]
    \State \textbf{Input:} RPC packet $\text{pkt} = (\text{method}, \text{params}, \sigma)$, permission table $\mathcal{P}$, session key $K_{session}$.
    \State \textbf{Output:} Execution result or rejection.
    \State Extract $(\text{session\_id}, \text{tool\_id}, \text{args}) \leftarrow \text{ParseRPC}(\text{pkt})$.
    \State Compute $\sigma' \leftarrow \text{HMAC}(K_{session}, \text{session\_id} \| \text{tool\_id} \| \text{hash}(\text{args}))$.
    \If{$\sigma \neq \sigma'$}
        \State \textbf{Return} $\text{Error}(\text{INVALID\_SIGNATURE})$. \Comment{Invalid signature}
    \EndIf
    \State Find the entry $(s_i, t_j, a_k) \in \mathcal{P}$ with $s_i = \text{session\_id}$, $t_j = \text{tool\_id}$.
    \If{no entry found}
        \State \textbf{Return} $\text{Error}(\text{PERMISSION\_DENIED})$. \Comment{No permission}
    \EndIf
    \If{$\text{args} \not\in \text{AllowedArgs}(a_k)$}
        \State \textbf{Return} $\text{Error}(\text{ARGS\_VIOLATION})$. \Comment{Disallowed arguments}
    \EndIf
    \State $\text{result} \leftarrow \text{ExecuteTool}(\text{tool\_id}, \text{args})$. \Comment{Route to the server}
    \State Write $(\text{pkt}, \text{result}, \text{timestamp})$ to the audit log.
    \State \textbf{Return} $\text{CallToolResult}(\text{result})$.
\end{algorithmic}
\vspace{5pt}
\hrule
\vspace{5pt}

\textbf{Complexity analysis.} Computing HMAC-SHA256 takes $O(|\text{args}|)$. The lookup of an entry in the permission table, implemented with a hash table, runs in amortized $O(1)$. Argument validation against the JSON schema requires $O(|\text{args}| \cdot |\text{schema}|)$ comparison operations in the worst case, which is linear in $|\text{args}|$ for a fixed schema.

In practice, the total overhead did not exceed 0.1\,ms per call for $|\text{args}| \le 4096$~bytes. Measurements were taken on an Intel Xeon Gold 6248R CPU with a clock frequency of 3.0~GHz, and LLM inference ran on an NVIDIA A100. The attestation time was three orders of magnitude lower than the typical inference latency of 300--800\,ms for models of the LLaMA-2-13B class.

\textbf{Limits of the guarantees.} The authorization and integrity layers solve different tasks, and their functions are not interchangeable. The HMAC signature confirms the source and integrity of a packet, but not the intent of the call. A tool call that the LLM agent forms under the influence of an injection passes through the same legitimate client as a call made at the user's request, and it receives a valid signature. The signature by itself does not separate a call authorized by the user from a call imposed by an injection. The permission table provides this separation: an injection does not take the agent outside $\mathcal{P}$, because the table is fixed before untrusted data arrive and does not change during the session.

The cryptographic layer closes a separate vector: packet forgery within the cross-server trust propagation from Section~\ref{subsec:mcp_vulnerabilities_ch5_part5}. A compromised MCP server does not hold the key $K_{session}$, which is stored at the transport layer outside the context window of the model. Such a server cannot forge a packet on behalf of the client or of another server, which blocks privilege escalation through forged packets. A tool call that the agent itself forms after reading a malicious response from a compromised server still receives a valid signature. The permission table $\mathcal{P}$, not the signature, bounds such calls.

AttestMCP leaves part of the risk in place. If an injection forces the agent to call an allowed tool with allowed arguments for a malicious purpose, the packet passes both checks. The argument specification $a_k$ and the Commit Boundary pattern from Section~\ref{subsec:commit_boundary_ch5_part5} narrow this residual risk: the narrower the allowed set of arguments, the smaller the space for abuse within $\mathcal{P}$.

\textbf{Formalization of AttestMCP as a finite automaton.}
The verification protocol is modeled as a finite automaton $\mathcal{M} = (S, \Sigma, \delta, s_{init}, F)$. The automaton has five states: $S = \{s_{init},\allowbreak s_{auth},\allowbreak s_{exec},\allowbreak s_{return},\allowbreak s_{error}\}$. The input alphabet $\Sigma$ includes RPC packets, verification results, and execution results. The final states are $F = \{s_{return}, s_{error}\}$.

The transitions are as follows. On receipt of an RPC packet, the automaton moves from $s_{init}$ to $s_{auth}$, that is, $s_{init} \xrightarrow{\text{pkt}} s_{auth}$. The transition $s_{auth} \to s_{exec}$ occurs only if the signature matches, $\sigma = \sigma'$, and an entry $(s_i, t_j, a_k) \in \mathcal{P}$ with allowed arguments exists. A signature mismatch, a missing entry, or a violation of the argument specification leads to $s_{error}$. Completion of execution produces $s_{exec} \xrightarrow{\text{ok}} s_{return}$ or $s_{exec} \xrightarrow{\text{fail}} s_{error}$.

The automaton maintains the following invariant, which states that every executed call satisfies the permission table:
$$
\forall\, \text{path } (s_{init}, \ldots, s_{exec}): \quad \exists\, a_k:\ (\text{session\_id}, \text{tool\_id}, a_k) \in \mathcal{P} \;\wedge\; \text{args} \in \text{AllowedArgs}(a_k).
$$
Since $\mathcal{P}$ is fixed during the initialization phase, the invariant bounds the set of agent actions from above, independently of the instructions injected into its context. Prompt injection does not extend permissions beyond $\mathcal{P}$, and HMAC verification prevents an adversary from circumventing this restriction by substituting packets at the transport layer.

\subsubsection{Algorithm for Software Execution Isolation: The Commit Boundary Pattern}
\label{subsec:commit_boundary_ch5_part5}

Code generated by an LLM agent must not run directly in the host environment. The Commit Boundary pattern runs it in a containerized environment and validates the result before passing it back, which implements the principle of least privilege.

\vspace{5pt}
\hrule
\vspace{5pt}
\textbf{Algorithm 4: Commit Boundary, isolated code execution}
\vspace{2pt}
\hrule
\begin{algorithmic}[1]
    \State \textbf{Input:} LLM-generated code $\text{code} \in \Yspace$; resource limits $\text{limits} = (T_{max}, M_{max}, \text{net})$;
    \State \phantom{\textbf{Input:}} HITL (human-in-the-loop) flag $h \in \{0, 1\}$.
    \State \textbf{Output:} Execution result or rejection.
    \State $\text{ast} \leftarrow \text{ParseAST}(\text{code})$. \Comment{Parse the abstract syntax tree}
    \If{$\text{ast} = \text{NULL}$}
        \State \textbf{Return} $\text{Error}(\text{PARSE\_FAILED})$. \Comment{Syntax error}
    \EndIf
    \State $\text{risk} \leftarrow \text{StaticAnalysis}(\text{ast})$. \Comment{Static analysis: syscall, network, fs}
    \If{$\text{risk} > \text{THRESHOLD\_HIGH}$}
        \If{$h = 0$}
            \State \textbf{Return} $\text{Error}(\text{HIGH\_RISK\_BLOCKED})$.
        \Else
            \State $\text{approved} \leftarrow \text{RequestHumanApproval}(\text{code}, \text{risk})$.
            \If{$\neg \text{approved}$}
                \State \textbf{Return} $\text{Error}(\text{HUMAN\_REJECTED})$.
            \EndIf
        \EndIf
    \EndIf
    \State $\text{container} \leftarrow \text{AllocateContainer}(\text{limits})$. \Comment{Docker with cgroups limits}
    \State Configure $\text{container}$: timeout $T_{max}$, memory $M_{max}$, network $\text{net}$.
    \State $(\text{result}, \text{exit\_code}) \leftarrow \text{RunInContainer}(\text{container}, \text{code})$.
    \State $\text{ReleaseContainer}(\text{container})$. \Comment{Release resources}
    \If{$\text{exit\_code} \neq 0$ or timeout}
        \State \textbf{Return} $\text{Error}(\text{EXECUTION\_FAILED}, \text{exit\_code})$.
    \EndIf
    \State $\text{validated} \leftarrow \text{ValidateOutput}(\text{result})$. \Comment{Validate the output data}
    \If{$\neg \text{validated}$}
        \State \textbf{Return} $\text{Error}(\text{OUTPUT\_REJECTED})$.
    \EndIf
    \State Write $(\text{code}, \text{result}, \text{risk}, \text{timestamp})$ to the audit log.
    \State \textbf{Return} $\text{CommitResult}(\text{result})$. \Comment{Commit the result}
\end{algorithmic}
\vspace{5pt}
\hrule
\vspace{5pt}

\textbf{Complexity analysis.} The execution of the code in the container dominates, and the timeout $T_{max}$ bounds it. The isolation overhead consists of the following parts:
\begin{itemize}
    \item AST parsing: $O(|\text{code}|)$, linear in the code size.
    \item Static analysis: $O(|\text{ast}| \cdot |\text{rules}|)$, where $|\text{rules}|$ is the number of detection rules, typically 50--100.
    \item Docker container allocation: 200--500\,ms on a cold start, about 50\,ms with a pool of pre-created containers.
\end{itemize}
In total, the overhead is 0.3--0.8\,s on a cold start, which is comparable to the LLM code generation time and acceptable for interactive scenarios.

Three layers of defense act in sequence: static analysis at the parsing stage, container isolation at execution, and output validation before the commit. For high-risk operations, the HITL mode is activated: execution then requires explicit human approval, so the required level of trust grows with the estimated risk.

\subsubsection{Components of the Integrated Defense in the MCPSec Module}
\label{subsec:mcpsec_ch5_part5}

The MCPSec module~\cite{maloyan2025breaking} combines the AttestMCP and Commit Boundary algorithms with four additional defense mechanisms.

\textbf{Capability Certificates} restrict the allowed set of operations of each server and thereby prevent privilege escalation. Origin Tagging separates message sources: a visual and programmatic tag distinguishes the output of the sampling mechanism from user requests. Without explicit user permission, the Isolation Enforcement mechanism blocks cross-server data transfer. Replay Protection based on timestamps and one-time nonces prevents the replay of intercepted messages.

On the MCPBench benchmark, MCPSec reduced the average attack success rate from 53.7\% to 12.4\%, that is, by 41.3~pp. The additional steady-state processing latency per message for attestation and the MCPSec filtering mechanisms did not exceed 8.3\,ms. This value does not include the perplexity filter and content moderation of Table~\ref{tab:co_decomposition_ch5_part5}. Messages that trigger sandboxed code execution additionally incur the container allocation cost, about 50\,ms with a pool of pre-created containers. Both values are small relative to the LLM inference time of 300--800\,ms for models of the LLaMA-2-13B class.

\section{Evaluation of the Computational Complexity and Performance of the Software Suite}
\label{sec:performance_ch5_part5}

This section gives a quantitative evaluation of the performance of the software suite in terms of two indicators: the dependence of latency on the input context length with active defense filters, and the throughput under different inference configurations.

\subsection{Latency Analysis as a Function of Context Length}
\label{subsec:latency_analysis_ch5_part5}

The processing time of one request consists of four components: preprocessing, which includes tokenization and prompt formatting; model inference; the work of the defense filters; and metric computation. Let $n = |x|$ denote the length of the input sequence in tokens.

For Transformer models, the cost of self-attention grows quadratically with the context length, as $O(n^2 \cdot d)$, where $d$ is the model dimension. The projection and feed-forward layers add a term $O(n \cdot d^2)$, which dominates at moderate context lengths ($d = 5120$ for LLaMA-2-13B). Therefore, in the measured range inference time grows superlinearly but not yet quadratically, and the quadratic term dominates only asymptotically. The defense filters, namely the perplexity filter and content moderation, each run in $O(n)$. Fig.~\ref{fig:latency_plot_ch5_part5} shows the dependence of the total latency on the context length for different configurations.

\begin{figure}[ht]
\centering
\begin{tikzpicture}[scale=0.9]
    \draw[-{Stealth}, thick] (0,0) -- (11,0) node[right] {$n$ (tokens)};
    \draw[-{Stealth}, thick] (0,0) -- (0,8) node[above] {$T$ (ms)};

    \foreach \x/\label in {1/512, 3/1024, 5/2048, 7/4096, 9/8192} {
        \draw (\x, -0.1) -- (\x, 0.1);
        \node[below, font=\footnotesize] at (\x, -0.15) {\label};
    }

    \foreach \y/\label in {1/200, 2/400, 3/600, 4/800, 5/1000, 6/1200, 7/1400} {
        \draw (-0.1, \y) -- (0.1, \y);
        \node[left, font=\footnotesize] at (-0.15, \y) {\label};
    }

    \foreach \x in {1,3,5,7,9} {
        \draw[gray!20] (\x, 0) -- (\x, 7.5);
    }
    \foreach \y in {1,2,3,4,5,6,7} {
        \draw[gray!20] (0, \y) -- (10.5, \y);
    }

    \draw[blue, thick, mark=*] plot coordinates {
        (1, 0.75) (3, 1.05) (5, 1.55) (7, 3.10) (9, 6.90)
    };

    \draw[red, thick, mark=square*] plot coordinates {
        (1, 0.90) (3, 1.25) (5, 1.90) (7, 3.50) (9, 7.45)
    };

    \draw[green!50!black, thick, mark=triangle*] plot coordinates {
        (1, 0.95) (3, 1.325) (5, 1.99) (7, 3.65) (9, 7.70)
    };

    \draw[blue, thick] (5.5, 7.5) -- (6.5, 7.5) node[right, font=\footnotesize, black] {Inference only};
    \node[blue, font=\footnotesize] at (6, 7.5) {$\bullet$};
    \draw[red, thick] (5.5, 7.0) -- (6.5, 7.0) node[right, font=\footnotesize, black] {Inference + filters};
    \node[red, font=\footnotesize] at (6, 7.0) {$\blacksquare$};
    \draw[green!50!black, thick] (5.5, 6.5) -- (6.5, 6.5) node[right, font=\footnotesize, black] {Full pipeline};
    \node[green!50!black, font=\footnotesize] at (6, 6.5) {$\blacktriangle$};
\end{tikzpicture}
\caption{Dependence of request processing latency on the input context length (model LLaMA-2-13B, vLLM, GPU: NVIDIA A100).}
\label{fig:latency_plot_ch5_part5}
\end{figure}

Fig.~\ref{fig:latency_plot_ch5_part5} shows that the overhead of the defense components is about 28\% at the typical context length $n = 2048$, which agrees with the CO decomposition in Table~\ref{tab:co_decomposition_ch5_part5}, and decreases to about 12\% at $n = 8192$. The filters run in $O(n)$, while inference time grows superlinearly, asymptotically as $O(n^2)$. Therefore, the share of the filters in the total time decreases as $n$ grows. The defense mechanisms do not become a bottleneck.

\subsection{Throughput Comparison Across Configurations}
\label{subsec:throughput_comparison_ch5_part5}

Throughput was measured for four inference configurations. The model is LLaMA-2-13B, the generation length is 256 tokens, and the test sample size is 1000 requests:

\begin{enumerate}
    \item \textbf{HF-seq}: HuggingFace pipeline, sequential processing;
    \item \textbf{HF-batch}: HuggingFace pipeline, batch processing (batch size = 8);
    \item \textbf{vLLM-base}: vLLM without defense filters;
    \item \textbf{vLLM-full}: vLLM with active defense filters and metric computation.
\end{enumerate}

Fig.~\ref{fig:throughput_chart_ch5_part5} presents the results.

\begin{figure}[ht]
\centering
\begin{tikzpicture}[scale=0.9]
    \draw[-{Stealth}, thick] (0,0) -- (10,0);
    \draw[-{Stealth}, thick] (0,0) -- (0,8) node[above, font=\small] {Req./s};

    \foreach \y/\label in {1/2, 2/4, 3/6, 4/8, 5/10, 6/12, 7/14} {
        \draw (-0.1, \y) -- (0.1, \y);
        \node[left, font=\footnotesize] at (-0.15, \y) {\label};
        \draw[gray!20] (0, \y) -- (9.5, \y);
    }

    \fill[blue!60] (1, 0) rectangle (2, 0.7);
    \node[above, font=\footnotesize] at (1.5, 0.7) {1.4};
    \node[below, font=\footnotesize, align=center] at (1.5, -0.1) {HF\\seq};

    \fill[blue!40] (3, 0) rectangle (4, 1.4);
    \node[above, font=\footnotesize] at (3.5, 1.4) {2.8};
    \node[below, font=\footnotesize, align=center] at (3.5, -0.1) {HF\\batch};

    \fill[red!60] (5, 0) rectangle (6, 7.05);
    \node[above, font=\footnotesize] at (5.5, 7.05) {14.1};
    \node[below, font=\footnotesize, align=center] at (5.5, -0.1) {vLLM\\base};

    \fill[red!40] (7, 0) rectangle (8, 5.85);
    \node[above, font=\footnotesize] at (7.5, 5.85) {11.7};
    \node[below, font=\footnotesize, align=center] at (7.5, -0.1) {vLLM\\full};
\end{tikzpicture}
\caption{Throughput comparison across inference configurations (model LLaMA-2-13B, GPU: NVIDIA A100).}
\label{fig:throughput_chart_ch5_part5}
\end{figure}

vLLM-base reaches 14.1 req./s, which exceeds HF-seq (1.4 req./s) by a factor of 10 and HF-batch (2.8 req./s) by a factor of 5. With the full set of defense filters and metric computation enabled (vLLM-full), throughput drops to 11.7 req./s, a loss of 17\% relative to the base configuration. The PagedAttention mechanism preserves high performance even with active security components.

\subsection{Scalability in Multi-Model Evaluation}
\label{subsec:scalability_ch5_part5}

In the evaluation of model ensembles, described as committees in Section~\ref{subsec:committee_pseudocode} of Chapter~\ref{ch:ch3}, the scalability of the system with a growing number of models $k$ becomes a critical parameter. The MRAG router implements two operating modes:

\begin{itemize}
    \item \textbf{Parallel inference (GPU sharing):} several models are placed on one GPU with the NVIDIA Multi-Process Service (MPS) mechanism. Throughput scales sublinearly: $\Theta_k \approx \Theta_1 \cdot k^{0.7}$, where $\Theta_k$ is the total throughput for $k$ models and $\Theta_1$ is the throughput for a single model.

    \item \textbf{Distributed inference (multi-GPU):} each model is placed on a separate GPU. Throughput scales linearly: $\Theta_k \approx k \cdot \Theta_1$, subject to the interconnect communication overhead ($\sim$2--5\% loss with NVLink).
\end{itemize}

The experiments described in Chapter~\ref{ch:ch4} used a configuration with 4 NVIDIA A100 80GB GPUs, which can simultaneously serve up to 4 models with 13B parameters or 2 models with 70B parameters when tensor parallelism is used. In the distributed mode, where each model is placed on a separate GPU, the linear estimate gives $\Theta_4 \approx 4 \cdot 14.1 = 56.4$ req./s, or about 54--55 req./s after the 2--5\% interconnect loss. The measured total throughput reached up to 45 model inference requests per second for four 13B models. The overhead of request orchestration and of the synchronization of committee verdicts explains the gap from the linear estimate. The evaluation of a dataset of 10{,}000 prompts with a committee of $k$ models generates $10{,}000 \cdot k$ model requests, in particular $30{,}000$ for $k=3$. For $k=4$, the $40{,}000$ requests at the measured throughput of about 45 req./s take $\sim$15 minutes. For $k=3$, the committee occupies three of the four GPUs, so the throughput scales to about $3/4 \cdot 45 \approx 34$ req./s and the $30{,}000$ requests also take $\sim$15 minutes. The time drops to $\sim$11 minutes only if all four GPUs serve the three-model committee, for example with a replica of one model.

\subsection{Computational Overhead of the Security Algorithms}
\label{subsec:co_analysis_ch5_part5}

The total Computational Overhead (CO) of the software suite is defined as the ratio of the additional time spent on the security components to the base inference time:
$$
\text{CO} = \frac{T_{\text{total}} - T_{\text{inference}}}{T_{\text{inference}}} \times 100\%.
$$

Measurements were taken for a typical configuration: the LLaMA-2-13B model, a context length of 2048 tokens, and the vLLM backend. Table~\ref{tab:co_decomposition_ch5_part5} gives the decomposition of the overhead.

\begin{table}[ht]
\centering
\caption{Decomposition of the computational overhead of the security components.}
\label{tab:co_decomposition_ch5_part5}
\begin{tabular}{|l|c|c|}
\hline
\textbf{Component} & \textbf{Latency, ms} & \textbf{CO share, \%} \\
\hline
Base inference (vLLM) & 310 & {--}  \\
\hline
Perplexity filter & 42 & 13.5 \\
\hline
Content moderation & 28 & 9.0 \\
\hline
AttestMCP attestation & 0.08 & 0.03 \\
\hline
Metric computation & 15 & 4.8 \\
\hline
Logging and storage & 3 & 1.0 \\
\hline
\textbf{Total CO} & \textbf{88.08} & \textbf{28.4} \\
\hline
\end{tabular}
\end{table}

According to Table~\ref{tab:co_decomposition_ch5_part5}, the total computational overhead is 28.4\% of base inference. The perplexity filter contributes the largest share, 13.5\%, whereas AttestMCP adds only 0.03\%, which makes it suitable for high-load real-time systems. This overhead level shows that the methods of Chapters~\ref{ch:ch2}--\ref{ch:ch3} can be implemented at an acceptable computational cost.

\section*{Chapter Conclusions}

The methods of Chapters~\ref{ch:ch2}--\ref{ch:ch3} are implemented in two registered software suites: JudgeGuard, certificate No.~2025687702, and TrojanArmor, certificate No.~2025684247.

A common four-component architecture underlies both suites. The AGM attack generator, the MRAG model router, the EE evaluation engine, and the SL storage interact through well-defined interfaces, which allows new models, metrics, and attack types to be added without modifying the core. The replacement of the HuggingFace pipeline with vLLM and PagedAttention increased throughput fivefold on an NVIDIA A100. For iterative algorithms such as ASA, which need $N \cdot G = 1000$ model requests per attack in the experiments of Chapter~\ref{ch:ch4}, this gain is critical.

The MCPSec module extends both suites to agentic LLM systems that use MCP. It includes the MCPBench benchmark of 847 attack scenarios, AttestMCP cryptographic attestation (Algorithm~3), and Commit Boundary isolated execution (Algorithm~4). On MCPBench, the average ASR decreased by 41.3~pp, from 53.7\% to 12.4\%, compared with the unprotected configuration, as shown in Section~\ref{subsec:mcpsec_ch5_part5}.

The total computational overhead of the security components does not exceed 30\% of base inference, according to Table~\ref{tab:co_decomposition_ch5_part5}. The filters run in $O(n)$, whereas inference time grows superlinearly, asymptotically as $O(n^2)$. Consequently, the relative share of the defense components decreases as the context length grows. The current implementation is tied to specific provider APIs: when API versions change, the MRAG adapters will require updates. The performance measurements were taken on a single NVIDIA A100 hardware configuration, and extrapolation to other accelerators requires verification.

JudgeGuard and TrojanArmor cover the full LLM security evaluation cycle. Chapter~\ref{ch:ch6} describes the practical application of the suites.

\chapter{Practical Application of Robustness Enhancement Methods}\label{ch:ch6}

\section*{Chapter Overview}
A number of practical tasks separate laboratory experiments from production systems: the choice of metrics for deployment effectiveness, the reconciliation of security requirements with the computational budget, and the adaptation of defenses to the specifics of the application domain. The methodology for deploying defense mechanisms relies on technical and economic evaluation metrics. The practical examples for NLP systems and safety-critical applications are complemented by an analysis of technology development trends.

\section{Methodology for Deploying Robustness Enhancement Methods}
\label{sec:implementation_methodology_ch6}

The deployment of LLM robustness enhancement methods in real systems is not limited to the application of individual algorithms. It requires a systematic approach that takes into account the specifics of the task, the available resources, and the security requirements. The discrete token space of LLMs imposes additional constraints on the choice and adaptation of methods.

\subsection{Stages of Deploying Robustness Enhancement Methods}
\label{subsec:implementation_steps_ch6}

We propose a deployment methodology based on the life cycle of risk management and secure software development \cite{dodson_secure_2022, nist_ai_rmf_2023}. The methodology comprises five consecutive stages.

\textbf{1. Requirements analysis and risk assessment.}
The robustness and security requirements of the system are determined from its purpose and the potential consequences of failures.
Risk assessment uses standard methodologies, in particular the NIST AI Risk Management Framework \cite{nist_ai_rmf_2023} and
the OWASP Top 10 for LLM~\cite{owasp_foundation_owasp_2023}, described in Chapter~\ref{ch:ch1}.
At the same stage, the relevant threats listed in Section~\ref{subsec:threat_descriptions} of Chapter~\ref{ch:ch1} are identified:
prompt injection, including attacks on LLM-as-a-Judge and on defended systems,
and trojan attacks, taking into account how hard they are to detect.
The result of this stage is recorded as a prioritized list of risks:
$$
\mathcal{R}_{\text{priority}} = \text{Prioritize}(\{(a_i, v_j, P_{\text{exploit}}(v_j|a_i), \text{ImpactValue}(v_j))\}),
$$
where $a_i \in \mathcal{A}^{\mathrm{type}}$ is an attack type, $v_j \in \mathcal{V}$ is a vulnerability, $P_{\text{exploit}}$ is the probability of exploitation, and $\text{ImpactValue}$ is a quantitative estimate of the damage, as defined in Definitions~\ref{def:threat_space} and~\ref{def:attack_risk} of Chapter~\ref{ch:ch1}. The criticality of each attack-vulnerability pair is computed as $c_{ij} = P_{\text{exploit}}(v_j|a_i) \cdot \text{ImpactValue}(v_j)$ and is used for prioritization.

\begin{remark}
\label{princ:unknown_attack_defense_ch6}
\textbf{Empirical observation: limited effectiveness of defenses against unknown attacks.}
The effectiveness of defense and robustness enhancement methods developed and tested on a known set of attacks $\mathcal{A}_{\text{known}}$ can decrease when they face new, unforeseen attacks $\mathcal{A}_{\text{new}}$ under real conditions. The degree of this decrease depends on the ability of the defense mechanism to generalize to new types of attacks \cite{athalye2018obfuscated}. Even general defense principles have limits of generalization if an attack targets vulnerabilities that were not considered in the design of the defense mechanism.
Formally, if $R_{\mathcal{A}}(f)$ is a robustness or security metric with respect to the attacks from $\mathcal{A}$, then the empirically observed tendency is $R_{\mathcal{A}_{\text{new}}}(f) \leq R_{\mathcal{A}_{\text{known}}}(f)$. The decrease ($R_{\mathcal{A}_{\text{known}}}(f) - R_{\mathcal{A}_{\text{new}}}(f)$) depends on the ``distance'' between $\mathcal{A}_{\text{new}}$ and $\mathcal{A}_{\text{known}}$ and on the generalization ability of the defense.
\end{remark}
\textbf{Justification.}
Defense mechanisms are often optimized to counter specific patterns or characteristics of known attacks. New attacks can use other vectors or bypass specific assumptions made during the development of the defense. The generalization ability of a defense depends on the principles built into it. For example, a Lipschitz constraint generalizes better than filters based on signatures of known attacks.

\textbf{2. Selection of robustness enhancement methods.}
Based on the results of the risk analysis, the defense and robustness enhancement methods described in Chapter~\ref{ch:ch3} are selected. The selection takes the following factors into account:
\begin{itemize}
    \item The effectiveness of the method against the prioritized risks. The experimental data of Chapter~\ref{ch:ch4} confirm the effectiveness of committees of models against attacks on LLM-as-a-Judge.
    \item Compatibility with the architecture and size of the target model.
    \item The ratio of the computational requirements to the available resources. The requirements are expressed through the computational overhead: $\Delta\text{Cost}$ in Chapter~\ref{ch:ch4} and CO in Section~\ref{subsec:co_analysis_ch5_part5} of Chapter~\ref{ch:ch5}.
    \item The impact on model utility. The acceptable bounds are determined by the RUT metric from Chapter~\ref{ch:ch4}.
\end{itemize}
The experimental data of Chapter~\ref{ch:ch4} and the principle of multi-layer defense from Section~\ref{subsec:multilayer_defense_ch3_part3} of Chapter~\ref{ch:ch3} support the combination of methods. Formally, the selection reduces to an optimization problem: maximize the reduction of the total risk under a limited budget. In practice, the solution often requires expert judgment.

\textbf{3. Implementation and integration.}
The selected methods are implemented and integrated into the system. At this stage, where needed, the model architecture is modified according to Section~\ref{sec:arch_robustness_methods_ch3_part3} of Chapter~\ref{ch:ch3}, robust training or fine-tuning is performed according to Section~\ref{sec:training_robustness_methods_ch3_part3}, and data pre- and post-processing modules are built: input filters and sanitizers, and output verifiers according to Section~\ref{subsec:multilayer_defense_ch3_part3}. In parallel, the generation parameters are configured and monitoring systems are connected.

\textbf{4. Testing and validation.}
The deployed defense mechanisms are evaluated with the metrics from Chapter~\ref{ch:ch4}: robustness to known attacks by ASR, trojan detection quality by Recall and REASR, utility on legitimate queries by U-metrics, and the robustness-utility trade-off by RUT. In parallel, red teaming \cite{perez_red_2022, ganguli_red_2022} is carried out to find new vulnerabilities, including automated approaches validated in the Kaggle competitions and the SaTML CTF. The testing results guide the iterative refinement of the defense.

\textbf{5. Monitoring and updating.}
After deployment, data on queries, responses, and defense triggers are collected continuously. The threat profile is tracked through new publications and competition results. Periodic re-testing and red teaming reveal degradation of the defense. The rapid evolution of LLMs and attack methods makes this stage decisive for the long-term security of the system.

\begin{figure}[ht!]
\centering
\resizebox{0.95\linewidth}{!}{%
\begin{tikzpicture}[
    stage/.style={
        draw, rounded corners=8pt,
        text width=4.2cm,
        minimum height=1.4cm,
        align=center, font=\small,
        fill=blue!8,
        inner sep=5pt},
    substage/.style={font=\scriptsize, text width=6.5cm, align=left},
    arrow/.style={->, thick, >=stealth},
    feedback/.style={->, thick, >=stealth, dashed, color=blue!60},
    numlabel/.style={circle, fill=blue!70, text=white, font=\bfseries\small,
                     inner sep=2pt, minimum size=18pt}
]

\node[stage] (s1) at (0, 0) {
    \textbf{Requirements analysis}\\
    \textbf{and risk assessment}
};
\node[numlabel, left=0.3cm of s1.west] {1};

\node[stage, below=1.5cm of s1] (s2) {
    \textbf{Selection of robustness}\\
    \textbf{enhancement methods}
};
\node[numlabel, left=0.3cm of s2.west] {2};

\node[stage, below=1.5cm of s2] (s3) {
    \textbf{Implementation}\\
    \textbf{and integration}
};
\node[numlabel, left=0.3cm of s3.west] {3};

\node[stage, below=1.5cm of s3] (s4) {
    \textbf{Testing}\\
    \textbf{and validation}
};
\node[numlabel, left=0.3cm of s4.west] {4};

\node[stage, below=1.5cm of s4] (s5) {
    \textbf{Monitoring}\\
    \textbf{and updating}
};
\node[numlabel, left=0.3cm of s5.west] {5};

\node[substage, right=0.8cm of s1.east, anchor=west] (d1) {
    $\bullet$ NIST AI RMF, OWASP Top~10 for LLM\\
    $\bullet$ Threat identification, prioritization: $c_{ij} = P_{\text{exploit}} \cdot \text{ImpactValue}$
};

\node[substage, right=0.8cm of s2.east, anchor=west] (d2) {
    $\bullet$ Method effectiveness vs.\ prioritized risks\\
    $\bullet$ Compatibility with the architecture, RUT trade-off
};

\node[substage, right=0.8cm of s3.east, anchor=west] (d3) {
    $\bullet$ Architecture modification, robust training\\
    $\bullet$ Input/output filters, monitoring integration
};

\node[substage, right=0.8cm of s4.east, anchor=west] (d4) {
    $\bullet$ Metrics: ASR, Recall, REASR, CO; RUT evaluation\\
    $\bullet$ Red teaming, iterative refinement
};

\node[substage, right=0.8cm of s5.east, anchor=west] (d5) {
    $\bullet$ Incident collection and analysis\\
    $\bullet$ Tracking new threats, updating defenses
};

\draw[arrow] (s1.south) -- (s2.north);
\draw[arrow] (s2.south) -- (s3.north);
\draw[arrow] (s3.south) -- (s4.north);
\draw[arrow] (s4.south) -- (s5.north);

\draw[feedback] (s4.west) -- ++(-0.9,0) |- node[pos=0.25, left, font=\scriptsize, text=blue!60] {refinement} (s3.west);

\draw[feedback] (s5.west) -- ++(-1.6,0) |- node[pos=0.12, left, font=\scriptsize, text=blue!60] {new threats} (s1.west);

\end{tikzpicture}%
}
\caption{Life cycle of deploying LLM robustness enhancement methods.}
\label{fig:implementation_lifecycle}
\end{figure}

\begin{remark}
\label{princ:adaptation_necessity_ch6}
\textbf{Empirical observation: the need for defense adaptation.}
New types of attacks ($\mathcal{A}_{\text{new}}$) appear constantly, and the characteristics of existing attacks change. Defense mechanisms require continuous monitoring and updating. Static defenses become obsolete over time. The rate of defense adaptation must be comparable to the rate of threat evolution \cite{salehie_self-adaptive_2009}. This is especially relevant for LLMs: new attack vectors appear regularly.
\end{remark}

The methodology sets the order of deployment and maintains the balance between security, functionality, and computational efficiency. Fig.~\ref{fig:implementation_lifecycle} summarizes the five stages and their feedback loops. The cycle repeats throughout the entire operation of the system.

\subsection{Evaluation of Deployment Effectiveness}
\label{subsec:implementation_evaluation_ch6}

The evaluation of deployment effectiveness is not limited to technical security indicators. It also covers the impact on performance and costs.

\textbf{1. Technical effectiveness metrics.}
Robustness and security are evaluated through ASR, Recall, and REASR (Section~\ref{subsec:evaluation_criteria_ch4_part4} of Chapter~\ref{ch:ch4}). For each technical metric $C$, the difference from the base model is recorded as $\Delta C = C(f_{\text{def}}) - C(f_{\text{base}})$. Utility is monitored with BLEU, ROUGE, accuracy, and BERTScore on clean data: $\Delta U = U(f_{\text{def}}) - U(f_{\text{base}})$. Performance is characterized by the computational overhead $CO$ (Section~\ref{subsec:co_analysis_ch5_part5} of Chapter~\ref{ch:ch5}), together with memory usage and system throughput.

\textbf{2. Economic effectiveness metrics.}
The deployment cost $C_I$ covers development, integration, and testing. The operating costs $C_O$ consist of the additional computation caused by the growth of CO and the cost of monitoring. The risk reduction $\Delta \text{Risk}_{\text{val}} = \sum_i (\text{Risk}_i(f_{\text{base}}) - \text{Risk}_i(f_{\text{def}}))$ is estimated according to Definition~\ref{def:attack_risk} of Chapter~\ref{ch:ch1}. The return on investment is computed as $\text{ROI} = (\Delta \text{Risk}_{\text{val}} - (C_I + C_O)) / (C_I + C_O)$. Non-quantifiable benefits $B_S$ include increased user trust, reputational effects, and compliance with regulatory requirements.

\textbf{3. Integral effectiveness metrics.}
The robustness-utility trade-off coefficient RUT characterizes the ratio of the robustness improvement to the utility loss (Section~\ref{subsec:parameter_rut_analysis_ch4_part4} of Chapter~\ref{ch:ch4}). All RUT components are normalized to $[0,1]$ by the min-max method: $x_{\text{norm}} = (x - x_{\min})/(x_{\max} - x_{\min})$, where the bounds are determined on a validation sample. The cost effectiveness of the defense is defined as the ratio of the robustness gain $\Delta R_{stab}$ to the total costs $C_I + C_O$, and the total cost of ownership TCO includes all costs of development, deployment, and operation.

Laboratory testing is not sufficient to assess readiness for a production environment. Pilot projects and proof-of-concept prototypes record the real costs: increased response latency, the cost of API requests, and energy consumption. The pilots also check whether the base quality on the target tasks has degraded. Pilot project reports and an implementation certificate confirm the practical significance of the developed methods.

The evaluation must be repeated regularly, since the threat landscape changes, and without revision the defense mechanisms become obsolete.

\section{Examples of Practical Application}
\label{sec:practical_examples_ch6}

The practical application of the results of this work takes two forms. Completed deployments confirm that the methods work in practice. Typical reference scenarios are recommendations. They illustrate how the developed approaches transfer to other application domains.

The completed deployments are as follows. The vulnerability evaluation framework, the committee-of-models method, and the methodology of multi-step attacks are applied in the production environment of Viasat to secure a recommendation ranking module built on the LLM-as-a-Judge architecture. As part of this work at Viasat, the JudgeGuard suite was installed to test the recommendation ranking module with prompt injection methods, including the ASA algorithm developed by the author and CUA attacks. A committee of three models (GPT-4, Claude-3-Opus, Llama~3) uses majority voting to produce the final ranking verdicts. Load testing confirmed that the latency increase is acceptable: no more than 15\% compared with a single model. According to internal testing at Viasat, the ASR decreased by 42 percentage points. This result was obtained on the company's internal dataset, which differs from the experimental sample of Chapter~\ref{ch:ch4}. In addition, input filters based on a Python sanitizer were deployed to block injections in user reviews that enter the evaluation module. At the time of writing, the validation of the TrojanArmor suite and the MCPSec module is limited to the TDC2023 and MCPBench benchmarks, respectively. Unlike JudgeGuard, these tools were not deployed in production, since trojan detection and the protection of agentic MCP systems were not part of the risk profile of the production site considered.

The developed methods are applicable in different application domains. The typical scenarios and reference architectures below describe possible deployment schemes.

\subsection{Application in Natural Language Processing Systems}
\label{subsec:nlp_applications_ch6}

\textit{The examples in this and the following subsection are recommendations. They describe typical reference architectures, not deployed systems. The completed deployments are listed at the beginning of this section.}

Natural language processing systems differ in their threat profiles. We consider three types of such systems and the approaches to their security: chatbots and virtual assistants, content generation systems, and evaluation and recommendation systems. The last two types share a threat surface and are treated together.

\subsubsection{Chatbots and Virtual Assistants}
\label{subsubsec:chatbots_assistants_ch6}

As a reference scenario, we consider a customer support system of a financial organization. Users ask questions about products, rates and account transactions through a text interface. The threat profile includes prompt injections (LLM01 in OWASP, Threat~\ref{threat:prompt_injection}), bypassing restrictions through jailbreaking (Threat~\ref{threat:jailbreaking}), leakage of confidential data from the dialogue context (LLM06, Threat~\ref{threat:info_disclosure}), and targeted manipulation of behavior.

The central element of the defense is context isolation \cite{chen_struq_2024}, which prevents the model from interpreting user data as instructions. Before the input reaches the model, input filters, both lexical and LLM-based semantic ones, screen out dangerous patterns. The model level relies on safety alignment through RLHF/RLAIF \cite{ouyang_training_2022, bai_constitutional_2022}, described in Section~\ref{sec:training_robustness_methods_ch3_part3} of Chapter~\ref{ch:ch3}, and on explicit restrictions in the system prompt. PII is removed before it is passed to the LLM. The model is fine-tuned through RLHF so that it does not generate unverified financial advice. Dialogue logging makes it possible to detect new attack vectors.

\subsubsection{Content Generation, Evaluation and Recommendation}
\label{subsubsec:content_generation_ch6}

The outputs of generative LLM applications and of evaluation/recommendation systems directly affect the decisions of users or other systems, which creates a common threat surface. Malicious code (LLM02 in OWASP), disinformation, and ranking manipulation through adversarial queries form the main threats. In LLM-as-a-Judge systems, CUA and JMA attacks, studied in Chapter~\ref{ch:ch4}, distort the verdicts directly. Recommender systems are vulnerable to leakage of user preferences (LLM06 in OWASP) and to the amplification of filter bubbles.

Generated content is checked by external verifiers \cite{peng_check_2023, chern_factool_2023} and by static code analysis \cite{pearce_iris_2024}. Constrained decoding \cite{lu_neurologic_2022} narrows the set of admissible outputs. Robust training (Section~\ref{sec:training_robustness_methods_ch3_part3} of Chapter~\ref{ch:ch3}) reduces the sensitivity of the model to provocative queries. In ranking systems, robust embeddings \cite{zhang_robust_2024} together with anomaly detectors form the first layer. It is complemented by outlier-insensitive algorithms \cite{yun_ranking_2014}, fair ML techniques \cite{ferraro_fairroad_2022}, and differential privacy \cite{mcsherry_differentially_2009}. Against CUA and JMA attacks on LLM-as-a-Judge, committees of several models and a comparative evaluation format are effective.

Two reference scenarios show how the emphasis differs. A platform for grading student essays uses a committee of three LLMs: the essays pass a perplexity check and instruction filtering and are then evaluated in a comparative format against reference works. The recommender system of a news portal is organized differently: an ensemble of independent models is combined with filters for adversarial patterns at the input and with algorithms that ensure content diversity.

The choice of specific defense methods is determined by the specifics of the application: excessive protection reduces functionality, and insufficient protection creates vulnerabilities.

\subsection{Application in Critical Systems}
\label{subsec:critical_applications_ch6}

Safety-critical systems impose stricter reliability requirements. The vulnerabilities to prompt injection and hidden trojans described in the previous chapters make the use of LLMs in such systems risky without special measures.

\subsubsection{Medical Decision Support Systems}
\label{subsubsec:medical_decision_support}

Reference scenario: an LLM analyzes X-ray images and relies on a verified knowledge base of pathologies through RAG \cite{gao_retrieval-augmented_2024}. A second independent algorithm checks the conclusions before they are presented to the physician, and the results are accompanied by a confidence estimate and references to sources. The architecture counters the main risks of medical LLM applications: incorrect diagnoses, leakage of patient data (PHI), poisoning of training data \cite{alber_medical_2025}, and demographic bias.

The cost of an error in the medical domain is higher than in most other domains. The defense therefore relies on several principles at once. Where possible, certified robustness methods with mathematical guarantees are applied \cite{cohen_certified_2019}. The physician remains the final link in decision making (human-in-the-loop). The explainability of conclusions is achieved with explainable AI methods \cite{zhao_explainability_2023}. Bias auditing \cite{chen_addressing_2022} and data protection according to HIPAA requirements complement the technical measures.

\subsubsection{Financial Systems and Infrastructure Management}
\label{subsubsec:financial_infrastructure_systems}

The common principle for financial and infrastructure LLM applications is that the model is isolated from making autonomous decisions with irreversible consequences. The nature of the threats and the regulatory context differ, however.

In finance, LLM applications are exposed to the generation of erroneous forecasts with direct losses, to the manipulation of input data through news and social networks, and to leakage of insider information. Ensembling of heterogeneous models with diversified sources, anomaly detectors \cite{abdallah_financial_2022} and robust time series forecasting methods \cite{robust_time_series_2023} form the technical basis. Regulatory monitoring follows SOX, GDPR, and industry regulations. As a reference scenario, we consider an algorithmic trading system. Multi-layer filters screen out unreliable sources, trade decisions are made on the basis of a consensus of several models with confirmation by classical quantitative indicators, and the whole system runs in an isolated environment.

In the management of power grids, transport and industrial equipment, the nature of the threats is different: incorrect control commands and targeted attacks create physical safety risks. The LLM is limited to an advisory role and is architecturally separated from the direct control of critical components. The model's recommendations undergo formal verification \cite{formal_verification_llm_2023} and are approved by an operator. Network segmentation, multi-level redundancy, intrusion detection, and safe shutdown mechanisms for anomalies form additional lines of defense.

In all critical systems, technical measures are effective only in combination with organizational processes, periodic audits, and compliance with regulations.

\section{Prospects for the Development of Robustness Enhancement Methods}
\label{sec:future_perspectives_ch6}

The field of LLM security changes quickly: over 2023--2025, more than 400 papers on attacks on language models and more than 200 on defense mechanisms were published. New attack techniques appear more often than defense updates are released.

\subsection{Research Trends and Directions}
\label{subsec:research_trends_ch6}

\subsubsection{Theoretical Foundations and Formal Verification}
\label{subsubsec:theoretical_foundations_ch6}

Formal models do not keep pace with attack practice: the gap between experimental results and the theoretical understanding of LLM vulnerabilities persists. A complete formalization of robustness in high-dimensional discrete spaces remains an open problem. Mathematical models that adequately describe the space of tokens and semantic transformations have not yet been developed. Provable guarantees for most defense methods used with LLMs, including alignment techniques such as RLHF, do not yet exist. The quantitative analysis of the robustness-utility, privacy-utility, and fairness-robustness trade-offs is far from complete, and the results for the RUT metric from Chapter~\ref{ch:ch4} only confirm the significance of these trade-offs.

The formal verification of LLM properties \cite{formal_verification_llm_2023, huang_survey_2023} is a separate task. SMT solvers and abstract interpretation are so far applicable only to small networks. Progress in scaling abstract interpretation nevertheless suggests that the verification of individual properties will become feasible for models with billions of parameters as well.

\subsubsection{From Reactive to Adaptive Defense}
\label{subsubsec:adaptive_defense_ch6}

Existing defenses are built for known classes of attacks and remain reactive. The experience of the SaTML CTF and TDC2023 demonstrates that attackers systematically bypass fixed filters. Next-generation defenses must reconfigure themselves when they encounter zero-day attacks. Online training of defense modules on a stream of new attacks appears to be a promising direction. Self-healing architectures develop this idea by automatically closing detected vulnerabilities. Red teaming with adaptive attacks such as ASA already makes it possible to automate security testing. Game-theoretic modeling \cite{security_games_2023} provides a tool for optimizing defense strategies. All these approaches, however, place high demands on computational resources.

Adaptivity and interpretability are directly related. Without an understanding of the causes of a vulnerability, a defense cannot generalize to new attacks. The problem of unintended triggers in trojan attacks, that is, surrogate inputs that differ from the true trigger but still activate the backdoor (Section~\ref{subsec:trojan_detection_results_ch4_part4} of Chapter~\ref{ch:ch4}), illustrates the lack of interpretability. Methods that localize vulnerable components can identify the specific layers and attention heads responsible for backdoor behavior \cite{interpretable_security_2023}. Explainable defense mechanisms that report the reason for blocking a query and the visualization of attacks through JailbreakLens \cite{feng_jailbreaklens_2024} complement this approach.

\subsubsection{Multimodal Robustness and New Attack Surfaces}
\label{subsubsec:multimodal_robustness_ch6}

GPT-4V, Gemini~1.5, and LLaVA are already deployed in production systems and accept multimodal input \cite{wang_llms_2024, gu_agent_2024}. New modalities open new attack surfaces. Cross-modal influence makes it possible, for example, to change the text output of a model through a specially crafted image in a document. Defense mechanisms for multimodal systems do not yet form an established class. The development of robustness metrics that account for perturbations in each modality separately and in their combination remains an open problem.

Two related directions widen the threat landscape beyond multimodal robustness. Federated training of LLMs \cite{fed_learning_security_2023} creates the threat of poisoning through the aggregation of participant updates. A network participant can implant a backdoor into the global model, and protocols such as FedAvg are not robust to such influence \cite{robust_fl_2023}. Differential privacy does not remove this threat. The norm clipping and noise of participant-level differential privacy limit a backdoor only at a cost in accuracy \cite{bagdasaryan_backdoor_2020}. This cost falls unevenly on underrepresented subgroups \cite{bagdasaryan_differential_2020}. The noise also makes malicious updates harder to distinguish from benign ones. Quantum computers pose a threat of a different nature. Grover's algorithm reduces the cost of a brute-force search for a 256-bit HMAC-SHA256 key from $2^{256}$ to $2^{128}$ operations \cite{grover_fast_1996}. A 128-bit security level remains adequate for symmetric primitives, so the HMAC in AttestMCP from Chapter~\ref{ch:ch5} does not need replacement. Post-quantum adaptation concerns the public-key mechanisms that establish the session key $K_{session}$, if these rely on classical asymmetric cryptography. The question of whether quantum algorithms can speed up the search for adversarial examples for neural networks still has no theoretical estimates.

\subsection{Technology Development Forecasts}
\label{subsec:technology_forecasts_ch6}

Based on current trends, several probable directions of development can be identified.

The most evident one is the escalation of automated attacks and defenses. LLM agents are already used for automated red teaming \cite{xu_redagent_2024}. Adaptive attacks such as ASA from Chapter~\ref{ch:ch3} demonstrate how evolutionary search bypasses static filters. In response, defenses will use language models themselves for threat detection \cite{wang_selfdefend_2024}. Platforms such as Gandalf \cite{pfister_gandalf_2025} already implement LLM-based generation of attack scenarios. The results of the SaTML CTF confirm the intensification of this ``race''.

In parallel, robustness requirements will shift to the early stages of LLM development according to the ``security by design'' principle, and the after-the-fact addition of filters will give way to frameworks with security guarantees at the training stage. HELM \cite{liang_holistic_2023}, JailbreakBench \cite{chao_jailbreakbench_2024}, and MT-Bench form the basis for standardized comparative evaluation. The TDC2023 and Kaggle competitions accelerate this process. For critical applications, model certification by analogy with Common Criteria is likely. In this context, RLHF/RLAIF \cite{ouyang_training_2022, bai_constitutional_2022} and adversarial training from Chapter~\ref{ch:ch3} solve related tasks by different means, and their integration will yield models that are both aligned with human preferences and robust to manipulation.

Industry specialization and formal verification deserve separate attention. Universal methods will give way to specialized solutions: Mixture-of-Experts architectures will require protection of the expert router, financial \cite{lee_large_2025} and medical \cite{chen_addressing_2022} applications will dictate their own verification requirements, and the protection of LLM-as-a-Judge will become an independent research direction. In parallel, the need for mathematical security guarantees will grow \cite{huang_survey_2023}. Scaling formal methods to models with billions of parameters is not yet solved, but progress in the abstract interpretation of neural networks gives grounds for optimism.

The general trend is that security and robustness are gradually being built into the main LLM development cycle, and reactive measures are giving way to proactive ones.

\section*{Chapter Conclusions}
The transfer of the results into production revealed a limitation that is not visible on benchmarks: the effect of a defense depends on the data of the specific deployment. The deployment of a committee of three models at Viasat reduced the ASR by 42~pp on an internal dataset different from the experimental sample of Chapter~\ref{ch:ch4}, with a latency increase of no more than 15\%. This confirms the transferability of the method but shows that the numerical indicators of a defense do not transfer between domains without recalibration.

The assessment of production readiness requires metrics of three groups: technical, economic, and integral metrics such as RUT. The integral metric links heterogeneous criteria and makes it possible to recompute the optimal defense configuration when the threat landscape changes, whereas the technical and economic metrics taken separately do not allow such a recomputation.

The cost of a committee is linear in the number of models, but parallel querying of the members makes the actual latency increase sublinear. To reduce costs further, a cascade scheme is promising: a smaller committee, for example of three models, processes the query, and the full committee is activated only at a low level of agreement, which is detected by the confidence $c$ from Section~\ref{subsec:committee_pseudocode} of Chapter~\ref{ch:ch3}. A quantitative estimate of the savings remains an open problem.

\chapter*{Conclusion}
\addcontentsline{toc}{chapter}{Conclusion}

This thesis develops methods for evaluating and improving the robustness of large language models to input sequence variations. The main results are as follows.

\begin{enumerate}
    \item The analysis of the Trojan Detection Challenge 2023 data revealed an asymmetry between the generation and the detection of trojan triggers. Surrogate triggers activate the trojan almost always: REASR $\approx 0.99$. The original trigger is not reconstructed: the Recall metric, a BLEU-based similarity between predicted and true triggers, did not exceed $\approx 0.17$ even for the best method, GCG. This gap argues for prioritizing preventive measures: auditing the training data and controlling the supply chain. The result relates to Objectives 1--2.

    \item The generative robustness metric \(R_{stab}(f)\) accounts for the discreteness of the token space by capturing the per-step divergence of output distributions under small input perturbations. For localized attacks, the bound $V(h) \le 1 - R_{class}(h)$ is proved. For non-localized attacks, a parametric family $V(h) \approx g(1 - R_{class}(h))$ with parameters calibrated on data is proposed as an empirical hypothesis. The result relates to Objective 2.

    \item The vulnerability of LLM-as-a-Judge systems to prompt injections is studied within the developed framework, which divides attacks into content attacks and instructive attacks. The adaptive generation algorithm ASA achieves an ASR of up to 73.8\% on the Gemma, Llama, GPT-4, and Claude-3-Opus models. The vulnerability depends on the architectural family and the alignment procedure, and not only on the model scale. The attacks transfer well between open architectures. The result relates to Objective 3.

    \item Based on SaTML CTF 2024 data, multi-step attacks on LLMs with a three-layer defense are systematized. Four bypass vectors are classified: code-oriented attacks, ASCII encoding, word splitting, and contextual misdirection. The first three are neutralized by input normalization, and the fourth requires contextual isolation of system instructions. As a defense mechanism, committees of 5--7 heterogeneous models are developed. For the most vulnerable model, Gemma-3-4B, the ASR reduction was 47--55 percentage points. The theoretical basis is the Condorcet jury theorem: if the committee members err independently and each has a base accuracy $r > 0.5$, the accuracy of the majority verdict grows with the size of the committee. This result concerns committee accuracy and does not by itself guarantee robustness in the sense of $R_{class}$. For Gemma-3-4B with $r \approx 0.26$, the condition of the theorem does not hold, and the ASR reduction is achieved empirically through the heterogeneity of the committee composition, which decorrelates the errors of models of different architectures. The robustness-utility trade-off metric RUT for a committee of 7 models is $\approx 5.5$. The result relates to Objective 4.

    \item Architectural vulnerabilities of MCP are identified that enable attacks of the \textit{prompt injection to tool hijacking} class. To counter them, two algorithms are proposed. AttestMCP restricts tool calls to a session permission table and checks packet integrity with HMAC. Commit Boundary isolates the execution of the agent's code. On the MCPBench benchmark of 847 scenarios, the average ASR decreased by 41.3~percentage points, from 53.7\% to 12.4\%. The result relates to Objective 5.

    \item The JudgeGuard and TrojanArmor software suites are implemented. They cover the full cycle of LLM security testing, from the generation of attacks to the computation of robustness metrics. The MCPSec module complements both suites with tools for the analysis of agentic LLM systems. The result relates to Objective 5.
\end{enumerate}

The practical significance of the results is confirmed by the deployment of the JudgeGuard suite and the defense method based on model committees in the production environment of Viasat.

The proposed metrics, attack and defense algorithms, and software suites are applicable to existing and future LLM systems. Directions for future work include the extension of the framework to multimodal models, the adaptation of the algorithms to real-time attack detection, and the construction of verifiable security guarantees.

\chapter*{Acknowledgements}
\addcontentsline{toc}{chapter}{Acknowledgements}

The author is deeply grateful to his doctoral advisor, Dmitry~Namiot, for posing the problem, for many years of support, and for the demanding discussion of the results at every stage of the work.

Special thanks go to the scientific consultant Evgeny~Ilyushin for his help in choosing the research direction and for his detailed review of the experimental part.

The author is particularly grateful to Ivan~Chizhov for his attention to the work and for his support at all stages of preparing the thesis.

The author thanks his parents, Gagik~Maloyan and Gayane~Tovmasyan, for believing in him and supporting the path he chose, and his sister, Mariam~Maloyan.

The author remembers his grandparents with warmth and gratitude. He is especially grateful to his grandfather, Hamlet~Tovmasyan, for the time and attention he gave to lessons with the author during the author's school years.

The author's warmest thanks go to his wife, Maria~Arutunyan. Her help and support at the final stage of the work made it possible to bring the thesis to its defense.

The author's interest in mathematics was shaped by his school teachers Galina~Shpilevskaya, Elena~Panfilova, and Igor~Panfilov. The author is grateful to them for their rigor and attention, which determined his choice of profession.

The author is grateful to his colleagues at Viasat for their assistance in deploying the defense mechanisms in the production environment and for their feedback during the pilot operation.

\chapter*{List of Abbreviations and Notation}
\addcontentsline{toc}{chapter}{List of Abbreviations and Notation}
\noindent
\begin{longtable}{@{}p{3.4cm}p{12.1cm}@{}}
\multicolumn{2}{l}{\textbf{Abbreviations}} \\
AGM & Attack Generator Module \\
AI & Artificial Intelligence \\
AI RMF & NIST AI Risk Management Framework \\
API & Application Programming Interface \\
AR & Attack Robustness \\
ASA & Adaptive Search-Based Attack \\
ASCII & American Standard Code for Information Interchange \\
ASGI & Asynchronous Server Gateway Interface \\
ASR & Attack Success Rate \\
AST & Abstract Syntax Tree \\
AT & Adversarial Training \\
BI & Basic Injection \\
BLEU & Bilingual Evaluation Understudy \\
BPE & Byte Pair Encoding \\
CM & Contextual Misdirection \\
CO & Computational Overhead \\
CPU & Central Processing Unit \\
CTF & Capture The Flag \\
CUA & Comparative Undermining Attack \\
CVSS & Common Vulnerability Scoring System \\
CWB & Complex Word Bombardment \\
DAG & Directed Acyclic Graph \\
DoS & Denial of Service \\
DP & Differential Privacy \\
EE & Evaluation Engine \\
FFN & Feed-Forward Network \\
GA & Genetic Algorithm \\
GBDA & Gradient-Based Distributional Attack \\
GCG & Greedy Coordinate Gradient \\
GDPR & General Data Protection Regulation (EU) \\
GPU & Graphics Processing Unit \\
HF & HuggingFace (Transformers pipeline) \\
HIPAA & Health Insurance Portability and Accountability Act (USA) \\
HITL & Human-In-The-Loop \\
HMAC & Hash-based Message Authentication Code \\
HTTP & Hypertext Transfer Protocol \\
JMA & Justification Manipulation Attack \\
JSD & Jensen--Shannon Divergence, denoted \(\JS\) \\
JSON & JavaScript Object Notation \\
KL & Kullback--Leibler (divergence) \\
KV & Key--Value (attention cache) \\
LLM & Large Language Model \\
MCP & Model Context Protocol \\
MCPSec & Model Context Protocol Security \\
ML & Machine Learning \\
MPS & NVIDIA Multi-Process Service \\
MRAG & Model Router / API Gateway \\
NIST & National Institute of Standards and Technology (USA) \\
NLP & Natural Language Processing \\
OWASP & Open Web Application Security Project \\
PEZ & Hard Prompts Made Easy (gradient-based prompt discovery) \\
PGD & Projected Gradient Descent \\
PHI & Protected Health Information \\
PII & Personally Identifiable Information \\
pp & Percentage points \\
RAG & Retrieval-Augmented Generation \\
REASR & Reverse-Engineered Attack Success Rate \\
REST & Representational State Transfer \\
RLAIF & Reinforcement Learning from AI Feedback \\
RLHF & Reinforcement Learning from Human Feedback \\
ROI & Return on Investment \\
ROUGE & Recall-Oriented Understudy for Gisting Evaluation \\
RPC & Remote Procedure Call \\
RUT & Robustness-Utility Trade-off \\
SCC & Strongly Connected Component \\
SGD & Stochastic Gradient Descent \\
SL & Storage Layer \\
SMT & Satisfiability Modulo Theories \\
SOX & Sarbanes--Oxley Act (USA) \\
SQL & Structured Query Language \\
SSE & Server-Sent Events \\
TCO & Total Cost of Ownership \\
TDC2023 & Trojan Detection Challenge 2023 \\
TSR & Transfer Success Rate \\
UAT & Universal Adversarial Triggers \\
UML & Unified Modeling Language \\
VAT & Virtual Adversarial Training \\
XSS & Cross-Site Scripting \\
YAML & YAML Ain't Markup Language \\
\\
\multicolumn{2}{l}{\textbf{NIST SP 800-53 control families (Chapter 1)}} \\
AC & Access Control \\
AT & Awareness and Training \\
AU & Audit and Accountability \\
CM & Configuration Management \\
CP & Contingency Planning \\
IA & Identification and Authentication \\
PL & Planning \\
PM & Program Management \\
PT & PII Processing and Transparency \\
SC & System and Communications Protection \\
SI & System and Information Integrity \\
SR & Supply Chain Risk Management \\
\\
\multicolumn{2}{l}{\textbf{Notation}} \\
\(\Vspace\) & Finite token vocabulary \\
\(\Xspace = \Vspace^*\) & Space of all finite token sequences (input texts) \\
\(\Yspace\) & Space of possible model outputs (output texts) \\
\(\ProbSpace{\Yspace}\) & Set of probability distributions on the space \(\Yspace\) \\
\(f\), \(f_\theta\) & Large language model with parameters \(\theta\) \\
\(\theta\) & Model parameters \\
\(x \in \Xspace\) & Input token sequence \\
\(x' \in \Xspace\) & Perturbed input token sequence \\
\(y \in \Yspace\) & Output token sequence \\
\(p(y|x)\) & Probability distribution over outputs \(y\) for a given input \(x\) \\
\(|x|\) & Length of the sequence \(x\) \\
\(d(x, x')\) & Metric (distance) between the sequences \(x\) and \(x'\) \\
\(d_H\) & Hamming distance \\
\(d_L\) & Levenshtein distance \\
\(d_S\) & Semantic distance \\
\(\Bepsilon{x}\) & \(\epsilon\)-neighborhood of the sequence \(x\) \\
\(\epsilon\) & Perturbation budget \\
\(R_{stab}(f)\) & Robustness metric of the generative model \(f\) (the subscript comes from ``stability'') \\
\(R_{class}(h)\) & Robustness metric of the deterministic decision (classification) operator $h = \mathrm{Dec} \circ f$ \\
\(D(p_1, p_2)\) & General divergence measure between the distributions \(p_1\) and \(p_2\), for example \(\JS\). Asymmetric measures use an explicit notation, for example \(D_{KL}(P \parallel Q)\) \\
\(D_{KL}(P \| Q)\) & Kullback--Leibler divergence between the distributions \(P\) and \(Q\) \\
\(\JS(P,Q)\) & Jensen--Shannon divergence between the distributions \(P\) and \(Q\) \\
\(D_T(x, x')\) & Divergence of the output distributions for the inputs \(x\) and \(x'\), averaged over \(T\) generation steps (Chapter 2) \\
\(C_{max}\) & Normalization constant for the divergence measure \(D\) \\
\(\Vulfunc_{\text{local}}(f)\) & Local vulnerability metric of the model \(f\) \\
\(V(f)\) & Global vulnerability metric of the model \(f\) \\
\(\badY\) & Set of undesirable or prohibited outputs \\
\(\DataDist\), \(P_{\mathcal{X}}\) & Distribution of input data \\
\(\mathbb{E}[\cdot]\) & Expectation \\
\(\mathbb{I}\{\cdot\}\) & Indicator function \\
\(\sup\) & Supremum (least upper bound) \\
\(\LossFunc\) & Loss function \\
\(\mathcal{A}^{\mathrm{type}}\) & Set of attack types (Chapter 1); \(\mathfrak{A}\) is the set of attack transformations (Chapter 2) \\
\(\mathcal{V}\) & Set of vulnerabilities \\
\(\mathcal{I}_{AV}\) & Relation ``attack exploits vulnerability'' \\
\(\mathcal{C}_{VV}\) & Dependency relation between vulnerabilities \\
\(\text{Risk}(a, f)\) & Risk of attack \(a\) on model \(f\) \\
\(\text{Impact}(y)\) & Binary impact indicator \(\mathbb{I}\{y \in \badY\}\) of an undesirable output (Chapter 2) \\
\(\text{ImpactValue}(v)\) & Quantitative estimate of the damage from exploiting the vulnerability \(v\) (Chapter 1) \\
\(P_{\text{exploit}}(v | a)\) & Probability of successful exploitation of the vulnerability \(v\) by an attack of type \(a\) \\
\(\text{AdvRobustness}(f)\) & Adversarial robustness of the model \(f\) \\
\end{longtable}
\addtocounter{table}{-1}

\chapter*{Glossary}
\addcontentsline{toc}{chapter}{Glossary}

\textbf{Large language model (LLM)} A class of machine learning models based on deep neural networks trained on large volumes of text data for understanding, generating, and processing natural language.

\textbf{Token} The smallest unit of text that a language model operates on, for example a word, a part of a word, or a character. The process of converting text into a sequence of tokens is called tokenization.

\textbf{Embedding} A vector representation of a token, phrase, or document in a high-dimensional space that reflects its semantic properties.

\textbf{Robustness} The ability of a model, or of a compositional system based on it, to preserve correct behavior and acceptable quality characteristics under small perturbations of input data, adversarial modifications of queries, and other influences considered in the adopted threat model. It is the main term of this work. When the work refers to the literature on adversarial robustness, the term is used in the same sense. The robustness metric $R_{stab}(f)$ quantifies robustness through the divergence of output distributions. Its subscript comes from ``stability'' and does not imply a separate notion of stability.

\textbf{Vulnerability} A property of a system or model that an attacker can exploit to disrupt its normal operation, compromise data, or bypass defense mechanisms.

\textbf{Attack} A targeted influence on a system or model intended to cause undesirable behavior, gain unauthorized access, extract information, or disrupt the operation of the target.

\textbf{Adversarial attack} A type of attack that adds small perturbations, often imperceptible to a human, to the input data and thereby causes erroneous or undesirable behavior of the model.

\textbf{Adversarial example} Input data modified as a result of an adversarial attack.

\textbf{Jailbreaking} A type of attack aimed at bypassing the safety mechanisms built into an LLM (safety alignment) in order to make the model generate prohibited, harmful, or unethical content.

\textbf{Prompt injection} A type of attack in which hidden or disguised instructions are injected into a legitimate user query, or into external data that the model processes, such as documents or tool outputs (indirect prompt injection), and override the original system instructions of the model or its safety mechanisms.

\textbf{Data poisoning} An attack in which the adversary injects a small number of specially crafted (``poisoned'') examples into the training dataset in order to disrupt the training process, degrade the quality of the model, create trojans, or amplify bias.

\textbf{Trojan (backdoor)} Hidden functionality embedded in a model, typically through data poisoning. The trojan is activated by a special trigger in the input data and causes behavior predefined by the attacker. In this work, the main term is ``trojan'', and ``backdoor'' is used as a synonym.

\textbf{Alignment} The process of training or fine-tuning an LLM in order to bring its behavior in line with human values, ethical norms, and instructions, and to prevent the generation of harmful, biased, or undesirable content.

\textbf{Differential privacy} A formal definition of privacy that guarantees that the probability distribution of the result of a randomized data analysis changes very little if any single record is removed from or changed in the dataset. It is used to protect the confidentiality of individual data in training datasets.

\textbf{Levenshtein distance} A metric that measures the minimal number of single-symbol edit operations (insertion, deletion, substitution) required to transform one string of characters or tokens into another.

\textbf{KL divergence (Kullback--Leibler divergence)} An asymmetric measure of divergence between two probability distributions. This work mainly uses its symmetric counterpart, the Jensen--Shannon divergence $D_{\mathrm{JS}}$, which is built from two KL terms.

\textbf{Committee of models (ensemble of models)} An approach to improving the reliability and accuracy of predictions in which several independently trained models are used and their outputs are combined, for example through voting, averaging, or a weighted combination, to obtain the final result. In this work, a committee of heterogeneous LLMs aggregates the verdicts of judge models by majority voting, which reduces the success of attacks on LLM-as-a-Judge. Low agreement among the members flags a potentially malicious query for manual review.

\textbf{LLM-as-a-Judge} An architecture or methodology of automatic evaluation in which a large language model acts as an evaluator and issues a discrete or ranked verdict on the quality, safety, instruction compliance, or preferability of the responses of another model or of several alternative responses.

\textbf{Multi-layer defense (defense in depth)} A security strategy that applies several independent layers of defense mechanisms. In the context of LLMs, it includes a combination of different methods: input filtering, monitoring of model behavior, post-processing of output data, model alignment, and other techniques that work together to minimize risks.

\bibliographystyle{unsrtnat}
\bibliography{references}

\end{document}